\documentclass[11pt,a4paper]{article}
\usepackage[T1]{fontenc}
\usepackage[utf8]{inputenc}
\usepackage{lmodern}
\usepackage{microtype}
\usepackage[
  a4paper,
  top=27mm,
  bottom=27mm,
  left=27mm,
  right=27mm
]{geometry}
\usepackage{amsmath,amssymb,amsthm,mathtools}
\usepackage{mathrsfs}
\usepackage{bm}
\usepackage{booktabs}
\usepackage{tabularx}
\usepackage{enumitem}
\usepackage{xcolor}
\usepackage{tikz}
\usepackage{float}
\usepackage{aliascnt}
\usepackage{authblk}
\usepackage[
  colorlinks=true,
  linkcolor=blue!55!black,
  citecolor=blue!55!black,
  urlcolor=blue!55!black
]{hyperref}

\hypersetup{
  pdftitle={Almost One Bit Violation of Minimum-Output Renyi Entropy Additivity Simultaneously at All Orders},
  pdfauthor={Guocheng Zhen, Chengkai Zhu, Ranyiliu Chen, and Xin Wang},
  pdfsubject={Commuting free factors, exact Bell quotients, and a common near-one-bit Renyi gap},
  pdfkeywords={minimum output Renyi entropy, additivity violation, free groups,
    Haagerup inequality, Bell quotient, strong convergence, random orthogonal matrices}
}

\usepackage[nameinlink,noabbrev]{cleveref}
\setlist{nosep}
\allowdisplaybreaks[1]
\theoremstyle{plain}
\newtheorem{theorem}{Theorem}[section]
\newaliascnt{proposition}{theorem}
\newtheorem{proposition}[proposition]{Proposition}
\aliascntresetthe{proposition}
\newaliascnt{lemma}{theorem}
\newtheorem{lemma}[lemma]{Lemma}
\aliascntresetthe{lemma}
\newaliascnt{corollary}{theorem}
\newtheorem{corollary}[corollary]{Corollary}
\aliascntresetthe{corollary}

\theoremstyle{definition}
\newaliascnt{definition}{theorem}
\newtheorem{definition}[definition]{Definition}
\aliascntresetthe{definition}
\newaliascnt{question}{theorem}

\aliascntresetthe{question}

\theoremstyle{remark}
\newaliascnt{remark}{theorem}

\aliascntresetthe{remark}

\crefname{theorem}{Theorem}{Theorems}
\Crefname{theorem}{Theorem}{Theorems}
\crefname{proposition}{Proposition}{Propositions}
\Crefname{proposition}{Proposition}{Propositions}
\crefname{lemma}{Lemma}{Lemmas}
\Crefname{lemma}{Lemma}{Lemmas}
\crefname{corollary}{Corollary}{Corollaries}
\Crefname{corollary}{Corollary}{Corollaries}
\crefname{definition}{Definition}{Definitions}
\Crefname{definition}{Definition}{Definitions}
\crefname{question}{Question}{Questions}
\Crefname{question}{Question}{Questions}
\crefname{remark}{Remark}{Remarks}
\Crefname{remark}{Remark}{Remarks}

\newcommand{\C}{\mathbb C}
\newcommand{\R}{\mathbb R}
\newcommand{\N}{\mathbb N}
\newcommand{\Z}{\mathbb Z}
\newcommand{\F}{\mathbb F}

\newcommand{\id}{\mathrm{id}}
\newcommand{\Tr}{\operatorname{Tr}}

\newcommand{\supp}{\operatorname{supp}}
\newcommand{\rank}{\operatorname{rank}}
\newcommand{\diag}{\operatorname{diag}}
\newcommand{\vecop}{\operatorname{vec}}

\newcommand{\Dens}{\mathcal D}
\newcommand{\B}{\mathcal B}
\newcommand{\U}{\mathcal U}
\newcommand{\cE}{\mathcal E}
\newcommand{\cQ}{\mathcal Q}
\newcommand{\cS}{\mathcal S}

\newcommand{\sG}{\mathsf G}

\newcommand{\Hminp}{H_{\min,p}}
\newcommand{\HS}{\mathrm{HS}}

\newcommand{\Ran}{\operatorname{Ran}}
\newcommand{\ket}[1]{\lvert #1\rangle}
\newcommand{\bra}[1]{\langle #1\rvert}
\newcommand{\proj}[1]{\lvert #1\rangle\!\langle #1\rvert}

\newcolumntype{Y}{>{\raggedright\arraybackslash}X}

\title{\textbf{Almost One Bit Violation of Minimum-Output R\'enyi
Entropy Additivity Simultaneously at All Orders}}

\author[1]{Guocheng Zhen\thanks{zhenguocheng0814@gmail.com}}
\author[2]{Chengkai Zhu\thanks{zhuchengkai7@gmail.com}}
\author[3]{Ranyiliu Chen\thanks{chenranyiliu@quantumsc.cn}}
\author[1]{Xin Wang\thanks{felixxinwang@hkust-gz.edu.cn}}
\affil[1]{\small Thrust of Artificial Intelligence, Information Hub,\par The Hong Kong University of Science and Technology (Guangzhou), Guangdong 511453, China}
\affil[2]{\small QudeLeap Research, Shanghai 200030, China}
\affil[3]{\small Quantum Science Center of Guangdong-Hong Kong-Macao Greater Bay Area, Shenzhen 518045, China}
\date{}

\begin{document}

\maketitle

\begin{abstract}
We prove that minimum-output R\'enyi-entropy additivity can fail by almost one bit simultaneously at every nonnegative order. For every $\varepsilon\in(0,\log2)$, there exists a finite-dimensional quantum channel with a real Stinespring isometry such that the same maximally entangled input witnesses a tensor-square entropy gap of at least $\log2-\varepsilon$ for all $p\in[0,\infty]$. The output dimension can be chosen to be $O(\varepsilon^{-3})$ as $\varepsilon\downarrow0$.

The construction uses direct products of free groups: tensorized Haagerup estimates control the one-copy outputs, while commutation between distinct factors forces exact Bell-branch collisions at two copies. Strong convergence gives both an existential realization through finite-dimensional representations of right-angled Artin groups followed by realification, and a Haar-orthogonal model whose success probability tends to one as the matrix dimension grows. We also determine the exact Bell quotient, prove asymptotically sharp regular-radius bounds, and show that the cubic output-dimension scale is optimal within the present purity--rank certificate.
\end{abstract}

\tableofcontents

\section{Introduction}\label{sec:introduction}

For finite-dimensional quantum channels $\Phi$ and $\Psi$, product inputs
always give
\[
  \Hminp(\Phi\otimes\Psi)
  \le
  \Hminp(\Phi)+\Hminp(\Psi).
\]
The minimum-output-entropy additivity problem asks whether equality must hold for every pair of channels.  At the von Neumann point
$p=1$, this question is particularly important because of its connection with classical communication over quantum channels.  The Holevo--Schumacher--Westmoreland theorem gives the regularized Holevo formula for the unassisted classical capacity, while Shor proved that,
as universal statements over finite-dimensional quantum channels, additivity of minimum output von Neumann entropy is equivalent to additivity of the Holevo quantity and to the corresponding additivity
and strong-superadditivity statements for entanglement of formation
\cite{SchumacherWestmoreland1997,Holevo1998,Shor2004,ShorErratum2004}.

Counterexamples were first obtained away from the von Neumann point.
Werner and Holevo gave an explicit violation for sufficiently large
R\'enyi orders \cite{WernerHolevo2002}, and Hayden and Winter
subsequently proved nonadditivity for every $p>1$ by random
constructions \cite{HaydenWinter2008}.  Hastings then settled the
von Neumann case $p=1$ by a random finite-dimensional construction
\cite{Hastings2009}.  Subsequent work clarified the probabilistic,
geometric, and random-matrix mechanisms behind these examples
\cite{FukudaKingMoser2010,BrandaoHorodecki2010,
AubrunSzarekWerner2011,BelinschiCollinsNechita2012,
BelinschiCollinsNechita2016,CollinsFukudaZhong2015}.  In particular,
Belinschi, Collins, and Nechita showed that the Bell-state violation for
conjugate random channels can be made arbitrarily close to one bit at
$p=1$ \cite[Theorem~6.3]{BelinschiCollinsNechita2016}.  Constructive
counterexamples were later obtained for every $p>2$
\cite{GrudkaHorodeckiPankowski2010}, and more recently for every
$p>1$ \cite{DerksenLovitz2026}.

Collins gave an especially economical operator-algebraic proof of
von Neumann-entropy nonadditivity for mixed-unitary channels: strong
asymptotic freeness replaces large Haar matrices by free Haar
unitaries, while Haagerup's inequality controls the resulting
length-two free-group polynomials
\cite{CollinsMale2014,Collins2018}.  Particularly relevant to the
present work, Collins and Youn subsequently developed a
Haagerup inequality for products of free groups and used commuting
families associated with such products to prove nonadditivity of a
regularized minimum output entropy in an infinite-dimensional
commuting-operator setting \cite{CollinsYoun2022}; see also
\cite{KalantarShobeiri2025} for further developments in that setting.

For orders below one, Cubitt, Harrow, Leung, Montanaro, and Winter proved nonadditivity at $p=0$, and hence for sufficiently small positive
orders \cite{CubittEtAl2008}.  More recently, Leung, Lovitz, and Wu established finite-dimensional counterexamples for $0 \le p < \frac{1}{4}$ and $p > \frac{3}{4}$ using Haar-random projection-induced channels \cite{LeungLovitzWu2026}.  Their analysis employs two correlated two-copy witnesses: a transpose-complement rank defect at low orders and a product--conjugate Bell output at higher orders.

In this work, we introduce a channel model based on the direct product $\mathbb F_d^{\,r}$ of free groups and prove an almost-one-bit violation simultaneously at all R\'enyi orders. More precisely, for every $\varepsilon\in(0,\log 2)$ there exist positive integers $m,K$ and a channel $\Phi:M_m(\mathbb C)\to M_K(\mathbb C)$ with a real Stinespring isometry such that
\[
  2H_{\min,p}(\Phi)
  -
  H_p\!\left(
    \Phi^{\otimes 2}
    \bigl(|\Omega_m\rangle\langle\Omega_m|\bigr)
  \right)
  \ge \log 2-\varepsilon,
  \qquad p\in[0,\infty].
\]
The witnessing input is the maximally entangled state on two copies of the input space, and neither the channel nor the witness depends on the R\'enyi order $p$. The mechanism is built into the direct-product group structure. Tensorized Haagerup estimates within the free factors bound the quadratic branch radius, which in turn controls both the one-copy purity and the largest output eigenvalue. Exact commutation between distinct factors creates a large family of exact Bell collisions, producing a Bell quotient whose rank ratio can be made arbitrarily close to one half. For $0\le p\le2$, the one-copy purity estimate together with this rank defect gives the required entropy comparison. At higher orders, we retain the largest-eigenvalue control together with the full Bell fibre-weight distribution and interpolate their moments. These complementary estimates yield, for the same channel and the same Bell input, a gap arbitrarily close to one bit uniformly over all $p\in[0,\infty]$.

To realize this mechanism in finite dimensions, we use strong convergence in two complementary ways. Finite-dimensional representations of right-angled Artin groups, followed by realification, give a non-random existence result, while independent Haar orthogonal families acting on separate tensor factors give a random realization whose success probability tends to one. Moreover, we obtain several sharp quantitative estimates that clarify the asymptotic strength of the construction and the sharpness of the underlying estimates.

\section{Main results}\label{sec:main-results}

We first introduce the direct-product channel model and state the
simultaneous almost-one-bit theorem, followed by a finite-output example
and a non-random existence corollary. Throughout,
logarithms are natural, so one bit is $\log2$. 

\subsection{The direct-product channel model}

Fix integers $r,d\ge2$ and let
\begin{equation}\label{eq:group-branches}
  \begin{aligned}
    \sG_{r,d}&:=\prod_{j=1}^r\F_d^{(j)},
      &K&:=2rd,\\
    \cS_j&:=\{g_{j,i}^{\pm1}:1\le i\le d\}\quad(1\le j\le r),
      &\cS&:=\bigsqcup_{j=1}^r\cS_j.
  \end{aligned}
\end{equation}
The $j$th family freely generates $\F_d^{(j)}$, and generators in different families commute. The branch set is inverse-closed and the number $r$ counts
factors within one channel.

For a real orthogonal representation $\pi:\sG_{r,d}\to O(m)$, write
$W_s:=\pi(s)$ and define the isometry
\begin{equation}\label{eq:stinespring}
  V_\pi:\C^m\longrightarrow\C^m\otimes\C^{\cS},\qquad
  V_\pi x:=\frac1{\sqrt K}\sum_{s\in\cS}W_sx\otimes\ket s.
\end{equation}
Its complementary channels are
\begin{align}
  \cE_\pi(X)&:=\frac1K\sum_{s\in\cS}W_sXW_s^*,
       \label{eq:mixed-channel}\\
  \Phi_\pi(X)&:=\frac1K
       \bigl[\Tr(W_sXW_t^*)\bigr]_{s,t\in\cS}.
       \label{eq:complementary-channel}
\end{align}
The first is a uniform mixed-orthogonal channel, and the second has
output dimension $K$. Since the matrices $W_s$ are real, entrywise
channel conjugation fixes $\Phi_\pi$:
\[
  \overline\Phi(X):=\overline{\Phi(\overline X)},\qquad
  \overline{\Phi_\pi}=\Phi_\pi.
\]
Thus the usual product--conjugate Bell input is a genuine self-tensor witness.

For each $N$, let $(O_{j,i}^{(N)})_{1\le j\le r,\,1\le i\le d}$ be
mutually independent Haar-distributed matrices in $O(N)$. Set
\begin{equation}\label{eq:random-representation}
  \pi_N(g_{j,i}):=
  I_N^{\otimes(j-1)}\otimes O_{j,i}^{(N)}
                   \otimes I_N^{\otimes(r-j)}.
\end{equation}
This is an exact representation of $\sG_{r,d}$ at every finite $N$.
We write
\[
  \Phi_N:=\Phi_{\pi_N}:M_{N^r}(\C)\longrightarrow M_K(\C).
\]
No relation between the random choices at different values of $N$ is required.

For a channel $\Phi:M_m(\C)\to M_K(\C)$, use the Bell vector
\begin{equation}\label{eq:bell-vector}
  \ket{\Omega_m}:=\frac1{\sqrt m}\sum_{a=1}^m\ket a\otimes\ket a
\end{equation}
and define
\begin{equation}\label{eq:gap-definition}
  \begin{aligned}
    Z_\Phi&:=\Phi^{\otimes2}(\proj{\Omega_m}),\\
    \Delta_p^\Omega(\Phi)&:=2H_{\min,p}(\Phi)-H_p(Z_\Phi).
  \end{aligned}
\end{equation}
This Bell-witness gap is a lower bound for the minimum-output entropy
defect:
\begin{equation}\label{eq:witness-to-additivity-gap}
  2H_{\min,p}(\Phi)-H_{\min,p}(\Phi^{\otimes2})
  \ge\Delta_p^\Omega(\Phi).
\end{equation}
For the branch set in \eqref{eq:group-branches}, put
\begin{equation}\label{eq:MrK-definition}
  M_{r,K}:=\frac{r+1}{2r}K^2-K+1.
\end{equation}
We will show that this is the number of distinct relative operations
$s^{-1}t$ and, with probability tending to one, the exact Bell-output
rank.

\subsection{Simultaneous almost-one-bit nonadditivity}

\begin{theorem}[A common almost-one-bit gap]\label{thm:main}
For every $0<\varepsilon<\log2$, there exist integers $r,d\ge2$ such that,
with $K=2rd$, the Haar-orthogonal model satisfies
\begin{equation}\label{eq:main-probability}
  \Pr\!\left[
    \log2-\varepsilon
       \le\inf_{p\in[0,\infty]}\Delta_p^\Omega(\Phi_N)
       \le\log\frac{K^2}{M_{r,K}}
  \right]\longrightarrow1
  \qquad(N\to\infty).
\end{equation}
Here $\log(K^2/M_{r,K})<\log2$, and the parameters can be chosen so that
\[
  K=O(\varepsilon^{-3})\qquad(\varepsilon\downarrow0).
\]
In particular, one finite-dimensional channel with a real Stinespring
isometry and one Bell input give a gap at least $\log2-\varepsilon$
simultaneously at every R\'enyi order.
\end{theorem}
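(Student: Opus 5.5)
We first compute the Bell output and then bound each side of the gap separately. Since $W_s$ is real orthogonal, $(A\otimes I)\ket{\Omega_m}=(I\otimes A^T)\ket{\Omega_m}$ gives
\[
Z_\Phi=\frac1{K^2}\Bigl[\tfrac1m\Tr(W_s^*W_t)\,\overline{\tfrac1m\Tr(W_{s'}^*W_{t'})}\Bigr]\text{-type entries}.
\]
More precisely, $Z_\Phi$ is the Gram matrix of the $K^2$ vectors
\[
v_{(s,s')}=K^{-1}(W_s\otimes W_{s'})\ket{\Omega_m}=K^{-1}(W_{s'}^{T}W_s\otimes I)\ket{\Omega_m}.
\]
Thus $v_{(s,s')}$ depends only on the group element $s'^{-1}s$; here $s'^{T}=s'^{-1}$ because $\pi$ is orthogonal. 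The pairs $(s,s')$ are therefore grouped into fibres over the set $R=\{s^{-1}t\}\subset\sG_{r,d}$.

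We count $R$. The identity is hit $K$ times. A pair in the same factor $j$ with $s\ne t$ gives a reduced word of length two, hit exactly once. A pair in distinct factors $j\ne k$ gives $ab=ba$, hit exactly twice. So
\[
|R| = 1+K(2d-1)\,r/r\cdots = M_{r,K},
\]
and the formula in \eqref{eq:MrK-definition} follows by direct counting: $K$ identity pairs, $r\cdot 2d(2d-1)$ same-factor pairs, and $K^2-K-r\cdot 2d(2d-1)$ cross pairs collapsing in pairs.

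The upper bound comes from the Bell output. Strong convergence for independent Haar orthogonal matrices on tensor factors (i.e.\ convergence to the regular representation of $\sG_{r,d}$, which is a product of free groups) gives $\frac1m\Tr\pi_N(g)\to\delta_{g,e}$. Hence $Z_{\Phi_N}$ converges to the diagonal fibre-weight matrix with eigenvalues $|{\rm fibre}|/K^2$. Its exact rank is $M_{r,K}$ with high probability, since the vectors $(\pi_N(g)\otimes I)\Omega$ for distinct $g\in R$ are asymptotically orthonormal. For the upper bound on the gap, take $p=0$: $H_0(Z)=\log M_{r,K}$ and $H_{\min,0}(\Phi)\le\log K$, which gives the right inequality.

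For the lower bound, the one-copy side uses the branch radius. For any input $\rho$, $\Phi(\rho)$ has entries $\frac1K\Tr(W_s\rho W_t^*)$. Its purity and top eigenvalue are controlled by $\|\frac1{K}\sum_{s,t}c_{st}\pi(s^{-1}t)\|$ for unit coefficient matrices $c$. The tensorized Haagerup inequality for $\F_d^r$ (length at most two within a factor, products across factors), transferred to finite $N$ by strong convergence in operator norm, should give
\[
\lambda_{\max}(\Phi_N(\rho))\le \frac1K+O\Bigl(\frac{\sqrt{d}}{K}\Bigr)\cdot\text{small},\qquad \Tr\Phi_N(\rho)^2\le\frac1K(1+\eta),
\]
with $\eta\to0$ as $d\to\infty$ for fixed $r$. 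I expect the main obstacle to be obtaining the exact quantitative rate of $\eta$ in terms of $r,d$ for both quantities.

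The entropy comparison then splits by order. For $p\in[0,2]$, monotonicity in $p$ gives $H_{\min,p}\ge H_{\min,2}\ge\log K-\log(1+\eta)$. Meanwhile $H_p(Z)\le H_0(Z)=\log M_{r,K}\approx\log(K^2(r+1)/(2r))$, so the gap is at least $\log\frac{2r}{r+1}-2\log(1+\eta)$. For $p>2$, we use $H_{\min,p}\ge H_\infty\ge\log K-\log(1+\eta')$. We compute $H_p(Z)$ exactly from the limiting fibre distribution: weight $K/K^2$ once, weight $2/K^2$ on about $K^2(r-1)/(2r)$ elements, and weight $1/K^2$ on about $K^2/r$ elements. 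We then bound it via the Rényi moment interpolation between the $\lambda_{\max}$ and purity controls. The resulting estimate is uniform in $p$ and is at least $2\log K-\log 2-o(1)$, the worst case being $p\to\infty$, where the atom $1/K$ gives $H_\infty(Z)=\log K$. This atom still leaves a gap of $\log K$; the binding case is actually the bulk of weight-$2$ fibres.

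To finish, choose $r\sim\varepsilon^{-1}$ so that $\log\frac{2r}{r+1}\ge\log2-\varepsilon/2$. Then choose $d$ so that the Haagerup errors satisfy $\eta\lesssim\sqrt{r}/\sqrt d\cdot$const $\le\varepsilon/4$. This suggests $d\sim\varepsilon^{-2}$ and $K=2rd=O(\varepsilon^{-3})$, and the probability statement follows from strong convergence. Pinning down that the Haagerup error scales like $r/d$-type quantities, so that the cubic scale emerges, is the step I would expect to require the most care.
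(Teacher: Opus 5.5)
There is a genuine gap in your treatment of $p>2$. Your Bell-side analysis is consistent with the paper: the fibre count giving $M_{r,K}$, the exact rank from $R_{\pi_N}\to I$, the $p=0$ ceiling, and the purity--rank comparison on $[0,2]$. For $p>2$, however, you use $H_{\min,p}(\Phi_N)\ge H_{\min,\infty}(\Phi_N)\ge\log K-\log(1+\eta')$ and need $\eta'$ small, i.e.\ every one-copy output must have top eigenvalue $(1+o(1))/K$. This fails for every choice of $r,d$:
\begin{itemize}
\item Let $x$ be a unit top eigenvector of the self-adjoint matrix $\sum_{s\in\cS}W_s$. Testing $\Phi_N(\proj{x})=G_x/K$ against $K^{-1/2}(1,\dots,1)$ gives $\lambda_{\max}\ge K^{-2}\bigl\|\sum_sW_s\bigr\|^2$.
\item Strong convergence gives, in probability, $\bigl\|\sum_sW_s\bigr\|\to\bigl\|\sum_{s\in\cS}\lambda_{r,d}(s)\bigr\|=2r\sqrt{2d-1}$, a sum of $r$ commuting Kesten operators on separate tensor factors.
\item So $K\lambda_{\max}$ reaches about $4r(1-\frac1{2d})\ge6$, i.e.\ $1+\eta'\gtrsim4r$. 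The paper's bound is $b=1+C$ with $C=4r$.
\item As $p\downarrow2$ one has $H_p(q_{r,K})\to2\log K-\log(3-\frac1r-\frac1K)$. Your route therefore certifies only about $\log3-2\log(4r)<0$ just above order two.
\end{itemize}
The missing idea is to keep the purity control above order two. Use the moment interpolation $\Tr\sigma^{2+u}\le\|\sigma\|^u\Tr\sigma^2\le K^{-1-u}ab^u$ and match it against the exact Bell moment $\sum_\gamma q_{r,K,\gamma}^{2+u}=K^{-2-2u}(K^u+\beta_r2^u+\frac1r-\frac1K)$. The weighted AM--GM bound $K^u+\beta_r2^u\ge e^{\alpha_r}(K^{3/4}2^{1/4})^u$ then writes the gap as a convex combination of $\alpha_r-2\log a$ and $\frac34\log K+\frac14\log2-2\log b$. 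This forces $r\ge7$ (so that $e^{4\alpha_r}=256\beta_r/27>16$) and $K\gtrsim b^{8/3}$; the paper takes $K\ge2(1+C)^3$. That is what makes the lower bound uniform over all $p$ while keeping $K=O(\varepsilon^{-3})$.

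The one-copy estimate you leave as ``should give'' is the analytic core and must be supplied. The paper uses the exact identity $\max_X\Tr\Phi(X)^2=\frac1K+\Gamma_K(W)^2/K^2$ and proves the $d$-independent bound $\Gamma_{r,d}\le\sqrt{16r^2-7r}$. It splits $\sum_{s,t}a_{s,t}\lambda_{r,d}(s^{-1}t)$ into two kinds of sector and then applies Cauchy--Schwarz:
\begin{itemize}
\item Same-factor sectors: the labelling is injective, the multidegree is $2e_j$, and the Haagerup constant is $3$.
\item Mixed sectors: the fibres are $\{(s,t),(t^{-1},s^{-1})\}$, so collecting coefficients costs a factor $\sqrt2$; the multidegree is $e_j+e_\ell$ and the constant is $4$.
\end{itemize}
The transfer to $\Phi_N$ also needs strong convergence uniformly over the compact set of admissible $A$, which is a finite-net argument. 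The result is $\eta=C^2/K\approx8r/d$, hence $d\sim\varepsilon^{-2}$ and $K\sim\varepsilon^{-3}$. Your tentative rate $\eta\sim\sqrt{r/d}$ would instead force $K\sim\varepsilon^{-4}$.

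Finally, for bounds at fixed finite $N$ that are uniform in $p$, use the exact majorization $H_p(Z_{\Phi_\pi})\le H_p(q_{r,K})$. This follows from Schur--Horn applied to the Bell quotient $B_\pi=D_q^{1/2}R_\pi D_q^{1/2}$ and holds for every representation. The limiting spectrum of $Z_{\Phi_N}$ controls each fixed $p$ but needs extra work to be made uniform.
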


The parameters $r,d$ and hence $K$ are chosen first, depending on $\varepsilon$ but not on $p$. They remain fixed while $N$ tends to infinity. A sufficiently large finite realization is then chosen once for all orders. The lower bound in \eqref{eq:main-probability} is a common nonadditivity witness; the upper bound concerns this same Bell witness, not the potentially larger minimum-output defect in \eqref{eq:witness-to-additivity-gap}.

\subsection{A finite-output example and a non-random consequence}

The following parameter choice gives a concrete uniform all-order violation. It is included only as an illustrative example, and no optimality is claimed for either the output dimension or the resulting gap.

\begin{corollary}[An all-order gap at output dimension $320$]
\label{cor:numerical}
For $r=2$, $d=80$, and $K=320$, let
\begin{equation}\label{eq:numerical-gap-constant}
  \gamma_{320}:=
  \log\frac{320^4}{76481\,(370+1/100)^2}
  >1.4\times10^{-3}.
\end{equation}
Then the Haar-orthogonal model satisfies
\begin{equation}\label{eq:numerical-gap}
  \Pr\!\left[
    \inf_{p\in[0,\infty]}\Delta_p^\Omega(\Phi_N)\ge\gamma_{320}
  \right]\longrightarrow1.
\end{equation}
\end{corollary}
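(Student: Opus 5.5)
The plan is to specialize the quantitative estimates behind \cref{thm:main} to $r=2$, $d=80$, $K=320$, rather than using only its asymptotic form. First I would check the rank input. By the Bell-quotient analysis, $Z_{\Phi_N}$ has rank at most the number of distinct relative operations $s^{-1}t$, which is $M_{2,320}=\tfrac34\cdot 320^2-320+1=76481$; equality holds with probability tending to one. This accounts for the factor $76481$ in \eqref{eq:numerical-gap-constant}. It also explains the form of $\gamma_{320}$: it has the shape $2\log(\cdot)-\log M_{r,K}$, namely a doubled one-copy entropy lower bound minus a log-rank upper bound for the Bell output.

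Next I would get the one-copy bound from the free-product side. The one-copy output purity and the largest output eigenvalue of $\Phi_\pi$ are both controlled by the norm of the quadratic branch operator $\sum_{s\ne t}W_s^*W_t\otimes(\cdots)$. In the limiting object $\sG_{2,80}$ this norm is bounded by the tensorized Haagerup estimate for length-two words in each free factor, together with the commuting cross terms between the two factors. I expect this to give a limiting constant at most $370$, so that the one-copy bound reads $e^{-\Hminp(\Phi)}\le 370/K^2\cdot(\text{normalization})$; the constant should be read off from the earlier regular-radius proposition with $r=2$, $d=80$. Strong convergence of independent Haar orthogonal matrices acting on separate tensor factors to the free-product-of-commuting-factors limit then shows that, with probability tending to one, the finite-$N$ norm is at most $370+1/100$. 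The slack $1/100$ absorbs the $o(1)$ error.

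Third, I would combine the two bounds for all orders at once, following the same two-regime argument as \cref{thm:main}. For $p\in[0,2]$, monotonicity gives $\Hminp\ge H_{\min,2}$, and the purity bound gives $2H_{\min,2}(\Phi_N)\ge 2\log\bigl(320^2/(370+1/100)\bigr)$ after normalization. On the other side, $H_p(Z)\le H_0(Z)=\log 76481$. For $p>2$, I would use the largest-eigenvalue bound together with the Bell fibre-weight distribution, interpolating moments as in the main proof. This should reduce to the same or a better numerical expression, so the infimum over $p$ is still at least $\gamma_{320}$.

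Finally, a direct numerical check shows $320^4/(76481\cdot 370.01^2)>e^{1.4\times10^{-3}}$, which is a short verification with rational bounds on the logarithm. The main obstacle is the second step: extracting the explicit constant $370$ from the tensorized Haagerup bound for two commuting factors. This requires tracking exact counts of length-two words within and across factors instead of $O(\cdot)$ constants. I would first write the quadratic operator as a sum of within-factor and cross-factor pieces and bound each exactly. If that falls short of $370$, I would apply the sharp regular-radius bound directly.
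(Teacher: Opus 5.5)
Your overall route matches the paper's: apply the all-order criterion \Cref{thm:finite-criterion} with $r=2$, $d=80$, transfer the radius bound via \Cref{prop:random-transfer}, and recognize $\gamma_{320}$ as the low-order certificate $D_{2,320,C}$. However, the two quantitative steps that make up this corollary's proof are misattributed and skipped, respectively.

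\textbf{Where $370$ comes from.} The constant you call the main obstacle does not come from a finer Haagerup count. \Cref{prop:free-radius} already supplies all the free-group input, $\Gamma_{2,80}\le C_2=\sqrt{16\cdot4-14}=\sqrt{50}$. What your ``normalization'' factor hides is the exact purity identity of \Cref{lem:exact-radius}. Since $\Phi(\proj{x})=G_x/K$ and $G_x$ has unit diagonal,
\[
  K^2\Tr\Phi(\proj{x})^2=\|G_x\|_{\HS}^2=K+\|G_x-I_K\|_{\HS}^2\le K+\Gamma_K(W)^2 .
\]
So the diagonal contributes $K=320$, and only the off-diagonal part, at most $50$, comes from the radius bound. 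Take $C^2=50+1/100>\Gamma_{2,80}^2$; the strict inequality is what gives $\Pr[\Gamma_K(W_N)\le C]\to1$. Then $Ka=K+C^2=370+1/100$, so $a=37001/32000$ and $D_{2,320,C}=\gamma_{320}$. If, as you wrote, it were the norm of the off-diagonal quadratic operator that you bound by $370$, the resulting purity bound $1/K+370^2/K^2>1$ would be vacuous.

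\textbf{The range $p>2$.} Here you assert the result rather than check it, and it does not \emph{reduce to the same expression}. The certificate $\delta_{2,320,C}$ has two further entries, $\alpha_2-2\log a$ and $\tfrac34\log320+\tfrac14\log2-2\log b$ with $b=1+\sqrt{50.01}$. You must show both are at least $\gamma_{320}$. This is a genuine numerical condition on $K$ relative to $(1+C)^2$: with the same $C$ and $d=20$, the last entry is already smaller than $D_{2,80,C}$.

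The paper verifies it exactly:
\begin{itemize}
  \item $\beta_2=1$, $e^{4\alpha_2}=256/27>(7/4)^4$ and $a<7/6$ give $e^{\alpha_2}/a^2>9/7$;
  \item $2\cdot320^3>80^4$ and $b<81/10$ give $320^{3/4}2^{1/4}/b^2>8000/6561$;
  \item both ratios exceed $9/8$, while an exact integer comparison gives $e^{\gamma_{320}}<9/8$.
\end{itemize}
Hence $\delta_{2,320,C}=\gamma_{320}$, and \Cref{prop:random-transfer} finishes the proof. With these two points supplied, your argument becomes the paper's proof.
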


Finally, $\sG_{r,d}$ is the right-angled Artin group of the complete
$r$-partite graph with $d$ vertices in each part. The strong approximation
theorem of Magee and Thomas therefore gives an additional existence
consequence of the same entropy argument.

\begin{corollary}[Non-random existence via right-angled Artin groups]
\label{cor:nonrandom}
For every $0<\varepsilon<\log2$, the parameters $r,d,K$ in
\Cref{thm:main} also admit a sequence of real orthogonal representations
\[
  \pi_n:\sG_{r,d}\longrightarrow O(m_n)
\]
strongly converging to the regular representation such that, for every
sufficiently large $n$,
\begin{equation}\label{eq:nonrandom-gap}
  \log2-\varepsilon
  \le\inf_{p\in[0,\infty]}\Delta_p^\Omega(\Phi_{\pi_n})
  \le\log\frac{K^2}{M_{r,K}}<\log2.
\end{equation}
The output-dimension choice remains $K=O(\varepsilon^{-3})$.
At $r=2,d=80$, a strongly convergent real representation sequence also
satisfies the bound $\gamma_{320}$ for every sufficiently large term.
\end{corollary}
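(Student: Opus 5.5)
The plan is to run the entropy argument of \Cref{thm:main} again, replacing the Haar-orthogonal sequence by a deterministic sequence of real representations that strongly converges to the regular representation. That argument uses the random model in exactly two places. First, the one-copy bounds (purity and largest output eigenvalue of $\Phi_\pi$) are controlled by norms of noncommutative polynomials of degree two in the $W_s$. These norms are tensorized Haagerup estimates in $C^*_\lambda(\sG_{r,d})$, and they pass to $\pi_N$ through strong convergence. Second, the Bell output $Z_{\Phi_\pi}$ has its fibre structure and exact rank $M_{r,K}$. The collisions there are forced by the group relations, which hold exactly in every representation. Distinctness of the remaining $M_{r,K}$ relative operations $s^{-1}t$ is detected by traces $\frac1m\Tr\pi(g)$ for $g\neq e$. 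Strong convergence gives weak convergence, so these traces tend to $0$.

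\textbf{Step 1: the representations.} $\sG_{r,d}$ is the right-angled Artin group of the complete $r$-partite graph. By the Magee--Thomas strong approximation theorem, there are unitary representations $\rho_n:\sG_{r,d}\to U(k_n)$ converging strongly to $\lambda$. I realify them: view $\C^{k_n}\cong\R^{2k_n}$ and set $\pi_n:=\rho_n^{\R}$ into $O(2k_n)$. Over $\C$, $\pi_n\cong\rho_n\oplus\overline{\rho_n}$. The regular representation is real, so $\overline{\rho_n}$ also converges strongly to $\lambda$. For every $x\in\C[\sG_{r,d}]$,
\[
  \|\pi_n(x)\|=\max\bigl(\|\rho_n(x)\|,\|\overline{\rho_n}(x)\|\bigr)\to\|\lambda(x)\|,
\]
so $\pi_n$ converges strongly as well. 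Normalized traces are averages of the traces of $\rho_n$ and $\overline{\rho_n}$, so they also converge to $\tau_\lambda$.

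\textbf{Step 2: transfer the finite-$N$ estimates.} I will restate the deterministic core of the proof of \Cref{thm:main} as a lemma. The lemma says: if a real representation $\pi$ satisfies
\begin{itemize}
\item $\|\pi(x_\ell)\|\le\|\lambda(x_\ell)\|+\eta$ for the finitely many self-adjoint degree-two elements $x_\ell$ used in the branch-radius bound, and
\item $|\frac1m\Tr\pi(g)|\le\eta$ for the finitely many $g\in\{s^{-1}t\}\cdot\{s^{-1}t\}^{-1}\setminus\{e\}$ (the elements whose traces enter the Gram matrix of $Z_{\Phi_\pi}$),
\end{itemize}
then, for $\eta=\eta(r,d,\varepsilon)$ small enough, the lower and upper bounds in \eqref{eq:main-probability} hold for $\Phi_\pi$. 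In the random proof these hypotheses are exactly the events whose probability tends to one. By Step 1, every sufficiently large $n$ satisfies them deterministically. The same reasoning with $r=2$, $d=80$ and the slack built into the constant $370+1/100$ yields $\gamma_{320}$. $K$ does not depend on the representation, so $K=O(\varepsilon^{-3})$ carries over unchanged.

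\textbf{Main obstacle.} The difficulty is making sure the proof of \Cref{thm:main} uses no random-model-specific feature beyond these norm and trace limits. Examples would be the tensor-product form \eqref{eq:random-representation} or Weingarten-type computations of the fibre weights.

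The upper bound $\log(K^2/M_{r,K})$ needs the Bell rank to be exactly $M_{r,K}$. It also needs the fibre weights to converge to their limiting values, uniformly in $p$. Rank is only lower semicontinuous, so I would obtain it as follows. The collisions are exact, so the rank is at most $M_{r,K}$. Approximate orthogonality of the $M_{r,K}$ distinct fibre vectors, which follows from the trace limits, gives rank at least $M_{r,K}$ for large $n$. Uniformity over $p\in[0,\infty]$ then follows as in the random case. The one-copy and two-copy spectra both converge, and the entropy comparisons were proved with a strict margin.
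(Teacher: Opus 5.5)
\textbf{Overall.} Your plan is the paper's: Magee--Thomas for $\sG_{r,d}\cong A_{\Lambda_{r,d}}$, realification via $\rho_n\oplus\overline{\rho_n}$ (this is \Cref{lem:realification}), transfer of the quadratic radius so that $\Gamma_K\le C=4r$ eventually, the deterministic criterion \Cref{thm:finite-criterion} with the parameters \eqref{eq:one-bit-parameters}, exact Bell rank for the $p=0$ ceiling, and the $1/100$ slack for $\gamma_{320}$.

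\textbf{The rank step.} This is the step that is not justified as written. You invoke ``strong convergence gives weak convergence''. Under \Cref{def:strong-convergence}, however, strong convergence means convergence of norms only, and that implication is false in general. For $G=\Z$, let $\rho_n(1)$ be a unitary of size $n+n^2$ whose eigenvalues are the $n$th roots of unity together with $1$ repeated $n^2$ times. Then $\|\rho_n(z)\|\to\|\lambda_{\Z}(z)\|$ for every $z\in\C[\Z]$, while the normalized trace of $\rho_n(1)$ tends to $1$. For $\sG_{r,d}$ the trace limits do hold, by the unique trace on $C^*_{\mathrm r}(\sG_{r,d})$ or by Powers averaging with conjugators from a factor where $g$ has a nontrivial coordinate. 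That is an extra ingredient, though, which you neither prove nor cite.

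The paper needs no traces. For unit $c\in\C^{\mathcal T}$ one has $\|\sum_{g\in\mathcal T}c_g\lambda_{r,d}(g)\|\ge\|\sum_g c_g\delta_g\|_2=1$. A Lipschitz plus finite-net argument then upgrades pointwise norm convergence to $\inf_{\|c\|_2=1}\|\sum_g c_g\pi_n(g)\|\ge\frac12$ for large $n$ (\Cref{lem:uniform-transfer}(ii)). Hence $(\pi_n(g))_{g\in\mathcal T}$ is linearly independent, its normalized Gram matrix $R_{\pi_n}$ is positive definite, and $\rank Z_{\Phi_{\pi_n}}=M_{r,K}$ by \Cref{prop:bell-rank-defect}.

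\textbf{The radius step.} The same net argument is what your first hypothesis actually requires. $\Gamma_K$ is a supremum over the compact set of traceless Hermitian $A$ with $\|A\|_{\HS}=1$. The Haagerup bound of \Cref{prop:free-radius} holds only in $\lambda_{r,d}$, so there is no fixed finite list of degree-two elements ``used in the bound''. Instead, the $K$-Lipschitz estimate and a finite $\delta$-net (\Cref{lem:uniform-transfer}(i)) give $\Gamma_K((\pi_n(s))_{s\in\cS})\to\Gamma_{r,d}\le C_r<4r$.

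\textbf{Your last paragraph.} It adds requirements that are not there. The vector $q_{r,K}$ is an exact combinatorial vector, so no fibre weights need to converge. The upper bound uses only $p=0$: $\inf_p\Delta_p^\Omega\le\Delta_0^\Omega\le2\log K-\log M_{r,K}$. Uniformity of the lower bound in $p$ comes from the $p$-independent certificate $\delta_{r,K,C}$ of \Cref{thm:finite-criterion}, not from any convergence of one-copy output spectra.
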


Here ``non-random'' describes the existence of a representation sequence. No closed-form matrices or algorithm for producing that sequence are supplied.

\section{Preliminaries}\label{sec:preliminaries}

This section mainly records the operator-algebraic, quantum-information, and group-theoretic facts used later.  General references for operator algebras
and free probability are \cite{BrownOzawa2008,VoiculescuDykemaNica1992,NicaSpeicher2006, MingoSpeicher2017}.

\subsection{Operator and quantum-information conventions}

All Hilbert spaces are finite-dimensional unless explicitly stated
otherwise.  Throughout, Hilbert-space inner products are conjugate-linear
in the first argument and linear in the second.  For a Hilbert space
$\mathcal H$, let $\B(\mathcal H)$ denote the bounded operators on
$\mathcal H$.

The symbol $\Tr$ denotes the unnormalized matrix trace.  For a matrix
$X$, we use the operator norm
\[
  \|X\|
  :=
  \sup_{\|v\|_2=1}\|Xv\|_2
\]
and the Hilbert--Schmidt norm
\[
  \|X\|_{\HS}
  :=
  \bigl(\Tr(X^*X)\bigr)^{1/2}.
\]
In particular, if $X\ge0$, then
\[
  \|X\|=\lambda_{\max}(X).
\]

The state space of $\mathcal H$ is
\[
  \Dens(\mathcal H)
  :=
  \{\rho\in\B(\mathcal H):\rho\ge0,\ \Tr\rho=1\}.
\]
A quantum channel
\[
  \Phi:\B(\mathcal H_{\mathrm{in}})
  \longrightarrow
  \B(\mathcal H_{\mathrm{out}})
\]
is a completely positive trace-preserving linear map.

For $0<p<\infty$, $p\ne1$, the R\'enyi entropy is
\[
  H_p(\rho)
  :=
  \frac{1}{1-p}\log\Tr(\rho^p),
\]
with the continuous and endpoint conventions
\[
  \begin{aligned}
    H_0(\rho)&:=\log\rank\rho,\\
    H_1(\rho)&:=-\Tr(\rho\log\rho),\\
    H_\infty(\rho)&:=-\log\|\rho\|
                     =-\log\lambda_{\max}(\rho),
  \end{aligned}
\]
where $0\log0:=0$.  For a fixed state, these are the corresponding
limits as $p\downarrow0$, $p\to1$, and $p\to\infty$. The minimum output R\'enyi entropy of $\Phi$ is
\[
  \Hminp(\Phi)
  :=
  \min_{\rho\in\Dens(\mathcal H_{\mathrm{in}})}
  H_p\bigl(\Phi(\rho)\bigr).
\]
The minimum is attained for every $p\in[0,\infty]$: for
$p\in(0,\infty]$ the entropy is continuous on the finite-dimensional
state space, while at $p=0$ the rank, and hence $H_0$, is lower
semicontinuous.

For a finite probability vector $q=(q_1,\ldots,q_\ell)$, we use the
shorthand
\[
  H_p(q)
  :=
  H_p\!\left(\operatorname{diag}(q_1,\ldots,q_\ell)\right),
\]
so that $H_p(q)$ denotes the R\'enyi entropy of the corresponding diagonal density matrix. The normalized Bell vector is defined in \eqref{eq:bell-vector}.
With the column-vectorization convention
$\vecop(X)=\sum_jXe_j\otimes e_j$, one has
\begin{equation}\label{eq:vec-identity}
  (A\otimes B)\ket{\Omega_m}
  =
  \frac1{\sqrt m}\vecop(AB^T).
\end{equation}

\begin{lemma}[Elementary R\'enyi bounds]\label{lem:renyi-bounds}
Let $\rho$ be a density matrix of rank $q$.
\begin{enumerate}[label=(\roman*)]
  \item The function $p\mapsto H_p(\rho)$ is nonincreasing on
  $[0,\infty]$.
  \item For every $p\in[0,\infty]$,
  \[
    H_p(\rho)\le\log q.
  \]
  \item For every $0\le p\le2$,
  \[
    H_p(\rho)\ge H_2(\rho)=-\log\Tr(\rho^2).
  \]
\end{enumerate}
\end{lemma}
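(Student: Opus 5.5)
We work throughout with the eigenvalues $\lambda_1,\dots,\lambda_q>0$ of $\rho$, so $\sum_k\lambda_k=1$ and $\Tr(\rho^p)=\sum_k\lambda_k^p$ for $p>0$. All three parts are then statements about the finite probability vector $\lambda=(\lambda_1,\ldots,\lambda_q)$, in the notation $H_p(q)$ of the preceding subsection.

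For (i), we first treat $p\in(0,\infty)$, $p\neq1$. Write $H_p(\rho)=\frac{1}{1-p}\log\sum_k\lambda_k^p$ and set $\psi(p):=\log\sum_k\lambda_k^p$. Hölder's inequality shows that $\psi$ is convex on $(0,\infty)$, and $\psi(1)=0$. Hence $H_p=-\frac{\psi(p)-\psi(1)}{p-1}$ is minus the slope of a secant of a convex function through the fixed point $p=1$. Such slopes are nondecreasing in $p$, so $H_p$ is nonincreasing on $(0,1)\cup(1,\infty)$.

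The endpoints and the point $p=1$ are handled by the stated limit conventions:
\begin{itemize}
  \item $H_1=-\psi'(1)$ is the limit of the secant slopes as $p\to1$, so monotonicity extends across $p=1$;
  \item $H_0=\log q$ is the limit $\psi(0^+)=\log q$;
  \item $H_\infty=-\log\lambda_{\max}$ is the limit $\lim_{p\to\infty}\psi(p)/(1-p)$.
\end{itemize}
Since the paper asserts that these are the corresponding limits, monotonicity on the open range passes to the closed interval $[0,\infty]$.

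Part (ii) follows at once from (i): $H_p(\rho)\le H_0(\rho)=\log\rank\rho=\log q$. Part (iii) also follows from (i) because $p\le2$. The only thing to add is the identity $H_2(\rho)=-\log\Tr(\rho^2)$, which is the defining formula at $p=2$.

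The main obstacle is the convexity of $\psi$. Log-convexity of $p\mapsto\sum_k\lambda_k^p$ follows from Hölder's inequality in the form
\[
  \sum_k\lambda_k^{tp+(1-t)p'}\le\Bigl(\sum_k\lambda_k^p\Bigr)^{t}\Bigl(\sum_k\lambda_k^{p'}\Bigr)^{1-t},\qquad t\in[0,1].
\]
Alternatively one can compute $\psi''$ directly as the variance of $\log\lambda$ under the escort distribution proportional to $\lambda_k^p$, which is nonnegative.

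The endpoint limits are routine. One of them also follows directly: since $\lambda_{\max}^p\le\sum_k\lambda_k^p\le q\lambda_{\max}^p$, we get $\psi(p)/p\to\log\lambda_{\max}$ as $p\to\infty$.
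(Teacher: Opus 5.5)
Your proof is correct, and for (i) it is the paper's argument in another form; the genuine difference is in (ii).

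For (i), the paper writes $H_p(\rho)=-\log M_{p-1}$ with $M_s:=\bigl(\sum_k\lambda_k\lambda_k^{s}\bigr)^{1/s}$ and invokes monotonicity of weighted power means. Here $\log M_{p-1}=\psi(p)/(p-1)$ is exactly the secant slope of your convex function $\psi$ through $\psi(1)=0$. So the fact the paper quotes is precisely the statement you prove via H\"older's inequality (or the escort-variance formula for $\psi''$). Your treatment of $p=0,1,\infty$ by limits matches the paper's conventions.

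For (ii), the paper uses majorization: the uniform vector on $q$ points is majorized by every probability vector on $q$ points, and $H_p$ is Schur-concave. You instead read (ii) off from (i) as $H_p(\rho)\le H_0(\rho)=\log\rank\rho=\log q$.

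Your route is shorter, self-contained, and avoids checking Schur-concavity separately on both sides of $p=1$. The paper's route is independent of (i), and it rehearses the majorization/Schur-concavity principle used later in \Cref{prop:bell-rank-defect} to pass from $q\prec\lambda(B_\pi)$ to $H_p(Z_{\Phi_\pi})\le H_p(q)$.

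One optional simplification: if you sum only over the $q$ positive eigenvalues, $\psi$ is smooth and convex on all of $\mathbb R$ with $\psi(0)=\log q$. The secant argument then covers $p=0$ directly, with no limit needed.
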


\begin{proof}
Let $\lambda=(\lambda_1,\ldots,\lambda_q)$ be the positive eigenvalues
of $\rho$.  Part~(i) follows from monotonicity of the weighted power
means
\[
  \left(\sum_{i=1}^q\lambda_i\lambda_i^s\right)^{1/s},
  \qquad s=p-1,
\]
with the orders $p=0,1,\infty$ obtained by the corresponding limits.
The uniform distribution on $q$ points is majorized by every probability
vector on $q$ points, and R\'enyi entropy is Schur-concave; this gives
(ii).  Part~(iii) follows from (i).
\end{proof}

\begin{lemma}[Complementary reductions of a pure state]
\label{lem:complementary-spectrum}
Let $V:\mathcal H\to\mathcal K\otimes\mathcal L$ be an isometry, and let
$\cE$ and $\Phi$ be the channels obtained by tracing over $\mathcal L$
and $\mathcal K$, respectively.  For every unit vector $x\in\mathcal H$,
the states $\cE(\proj{x})$ and $\Phi(\proj{x})$ have the same nonzero
eigenvalues, including multiplicities.  The same statement applies to
$V^{\otimes2}$ and every pure two-copy input.
\end{lemma}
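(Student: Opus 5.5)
The plan is to use the Schmidt decomposition of the pure state $Vx$. Fix a unit vector $x\in\mathcal H$ and put $\psi:=Vx\in\mathcal K\otimes\mathcal L$. Since $V$ is an isometry, $\|\psi\|=\|x\|=1$. By definition of the two channels,
\[
  \cE(\proj{x})=\Tr_{\mathcal L}\proj{\psi},\qquad
  \Phi(\proj{x})=\Tr_{\mathcal K}\proj{\psi}.
\]
The statement therefore reduces to the standard fact that the two marginals of a pure bipartite state have the same nonzero spectrum, with multiplicities.

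To verify this, expand $\psi$ in orthonormal bases $(e_a)$ of $\mathcal K$ and $(f_b)$ of $\mathcal L$ as $\psi=\sum_{a,b}C_{ab}\,e_a\otimes f_b$, where $C$ is the coefficient matrix. A direct computation gives
\[
  \Tr_{\mathcal L}\proj{\psi}=CC^*,\qquad
  \Tr_{\mathcal K}\proj{\psi}=C^TC^{*T}=\overline{C^*C},
\]
both written in the chosen bases. The matrices $CC^*$ and $C^*C$ have the same nonzero eigenvalues with multiplicities, namely the squared singular values of $C$; this follows from the singular value decomposition. Entrywise complex conjugation does not change the eigenvalues of a Hermitian matrix, since they are real. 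Hence $\cE(\proj{x})$ and $\Phi(\proj{x})$ share their nonzero spectrum. Equivalently, one may write the Schmidt form $\psi=\sum_k\sqrt{\lambda_k}\,u_k\otimes v_k$; then both marginals are $\sum_k\lambda_k\proj{u_k}$ and $\sum_k\lambda_k\proj{v_k}$.

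For the two-copy statement, note that $V^{\otimes2}:\mathcal H^{\otimes2}\to(\mathcal K\otimes\mathcal L)^{\otimes2}$ is again an isometry. After the canonical reordering $(\mathcal K\otimes\mathcal L)^{\otimes2}\cong\mathcal K^{\otimes2}\otimes\mathcal L^{\otimes2}$, which is unitary, its two complementary channels are $\cE^{\otimes2}$ and $\Phi^{\otimes2}$: tracing out $\mathcal L^{\otimes2}$ factorizes on product operators, and both sides are linear. Applying the one-copy argument to this isometry and to any unit vector of $\mathcal H^{\otimes2}$, such as $\ket{\Omega_m}$, gives the claim.

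There is no real obstacle here. The only point needing care is the index bookkeeping when reordering the tensor factors in the two-copy case, so that the partial traces are correctly identified with $\cE^{\otimes2}$ and $\Phi^{\otimes2}$.
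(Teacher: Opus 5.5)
Your proposal is correct and follows the paper's proof: both reduce to the fact that the two marginals of the pure state $Vx$ share their nonzero spectrum via the Schmidt decomposition, and both handle two copies by applying the same argument to $(V\otimes V)y$ after the canonical reordering, which identifies the complementary channels as $\cE^{\otimes2}$ and $\Phi^{\otimes2}$. Your coefficient-matrix computation ($CC^*$ versus $\overline{C^*C}$) simply makes the Schmidt step explicit.
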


\begin{proof}
For a unit vector $x\in\mathcal H$, the vector
$Vx\in\mathcal K\otimes\mathcal L$ is pure, and
\[
  \cE(\proj{x})
  =
  \Tr_{\mathcal L}\proj{Vx},
  \qquad
  \Phi(\proj{x})
  =
  \Tr_{\mathcal K}\proj{Vx}.
\]
The two reduced states of a bipartite pure state have the same nonzero
eigenvalues, including multiplicities, by the Schmidt decomposition.

For a pure two-copy input, apply the same argument to
$(V\otimes V)y$ after the canonical reordering
\[
  \mathcal K\otimes\mathcal L\otimes\mathcal K\otimes\mathcal L
  \cong
  (\mathcal K\otimes\mathcal K)\otimes
  (\mathcal L\otimes\mathcal L).
\]
The corresponding reduced states are
$(\cE\otimes\cE)(\proj{y})$ and
$(\Phi\otimes\Phi)(\proj{y})$, respectively.
\end{proof}

\subsection{Free groups and the tensorized Haagerup inequality}

A free group $\F_d=\langle g_1,\ldots,g_d\rangle$ consists of reduced
words in the letters $g_i^{\pm1}$. Every nonidentity element has a unique
expression
\[
  g_{i_1}^{\varepsilon_1}\cdots g_{i_\ell}^{\varepsilon_\ell},
  \qquad \varepsilon_a\in\{+1,-1\},
\]
with no adjacent cancellation. Its reduced word length is $\ell$.
For the direct product $\sG_{r,d}$ in \eqref{eq:group-branches}, every
element has a unique coordinate representation
$g=(g_1,\ldots,g_r)$, with $g_j\in\F_d^{(j)}$. If $|g_j|$ denotes
reduced length in the $j$th factor, define the multidegree
\[
  \deg(g):=(|g_1|,\ldots,|g_r|),\qquad
  S_{\boldsymbol\ell}:=
  \{g\in\sG_{r,d}:\deg(g)=\boldsymbol\ell\}
  \quad(\boldsymbol\ell\in\Z_{\ge0}^r).
\]
Unlike a free product, this direct product permits cancellation only within a coordinate; letters in different coordinates commute.

For a discrete group $G$, its left regular representation is
\[
  \lambda_G:G\longrightarrow\U(\ell^2(G)),\qquad
  \lambda_G(g)\delta_h:=\delta_{gh},
\]
where $(\delta_h)_{h\in G}$ is the canonical orthonormal basis. The
reduced group $C^*$-algebra is
\[
  C_{\mathrm r}^*(G):=C^*(\lambda_G(G))\subseteq\B(\ell^2(G)).
\]
Its canonical trace is $\tau(T)=\langle\delta_e,T\delta_e\rangle$ and
satisfies $\tau(\lambda_G(g))=\mathbf1_{\{g=e\}}$. For a finitely
supported function $f:G\to\C$, write
\[
  \lambda_G(f):=\sum_{g\in G}f(g)\lambda_G(g),\qquad
  \|f\|_2:=\left(\sum_g|f(g)|^2\right)^{1/2}.
\]
Then $\tau(\lambda_G(f)^*\lambda_G(f))^{1/2}=\|f\|_2$.
For $G=\sG_{r,d}$ we abbreviate $\lambda_G$ by $\lambda_{r,d}$.
Under the natural identification
\[
  \ell^2(\sG_{r,d})\cong\bigotimes_{j=1}^r\ell^2(\F_d^{(j)}),
\]
this representation acts coordinatewise.

Haagerup's free-group inequality states that
$\|\lambda_{\F_d}(f)\|\le(\ell+1)\|f\|_2$ when $f$ is supported on
reduced words of length $\ell$ \cite[Lemma~1.4]{Haagerup1979}.
For products of free groups, Collins and Youn established the
corresponding multidegree estimate
\cite[Proposition~3.2]{CollinsYoun2022}. The scalar form used here also
follows from \cite[Lemma~9.4]{BordenaveCollins2024}.

\begin{lemma}[Tensorized Haagerup inequality]\label{lem:tensor-haagerup}
If $f:\sG_{r,d}\to\C$ is finitely supported and
$\supp f\subseteq S_{\boldsymbol\ell}$, then
\begin{equation}\label{eq:tensor-haagerup}
  \|\lambda_{r,d}(f)\|
  \le\prod_{j=1}^r(\ell_j+1)\|f\|_2.
\end{equation}
For $1\le j\le r$, let $e_j\in\mathbb N_0^r$ denote the $j$th standard basis vector.  In particular, the Haagerup constants for the multidegrees $2e_j$ and $e_j+e_a$ $(j\ne a)$ are $3$ and $4$, respectively.
\end{lemma}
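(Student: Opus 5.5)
We prove \eqref{eq:tensor-haagerup} by induction on the number $r$ of free factors, using the identification $\ell^2(\sG_{r,d})\cong\ell^2(\sG_{r-1,d})\otimes\ell^2(\F_d^{(r)})$, under which $\lambda_{r,d}(g',g_r)=\lambda_{r-1,d}(g')\otimes\lambda_{\F_d}(g_r)$. The base case $r=1$ is Haagerup's inequality \cite[Lemma~1.4]{Haagerup1979}. Throughout we use the standard fact that the operator norm of an operator-valued convolution is at most the Haagerup constant times the $\ell^2$ norm of its coefficients. This is Haagerup's lemma with coefficients in $\B(\mathcal H)$, where $\|f\|_2$ is replaced by $\|\sum_w f(w)^*f(w)\|^{1/2}$ together with the row version $\|\sum_w f(w)f(w)^*\|^{1/2}$. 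Haagerup's original argument decomposes $\lambda(f)$ into pieces mapping words of length $k$ to words of length $k+\ell-2i$ for $0\le i\le\ell$, and bounds each piece via Cauchy--Schwarz. It therefore goes through verbatim for operator coefficients, provided one keeps track of both column and row norms.

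The cleaner route, which I would commit to, is to avoid operator coefficients and run Haagerup's decomposition simultaneously in all coordinates. Write $f=\sum_g f(g)\lambda(g)$ with each $g$ of multidegree $\boldsymbol\ell$. Let $\chi_{\boldsymbol k}$ denote the projection onto $\delta_h$ with $\deg h=\boldsymbol k$. For $\xi$ supported on $S_{\boldsymbol k}$, the product $g\cdot h$ has $j$th coordinate of length $k_j+\ell_j-2i_j$, where $i_j\in\{0,\dots,\ell_j\}$ is the amount of cancellation in factor $j$. Cancellation in different coordinates is independent because the coordinates commute and multiply separately. Hence
\[
  \lambda(f)=\sum_{\boldsymbol i\le\boldsymbol\ell}T_{\boldsymbol i},
\]
where $T_{\boldsymbol i}$ collects the contributions with cancellation vector $\boldsymbol i$. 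There are $\prod_j(\ell_j+1)$ such terms, so it suffices to show $\|T_{\boldsymbol i}\|\le\|f\|_2$ for each $\boldsymbol i$.

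For a fixed $\boldsymbol i$, the coordinatewise version of Haagerup's factorisation applies. Each $g$ factors uniquely as $g=ab$, where in each coordinate $a_j$ is the uncancelled prefix of length $\ell_j-i_j$ and $b_j$ is the cancelled suffix of length $i_j$. The output $\delta_{gh}$ decomposes as $\delta_{a\,h'}$ with $h=b^{-1}h'$. Then
\[
  \langle\delta_{ah'},T_{\boldsymbol i}\xi\rangle=\sum_b f(ab)\,\xi(b^{-1}h'),
\]
and Cauchy--Schwarz over $b$, followed by summation over $a$ and $h'$, gives $\|T_{\boldsymbol i}\xi\|_2\le\|f\|_2\|\xi\|_2$ on each $S_{\boldsymbol k}$. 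The ranges for distinct $\boldsymbol k$ are orthogonal, since $\boldsymbol k$ is recovered from the output degree and $\boldsymbol i$. The domain pieces are also orthogonal, so the bound extends to all $\xi$.

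The main obstacle is this last orthogonality and uniqueness bookkeeping. One must check that, with $\boldsymbol i$ fixed, the map $(a,b,h')\mapsto(ah',b^{-1}h')$ is injective and that reducedness conditions hold coordinatewise. This is exactly Haagerup's one-factor check applied independently in each coordinate, using that a coordinate-$j$ word interacts only with coordinate $j$ of $h$.

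The special cases follow by direct substitution. Multidegree $2e_j$ gives $(2+1)\cdot1^{r-1}=3$, and multidegree $e_j+e_a$ with $j\ne a$ gives $(1+1)(1+1)=4$.
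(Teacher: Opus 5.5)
Your committed argument (the second route) is correct. It does not follow the paper, which gives no proof of this lemma: it imports it from Collins--Youn (Proposition~3.2) and notes that the scalar form also follows from Bordenave--Collins (Lemma~9.4). You instead rerun Haagerup's original one-factor proof with multidegrees.

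\begin{itemize}
\item Because the factors commute and reduced words in a free group cancel letter by letter with no merging, a product $gh$ with $g\in S_{\boldsymbol\ell}$ and $h\in S_{\boldsymbol k}$ has degree $\boldsymbol k+\boldsymbol\ell-2\boldsymbol i$. Here the cancellation count $0\le i_j\le\min(\ell_j,k_j)$ is independent in each coordinate.
\item For fixed $\boldsymbol i$, the piece $T_{\boldsymbol i}$ maps the blocks $\ell^2(S_{\boldsymbol k})$ into pairwise orthogonal blocks $\ell^2(S_{\boldsymbol k+\boldsymbol\ell-2\boldsymbol i})$.
\item The coordinatewise factorisation $x=ah'$, $g=ab$, $g^{-1}x=b^{-1}h'$, followed by Cauchy--Schwarz over $b$, gives $\|T_{\boldsymbol i}\|\le\|f\|_2$.
\end{itemize}

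The constant is therefore the number $\prod_j(\ell_j+1)$ of cancellation patterns. Your route is self-contained, shows exactly where commutation enters (it makes the pattern count multiplicative), and needs no operator-valued Haagerup inequality. The citation route buys brevity and access to the more general statements in those references.

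Two things to tidy.

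First, drop the opening paragraph. Your committed route is not an induction on $r$, and the ``standard fact'' stated there is false for $\ell\ge2$. The operator-valued Haagerup inequality needs, for each $0\le i\le\ell$, the norm of the block matrix $[F(ab)]_{|a|=\ell-i,\,|b|=i}$, not just row and column norms. For example, with $F(g_ig_j)=E_{ij}\in M_n$ ($n\le d$), both row and column norms equal $\sqrt n$, yet testing on $\sum_k e_k\otimes\delta_{g_k^{-1}}$ gives $\|\sum_{i,j}E_{ij}\otimes\lambda(g_ig_j)\|\ge n$. So no constant multiple of row and column norms suffices. An induction would still work if you bounded each block matrix by $(\sum_{a,b}\|F(ab)\|^2)^{1/2}$ and applied the inductive hypothesis to $F(ab)=\lambda_{r-1,d}(f(\cdot,ab))$, but you do not need it.

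Second, the injectivity claim in your ``main obstacle'' paragraph, about $(a,b,h')\mapsto(ah',b^{-1}h')$, is not the one the Cauchy--Schwarz step uses. What you need is:
\begin{itemize}
\item an output $x$ of degree $\boldsymbol k+\boldsymbol\ell-2\boldsymbol i$ splits uniquely as $x=ah'$ with $\deg a=\boldsymbol\ell-\boldsymbol i$ and $\deg h'=\boldsymbol k-\boldsymbol i$;
\item every $g\in S_{\boldsymbol\ell}$ with $g^{-1}x\in S_{\boldsymbol k}$ has the form $ab$ with this $a$;
\item $(a,b)\mapsto ab$ and $(b,h')\mapsto b^{-1}h'$ are injective on coordinatewise reduced pairs of the fixed degrees.
\end{itemize}
These give $\sum_{a,b}|f(ab)|^2\le\|f\|_2^2$ and $\sum_{b,h'}|\xi(b^{-1}h')|^2\le\|\xi\|_2^2$, which is exactly what the estimate requires.
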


\subsection{Right-angled Artin groups and complete multipartite graphs}

For a finite simple graph $\Lambda=(V,E)$, its right-angled Artin group is
\[
  A_\Lambda:=\left\langle
    v\in V\ \middle|\ [v,w]=e\text{ whenever }\{v,w\}\in E
  \right\rangle.
\]
Thus an edge imposes commutation, while a non-edge imposes no relation.
Let $\Lambda_{r,d}$ be the complete $r$-partite graph with vertex classes
$V_j=\{g_{j,1},\ldots,g_{j,d}\}$, $1\le j\le r$. It has every edge
between distinct classes and no edge within a class. Consequently,
\begin{equation}\label{eq:raag-direct-product}
  A_{\Lambda_{r,d}}
  =\left\langle g_{j,i}\ \middle|\
    [g_{j,i},g_{a,b}]=e\text{ for }j\ne a\right\rangle
  \cong\prod_{j=1}^r\F_d^{(j)}=\sG_{r,d}.
\end{equation}
The example $r=d=3$ is shown in \Cref{fig:complete-multipartite}.
The defining graph has $rd$ generator vertices, whereas the
inverse-closed channel branch set has $K=2rd$ elements.

\begin{figure}[htbp]
  \centering
  \begin{tikzpicture}[
    x=1cm,y=1cm,
    edge/.style={draw=black!48,line width=0.45pt},
    part/.style={draw=black!45,densely dashed,rounded corners=7pt,
                 line width=0.55pt},
    vertex/.style={circle,draw=black,fill=white,line width=0.8pt,
                   inner sep=0pt,minimum size=4.8pt},
    every node/.style={font=\small}
  ]
    \coordinate (A1) at (-1.10, 2.25);
    \coordinate (A2) at ( 0.00, 2.45);
    \coordinate (A3) at ( 1.10, 2.25);
    \coordinate (B1) at (-2.20, 0.25);
    \coordinate (B2) at (-2.55,-0.75);
    \coordinate (B3) at (-2.20,-1.75);
    \coordinate (C1) at ( 2.20, 0.25);
    \coordinate (C2) at ( 2.55,-0.75);
    \coordinate (C3) at ( 2.20,-1.75);
    \foreach \i in {1,2,3}{
      \foreach \j in {1,2,3}{
        \draw[edge] (A\i)--(B\j);
        \draw[edge] (A\i)--(C\j);
        \draw[edge] (B\i)--(C\j);
      }
    }
    \draw[part] (-1.63,1.98) rectangle (1.63,2.98);
    \draw[part] (-3.43,-2.10) rectangle (-1.84,0.60);
    \draw[part] ( 1.84,-2.10) rectangle ( 3.43,0.60);
    \foreach \i in {1,2,3}{
      \node[vertex,label=above:{$g_{1,\i}$}] at (A\i) {};
      \node[vertex,label=left:{$g_{2,\i}$}] at (B\i) {};
      \node[vertex,label=right:{$g_{3,\i}$}] at (C\i) {};
    }
    \node at (0,3.27) {$V_1$};
    \node at (-2.64,-2.39) {$V_2$};
    \node at ( 2.64,-2.39) {$V_3$};
  \end{tikzpicture}
  \caption{The complete tripartite graph
  $\Lambda_{3,3}=K_{3,3,3}$, illustrating
  $\Lambda_{r,d}=K_{d,\ldots,d}$. Each dashed part contains three
  generators and no internal edge; every pair of vertices in distinct
  parts is joined by an edge imposing commutation. Thus
  $A_{\Lambda_{3,3}}\cong\F_3^3$. Inverse generators are not drawn
  as separate vertices.}
  \label{fig:complete-multipartite}
\end{figure}

This graph description records exactly the two features used in the proof: freeness inside each class and commutation across classes. It also allows the strong-approximation theorem for right-angled
Artin groups to be applied directly to $\sG_{r,d}$, rather than constructing approximations factor by factor and separately verifying strong convergence of the resulting tensor-product representations.

\subsection{Strong convergence and finite-dimensional approximation}

\begin{definition}[Strong convergence to the regular representation]
\label{def:strong-convergence}
Let $G$ be a discrete group. A sequence of finite-dimensional unitary
representations $\rho_n:G\to\U(m_n)$ strongly converges to the left
regular representation if, for every $z=\sum_gc_gg\in\C[G]$,
\[
  \|\rho_n(z)\|\longrightarrow\|\lambda_G(z)\|,
  \qquad \rho_n(z):=\sum_gc_g\rho_n(g).
\]
For random representations, strong convergence in probability means
that this convergence holds in probability for every fixed $z\in\C[G]$.
\end{definition}

The main realization uses the tensor-product strong-convergence theorem of Bordenave and Collins.  We use the Haar-orthogonal specialization noted there, applied to the representation in \eqref{eq:random-representation} for fixed $r$ and $d$.

\begin{lemma}[Haar-orthogonal tensor approximation
{\cite[Theorem~9.2]{BordenaveCollins2024}}]\label{lem:bordenave-collins}
For each fixed $r,d\ge2$, the representations $\pi_N$ in
\eqref{eq:random-representation} strongly converge in probability to
$\lambda_{r,d}$ as $N\to\infty$.
\end{lemma}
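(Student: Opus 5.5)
The plan is to deduce the claim directly from the tensor-product strong-convergence theorem of Bordenave and Collins \cite[Theorem~9.2]{BordenaveCollins2024}. The only work is to match their hypotheses and their limiting object with the representation $\pi_N$ in \eqref{eq:random-representation} and with $\lambda_{r,d}$.

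\textbf{Step 1: identify the limit.} Under $\ell^2(\sG_{r,d})\cong\bigotimes_{j}\ell^2(\F_d^{(j)})$, the regular representation acts coordinatewise. Hence $\lambda_{r,d}(g_{j,i})=1^{\otimes(j-1)}\otimes\lambda_{\F_d}(g_i)\otimes1^{\otimes(r-j)}$. Consequently, for every $z\in\C[\sG_{r,d}]$, $\lambda_{r,d}(z)$ is the same noncommutative polynomial evaluated in the spatial tensor product $C^*_{\mathrm r}(\F_d)^{\otimes_{\min} r}$. This is exactly the limiting object in their theorem: free Haar unitaries placed on separate tensor legs. Also, since $\pi_N$ is a genuine representation for every $N$, each $z\in\C[\sG_{r,d}]$ may be written in commutation-normal form, so only the tensor structure matters.

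\textbf{Step 2: lower bound.} I would show $\liminf_N\|\pi_N(z)\|\ge\|\lambda_{r,d}(z)\|$ in probability from convergence of moments.
\begin{itemize}
\item The normalized trace factorizes over tensor legs: $\operatorname{tr}_{N^r}\pi_N(g)=\prod_j\operatorname{tr}_N(w_j(O_{j,\cdot}))$, where $g=(w_1,\ldots,w_r)$.
\item For a nontrivial reduced word $w_j$, the quantity $\operatorname{tr}_N(w_j)$ tends to $0$ in probability. This is asymptotic freeness of independent Haar orthogonals, together with concentration of measure on $O(N)$.
\item Hence $\operatorname{tr}\pi_N(g)\to\mathbf 1_{\{g=e\}}=\tau(\lambda_{r,d}(g))$.
\item The trace $\tau$ is faithful on $C^*_{\mathrm r}(\sG_{r,d})$, so $\|\lambda_{r,d}(z)\|=\lim_k\tau((z^*z)^k)^{1/2k}$. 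Since $\|\pi_N(z)\|^{2k}\ge\operatorname{tr}((\pi_N(z)^*\pi_N(z))^k)$, the lower bound follows.
\end{itemize}

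\textbf{Step 3: upper bound.} The remaining inequality $\limsup_N\|\pi_N(z)\|\le\|\lambda_{r,d}(z)\|$ is the main obstacle and is the genuine content of \cite[Theorem~9.2]{BordenaveCollins2024}. It does not follow formally from strong convergence of each factor. Norms of tensor products of strongly convergent sequences need not converge: the matrix amplifications of one leg have dimension growing with the other leg. Their argument handles this via an operator-valued linearization combined with the polynomial method and exactness of $C^*_{\mathrm r}(\F_d)$. They state explicitly that it applies to independent Haar orthogonal matrices, not only to unitary ones. I would therefore verify only two points:
\begin{itemize}
\item their model, independent Haar matrices acting on distinct tensor factors with $r,d$ fixed, coincides with \eqref{eq:random-representation};
\item their conclusion covers every fixed $z$ with scalar coefficients, which is what \Cref{def:strong-convergence} requires.
\end{itemize}
Combining Steps 2 and 3 gives $\|\pi_N(z)\|\to\|\lambda_{r,d}(z)\|$ in probability for each fixed $z$, which is the assertion.
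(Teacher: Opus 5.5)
Your proposal is correct and matches the paper, which gives no separate argument and simply invokes \cite[Theorem~9.2]{BordenaveCollins2024} in its Haar-orthogonal form for the fixed tensor model \eqref{eq:random-representation}, with $\lambda_{r,d}$ acting coordinatewise on $\bigotimes_j\ell^2(\F_d^{(j)})$. Your Step~2 moment argument for the lower bound is valid but not needed, since the cited theorem already gives two-sided norm convergence. Likewise, the internal techniques you attribute to that paper play no role in the deduction and need not be relied on.
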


This is a strong-convergence theorem for the full tensor model, not a consequence of tensoring the norm limits of individual factors. For the supplementary existence result, \eqref{eq:raag-direct-product} allows us to use the following theorem.

\begin{lemma}[Finite-dimensional approximation of $\sG_{r,d}$
{\cite[Theorem~1.1]{MageeThomas2026}}]\label{lem:magee-thomas}
For every fixed $r,d\ge2$, there exists a sequence of finite-dimensional
unitary representations
\[
  \rho_n:\sG_{r,d}\longrightarrow\U(\ell_n)
\]
that strongly converges to $\lambda_{r,d}$.
\end{lemma}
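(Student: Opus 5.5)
The plan is to obtain the lemma as a direct specialization of the Magee--Thomas strong-approximation theorem for right-angled Artin groups, transported along the isomorphism \eqref{eq:raag-direct-product}. A self-contained fallback via \Cref{lem:bordenave-collins} is sketched afterwards.

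\textbf{Step 1: apply Magee--Thomas.} Their Theorem~1.1 states that for every finite simple graph $\Lambda$, the group $A_\Lambda$ admits a sequence of finite-dimensional unitary representations $\rho_n':A_\Lambda\to\U(\ell_n)$ that strongly converges to $\lambda_{A_\Lambda}$. I apply it to the complete $r$-partite graph $\Lambda=\Lambda_{r,d}=K_{d,\ldots,d}$.

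\textbf{Step 2: transport along the isomorphism.} Let $\theta:\sG_{r,d}\to A_{\Lambda_{r,d}}$ be the isomorphism of \eqref{eq:raag-direct-product} and put $\rho_n:=\rho_n'\circ\theta$. The map $\delta_h\mapsto\delta_{\theta(h)}$ is a unitary $\ell^2(\sG_{r,d})\to\ell^2(A_{\Lambda_{r,d}})$, and it intertwines $\lambda_{r,d}$ with $\lambda_{A_\Lambda}\circ\theta$. Hence, for every $z\in\C[\sG_{r,d}]$,
\[
  \|\lambda_{r,d}(z)\|=\|\lambda_{A_\Lambda}(\theta(z))\|
  \quad\text{and}\quad
  \|\rho_n(z)\|=\|\rho_n'(\theta(z))\|.
\]
Strong convergence of $(\rho_n')$ therefore transfers verbatim to $(\rho_n)$. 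The only point to check is the presentation isomorphism \eqref{eq:raag-direct-product} itself. It holds because imposing commutation between all generators in distinct classes, and nothing within a class, is exactly the standard presentation of the direct product of the free groups $\F_d^{(j)}$.

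\textbf{Fallback: derandomize Bordenave--Collins.} If one wants to avoid citing Magee--Thomas, the random tensor model gives the same conclusion.

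\emph{Countable dense family.} Enumerate the countable set $\{z_k\}$ of elements of $\C[\sG_{r,d}]$ with coefficients in $\mathbb Q[i]$. For a fixed support, both $\|\rho(z)\|$ and $\|\lambda(z)\|$ are $1$-Lipschitz in the coefficients with respect to the $\ell^1$ norm. So convergence on $\{z_k\}$ implies convergence on all of $\C[\sG_{r,d}]$.

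\emph{Deterministic subsequence.} By \Cref{lem:bordenave-collins}, for each $N$ the event
\[
  \bigl|\|\pi_N(z_k)\|-\|\lambda_{r,d}(z_k)\|\bigr|<1/N
  \quad\text{for all } k\le k_N
\]
has probability tending to one, provided $k_N\uparrow\infty$ slowly enough. Choosing $k_N$ accordingly, this event is in particular nonempty for large $N$. Picking one realization in it for each such $N$ yields a deterministic strongly convergent sequence of orthogonal, hence unitary, representations.

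\textbf{Main obstacle.} The substantive difficulty lies entirely inside the cited theorems: establishing strong convergence for the product structure rather than factor by factor. Within this proof, the only care needed is the slow choice of $k_N$ in the diagonal argument of the fallback.
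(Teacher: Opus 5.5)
Your Steps~1--2 follow the paper's route exactly: the paper just quotes the lemma from Magee--Thomas through the identification $\sG_{r,d}\cong A_{\Lambda_{r,d}}$ of \eqref{eq:raag-direct-product}, and your transport of strong convergence along the isomorphism makes explicit what the paper leaves implicit. Your optional fallback has a genuine slip in the tolerance $1/N$: convergence in probability controls $\Pr\bigl[\bigl|\|\pi_N(z_k)\|-\|\lambda_{r,d}(z_k)\|\bigr|<\epsilon\bigr]$ only for each fixed $\epsilon>0$, so nothing guarantees that your event has probability tending to one (already for $k_N=1$), and slowing $k_N$ cannot fix this. The repair is to let the tolerance shrink slowly as well: choose $N_1<N_2<\cdots$ so that for $N\ge N_j$ the event with tolerance $1/j$ and $k\le j$ has probability at least $1/2$, and use tolerance $1/j$ for $N_j\le N<N_{j+1}$. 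With that change the fallback is a valid Magee--Thomas-free proof, and it even yields real orthogonal representations directly.
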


Realification converts this sequence to real orthogonal representations
without changing its limiting norms, as follows.

\begin{lemma}[Realification preserves strong convergence]
\label{lem:realification}
Let $G$ be a discrete group.  If
$\rho_n:G\to\U(m_n)$ strongly converges to the left regular
representation $\lambda_G$, then the realifications
\[
  \rho_n^{\R}(g)
  :=
  \begin{pmatrix}
    \Re\rho_n(g)&-\Im\rho_n(g)\\
    \Im\rho_n(g)& \Re\rho_n(g)
  \end{pmatrix}
  \in O(2m_n)
\]
also strongly converge to $\lambda_G$.
\end{lemma}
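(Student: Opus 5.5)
The plan is to reduce everything to the identity that relates the norm of a complex matrix to the norm of its realification. For a complex $m\times m$ matrix $A=B+iC$ with $B,C$ real, set
\[
  A^{\R}:=\begin{pmatrix}B&-C\\ C&B\end{pmatrix}.
\]
The map $A\mapsto A^{\R}$ is a real-linear, multiplicative $*$-homomorphism from $M_m(\C)$ into $M_{2m}(\R)$. Moreover, $A^{\R}$ is unitarily equivalent over $\C$ to $A\oplus\overline A$. Indeed, conjugating by the unitary
\[
  U=\tfrac1{\sqrt2}\begin{pmatrix}I&iI\\ iI&I\end{pmatrix}
\]
block-diagonalizes $A^{\R}$ into $\diag(A,\overline A)$; this is routine to check. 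Since $\|\overline A\|=\|A\|$, we get $\|A^{\R}\|=\|A\|$. In particular, each $\rho_n^{\R}(g)$ is real and unitary, hence lies in $O(2m_n)$, and $\rho_n^{\R}$ is a homomorphism.

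Now fix $z=\sum_g c_g g\in\C[G]$ with complex coefficients. The element $\rho_n^{\R}(z)=\sum_g c_g\rho_n^{\R}(g)$ is a complex combination of real matrices, so $A\mapsto A^{\R}$, being only real-linear, cannot be applied to it directly. Instead, apply the block-diagonalization to each $\rho_n^{\R}(g)$ using the same fixed $U$. This gives
\[
  U\rho_n^{\R}(z)U^*=\diag\bigl(\rho_n(z),\ \textstyle\sum_g c_g\overline{\rho_n(g)}\bigr).
\]
Write $\bar z:=\sum_g\overline{c_g}\,g$. The second block is $\overline{\rho_n(\bar z)}$, whose norm equals $\|\rho_n(\bar z)\|$. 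Hence
\[
  \|\rho_n^{\R}(z)\|=\max\bigl(\|\rho_n(z)\|,\|\rho_n(\bar z)\|\bigr).
\]

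By strong convergence applied to both $z$ and $\bar z$, this maximum converges to $\max\bigl(\|\lambda_G(z)\|,\|\lambda_G(\bar z)\|\bigr)$. It remains to show that $\|\lambda_G(\bar z)\|=\|\lambda_G(z)\|$. Let $J$ be entrywise complex conjugation on $\ell^2(G)$ in the basis $\delta_h$. It is an antiunitary operator, and since $\lambda_G(g)$ has real entries in this basis, $J\lambda_G(z)J=\lambda_G(\bar z)$. Conjugation by an antiunitary preserves operator norm, so the two norms agree and the limit is $\|\lambda_G(z)\|$, which proves the lemma.

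The main obstacle is the $\bar z$ issue just handled: realification is only real-linear, so the complex-coefficient element produces the conjugate block, and one needs the conjugation symmetry of the regular representation to close the argument.
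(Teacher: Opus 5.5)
Your proposal is correct and follows essentially the same route as the paper: a $g$-independent unitary change of basis gives $\rho_n^{\R}(z)\simeq\rho_n(z)\oplus\overline{\rho_n(\bar z)}$, hence $\|\rho_n^{\R}(z)\|=\max\{\|\rho_n(z)\|,\|\rho_n(\bar z)\|\}$, and the fact that $\lambda_G$ acts by real permutation operators gives $\|\lambda_G(\bar z)\|=\|\lambda_G(z)\|$. Your explicit unitary $U$ and the antiunitary $J$ simply make concrete the steps the paper states more briefly.
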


\begin{proof}
Realification is multiplicative and sends unitary matrices to real
orthogonal matrices, so $\rho_n^{\R}$ is an orthogonal representation
of $G$.  After complexification, a unitary change of basis independent
of $g$ gives
\[
  \rho_n^{\R}(g)
  \simeq
  \rho_n(g)\oplus\overline{\rho_n(g)}.
\]
Consequently, for
\[
  z=\sum_g c_g g\in\C[G],
  \qquad
  \overline z:=\sum_g\overline{c_g}\,g,
\]
one has
\[
  \rho_n^{\R}(z)
  \simeq
  \rho_n(z)\oplus
  \overline{\rho_n(\overline z)}.
\]
Therefore
\[
  \|\rho_n^{\R}(z)\|
  =
  \max\bigl\{
    \|\rho_n(z)\|,
    \|\rho_n(\overline z)\|
  \bigr\}.
\]
Both terms converge by the assumed strong convergence.  Moreover, in
the canonical basis of $\ell^2(G)$ every $\lambda_G(g)$ is a real
permutation operator, so
\[
  \lambda_G(\overline z)
  =
  \overline{\lambda_G(z)},
  \qquad
  \|\lambda_G(\overline z)\|
  =
  \|\lambda_G(z)\|.
\]
Hence
\[
  \|\rho_n^{\R}(z)\|
  \longrightarrow
  \|\lambda_G(z)\|,
\]
which proves the claimed strong convergence.
\end{proof}

\section{A finite-dimensional entropy criterion}
\label{sec:finite-criterion}

This section isolates the finite-dimensional entropy mechanism behind the construction, independently of how the required channel is later realized.  We first show that suitable operator-norm control keeps all one-copy outputs sufficiently close to the maximally mixed state.  We then analyze the maximally entangled two-copy input, for which the algebraic relations among the branches create exact collisions and reduce the output entropy.  Comparing the two sides gives a criterion for simultaneous Rényi nonadditivity at all orders.

\subsection{Mixed-unitary channels and the quadratic radius}

For arbitrary unitaries $W_1,\ldots,W_K\in\U(m)$, define $\Phi$
by the formula in \eqref{eq:complementary-channel}, with indices
$1,\ldots,K$, and put
\begin{equation}\label{eq:finite-quadratic-radius}
  \begin{aligned}
    \cQ_W(A)&:=\sum_{i,j=1}^K a_{ij}W_i^*W_j,\\
    \Gamma_K(W)&:=
      \sup_{\substack{A=A^*\in M_K(\C),\ \Tr A=0\\\|A\|_{\HS}=1}}
             \|\cQ_W(A)\|.
  \end{aligned}
\end{equation}
Direct calculation gives
\begin{equation}\label{eq:duality-identity}
  \Tr(A\Phi(X))=\frac1K\Tr(X\cQ_W(A)).
\end{equation}
The branch Gram matrix in the convention adapted to this identity is
\begin{equation}\label{eq:pure-output-gram}
  [G_x]_{ij}:=\langle W_jx,W_ix\rangle,
  \qquad \Phi(\proj{x})=G_x/K
  \quad(\|x\|=1).
\end{equation}
It is positive semidefinite with diagonal one.
For traceless $A$ the diagonal contribution to $\cQ_W(A)$ vanishes.
Thus the supremum defining $\Gamma_K(W)$ is unchanged if $a_{ii}=0$
is imposed. By \Cref{lem:renyi-bounds}, an upper bound on purity already gives a
one-copy entropy lower bound throughout $0\le p\le2$.

\begin{lemma}[Exact Gram radius and maximum purity
{\cite[Theorem~4.1]{ZhenZhuChenWang2026}}]
\label{lem:exact-radius}
For every finite unitary tuple $W=(W_1,\ldots,W_K)$ and the channel $\Phi$ in \eqref{eq:complementary-channel},
\begin{equation}\label{eq:exact-radius}
  \Gamma_K(W)
  =\max_{\|x\|=1}\|G_x-I_K\|_{\HS}
  =K\max_{X\in\Dens(\C^m)}
       \left\|\Phi(X)-\frac{I_K}{K}\right\|_{\HS}.
\end{equation}
Moreover,
\begin{equation}\label{eq:maximum-purity}
  \max_{X\in\Dens(\C^m)}\Tr\bigl(\Phi(X)^2\bigr)
  =\frac1K+\frac{\Gamma_K(W)^2}{K^2}.
\end{equation}
\end{lemma}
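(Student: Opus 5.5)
The plan is to prove the chain of equalities in \eqref{eq:exact-radius} by duality, and then derive \eqref{eq:maximum-purity} from it by expanding a square.

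\textbf{Step 1: the radius as a maximum over states.} By \eqref{eq:duality-identity}, for every Hermitian traceless $A$ with $\|A\|_{\HS}=1$ and every $X\in\Dens(\C^m)$,
\[
  \Tr\bigl(A(\Phi(X)-I_K/K)\bigr)=\Tr(A\Phi(X))=\tfrac1K\Tr(X\cQ_W(A)).
\]
The first equality uses $\Tr A=0$. Since $\cQ_W(A)$ is Hermitian, $\|\cQ_W(A)\|=\max_{X\in\Dens}|\Tr(X\cQ_W(A))|$. Taking the supremum over $A$ and exchanging the two suprema gives
\[
  \Gamma_K(W)=K\max_X\sup_A\bigl|\Tr\bigl(A(\Phi(X)-I_K/K)\bigr)\bigr|.
\]
For fixed $X$, the matrix $\Phi(X)-I_K/K$ is Hermitian and traceless, because $\Phi$ is trace preserving. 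It therefore lies in the real Hilbert space over which $A$ ranges. By Cauchy--Schwarz, with equality attained at $A$ equal to its normalization, the inner supremum equals $\|\Phi(X)-I_K/K\|_{\HS}$. If $\Phi(X)=I_K/K$ the inner supremum is $0$ and the identity still holds. This proves the outer equality in \eqref{eq:exact-radius}.

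\textbf{Step 2: reduction to pure inputs.} The map $X\mapsto\|\Phi(X)-I_K/K\|_{\HS}$ is convex, since it is an affine map followed by a norm. Its maximum over the compact convex set $\Dens(\C^m)$ is therefore attained at an extreme point $X=\proj{x}$. By \eqref{eq:pure-output-gram}, $\Phi(\proj{x})=G_x/K$, so
\[
  K\|\Phi(\proj{x})-I_K/K\|_{\HS}=\|G_x-I_K\|_{\HS}.
\]
This gives the middle expression in \eqref{eq:exact-radius}.

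\textbf{Step 3: the purity formula.} For any state $X$, write $\Phi(X)=I_K/K+D$ with $D$ traceless. Then $\Tr(\Phi(X)^2)=1/K+\|D\|_{\HS}^2$, because the cross term $\tfrac2K\Tr D$ vanishes. Maximizing over $X$ and applying \eqref{eq:exact-radius} yields $1/K+\Gamma_K(W)^2/K^2$.

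The only delicate point is the sup--max exchange in Step 1 together with attainment. Both are unproblematic: the supremum over $A$ ranges over the compact unit sphere of a finite-dimensional real space, the state space is compact, and all the maps involved are continuous. The sup in the definition of $\Gamma_K(W)$ is therefore a max, and the interchange is legitimate.
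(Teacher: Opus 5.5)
Your proof is correct and follows essentially the paper's route: dualize $\cQ_W(A)$ against the traceless output deviation via \eqref{eq:duality-identity}, interchange the suprema, use Hilbert--Schmidt self-duality on traceless Hermitian matrices, use convexity to pass between pure and mixed inputs, and expand the purity. The only difference is the order of two steps: the paper dualizes over unit vectors $x$ and then extends to mixed inputs by convexity, whereas you dualize over density matrices and then reduce to pure states; also, interchanging two suprema is always legitimate, so that step needs no compactness argument.
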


\begin{proof}
For a Hermitian traceless $A$, the operator $\cQ_W(A)$ is self-adjoint.
By \eqref{eq:duality-identity} and \eqref{eq:pure-output-gram},
\[
  \|\cQ_W(A)\|
  =\max_{\|x\|=1}|\langle x,\cQ_W(A)x\rangle|
  =\max_{\|x\|=1}|\Tr(A(G_x-I_K))|.
\]
Since $G_x-I_K$ is Hermitian and traceless, Hilbert--Schmidt duality on
the real Hilbert space of traceless Hermitian matrices gives
\[
  \sup_{\substack{A=A^*,\ \Tr A=0\\\|A\|_{\HS}=1}}
  |\Tr(A(G_x-I_K))|=\|G_x-I_K\|_{\HS}.
\]
Interchanging the two suprema proves the first equality in
\eqref{eq:exact-radius}.  The second follows for pure inputs from
\eqref{eq:pure-output-gram}, and for mixed inputs by convexity of the
Hilbert--Schmidt norm.  Finally, for every output state $\sigma$,
\[
  \Tr(\sigma^2)
  =\frac1K+\left\|\sigma-\frac{I_K}{K}\right\|_{\HS}^2.
\]
Maximizing and using \eqref{eq:exact-radius} proves
\eqref{eq:maximum-purity}.
\end{proof}

The preceding identities immediately convert a bound on the quadratic radius into the one-copy entropy estimate needed below.

\begin{corollary}[One-copy bounds at all orders]
\label{cor:one-copy}
Suppose $\Gamma_K(W)\le C$ and put
\begin{equation}\label{eq:ab-constants}
  a:=1+C^2/K,\qquad b:=1+C.
\end{equation}
Every output $\sigma=\Phi(X)$ satisfies
\begin{equation}\label{eq:purity-peak}
  \Tr\sigma^2\le a/K,\qquad \|\sigma\|\le b/K.
\end{equation}
Consequently,
\begin{align}
  H_{\min,p}(\Phi)&\ge\log K-\log a,
          &&0\le p\le2,\label{eq:one-copy-low}\\
  H_{\min,p}(\Phi)&\ge\log K-
    \frac{\log a+(p-2)\log b}{p-1},
          &&2\le p<\infty,\label{eq:one-copy-high}\\
  H_{\min,\infty}(\Phi)&\ge\log K-\log b.
          &&\label{eq:one-copy-infinity}
\end{align}
\end{corollary}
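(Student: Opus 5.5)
The plan is to derive everything from \Cref{lem:exact-radius}, then convert the two matrix bounds into entropy bounds using \Cref{lem:renyi-bounds}.

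\emph{Purity bound.} Since $\Gamma_K(W)\le C$, \eqref{eq:maximum-purity} gives
\[
  \Tr\sigma^2\le \frac1K+\frac{C^2}{K^2}=\frac{a}{K}
\]
for every output $\sigma=\Phi(X)$.

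\emph{Largest eigenvalue.} I use \eqref{eq:exact-radius} in its second form, $\|\sigma-I_K/K\|_{\HS}\le C/K$. The operator norm is dominated by the Hilbert--Schmidt norm, so
\[
  \|\sigma\|\le \frac1K+\Bigl\|\sigma-\frac{I_K}{K}\Bigr\|\le\frac{1+C}{K}=\frac{b}{K}.
\]
This proves \eqref{eq:purity-peak}.

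\emph{Low and infinite orders.} For $0\le p\le2$, part~(iii) of \Cref{lem:renyi-bounds} gives
\[
  H_p(\sigma)\ge H_2(\sigma)=-\log\Tr\sigma^2\ge\log K-\log a.
\]
Minimizing over inputs, which is attained, yields \eqref{eq:one-copy-low}. The $p=\infty$ case is immediate from $H_\infty(\sigma)=-\log\|\sigma\|\ge\log K-\log b$.

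\emph{Intermediate orders.} For $2\le p<\infty$ I interpolate moments. Write the eigenvalues as $\lambda_i\le b/K$. Then
\[
  \Tr\sigma^p=\sum_i\lambda_i^2\lambda_i^{p-2}\le\Tr\sigma^2\cdot\|\sigma\|^{p-2}\le\frac aK\Bigl(\frac bK\Bigr)^{p-2}.
\]
Since $1-p<0$, taking $\frac1{1-p}\log$ reverses the inequality:
\[
  H_p(\sigma)\ge\frac{1}{1-p}\bigl[\log a+(p-2)\log b-(p-1)\log K\bigr].
\]
The right-hand side equals $\log K-\frac{\log a+(p-2)\log b}{p-1}$. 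At $p=2$ this agrees with the low-order bound, so there is no conflict at the boundary.

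There is no serious obstacle here; all the content sits in \Cref{lem:exact-radius}. The only points needing care are the sign flip from dividing by $1-p<0$, and that the bounds hold for every output state, hence for the minimum.
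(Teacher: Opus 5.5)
Your proposal is correct and follows the paper's proof essentially step for step. Both obtain purity from \eqref{eq:maximum-purity}, the peak bound from the Hilbert--Schmidt distance to $I_K/K$, the range $0\le p\le 2$ by order monotonicity, the range $p\ge 2$ via $\Tr\sigma^p\le\|\sigma\|^{p-2}\Tr\sigma^2$ with the sign flip from $1/(1-p)<0$, and the endpoint $p=\infty$ from the norm bound, before minimizing over inputs.
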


\begin{proof}
The purity estimate is \eqref{eq:maximum-purity}, and
\[
  \|\sigma\|\le\frac1K+\|\sigma-I_K/K\|
  \le\frac1K+\|\sigma-I_K/K\|_{\HS}\le b/K.
\]
Order monotonicity proves \eqref{eq:one-copy-low}. For $p\ge2$,
the eigenvalues give the moment interpolation
\[
  \Tr\sigma^p\le\|\sigma\|^{p-2}\Tr\sigma^2
        \le K^{1-p}ab^{p-2}.
\]
Take logarithms and multiply by $1/(1-p)<0$ to obtain
\eqref{eq:one-copy-high}. The norm bound gives the endpoint
\eqref{eq:one-copy-infinity}. Each estimate holds for every input,
so it remains valid after minimization.
\end{proof}

\subsection{The exact Bell quotient}\label{sec:exact-bell-quotient}

The structural statement behind \Cref{thm:main} applies to a more general
branch set. Let $G$ be a discrete group, let $S=S^{-1}\subset G$ consist
of $K$ distinct elements, and let $\pi:G\to O(m)$ be a real orthogonal
representation. Define $V_\pi$, $\Phi_\pi$, and $Z_{\Phi_\pi}$ by the same
formulas as above, now indexed by $S$. Introduce the relative-operation
map
\begin{equation}\label{eq:relative-operation-map}
  \begin{aligned}
    \mu &: S\times S\longrightarrow G,
       &\mu(s,t)&:=s^{-1}t,\\
    \mathcal T&:=S^{-1}S,
       &n_g&:=|\mu^{-1}(g)|\quad(g\in\mathcal T).
  \end{aligned}
\end{equation}
The fibres determine a probability vector and an isometry:
\begin{equation}\label{eq:quotient-isometry}
  \begin{aligned}
    q_g&:=\frac{n_g}{K^2},
      &u_g&:=\frac1{\sqrt{n_g}}
                 \sum_{\mu(s,t)=g}\ket{s,t},\\
    U&:\C^{\mathcal T}\longrightarrow\C^S\otimes\C^S,
      &U\ket g&:=u_g.
  \end{aligned}
\end{equation}
Since the fibres of $\mu$ are pairwise disjoint, for $g,h\in\mathcal T$,
\[
\begin{aligned}
\langle u_g,u_h\rangle
&=
\frac{1}{\sqrt{n_gn_h}}
\sum_{\substack{\mu(s,t)=g\\ \mu(s',t')=h}}
\langle s,t\,|\,s',t'\rangle  \\
&=
\delta_{g,h}.
\end{aligned}
\]
Thus $(u_g)_{g\in\mathcal T}$ is an orthonormal family, and hence $U^*U=I_{\C^{\mathcal T}}$; in particular, $U$ is an isometry. The following proposition separates the group-theoretic quotient
structure of the Bell output from the representation-dependent overlaps, encoded respectively by $(U,D_q)$ and $R_\pi$.

\begin{proposition}[Bell quotient and exact rank]\label{prop:bell-rank-defect}
Let $G,S,\pi$ be as above. Set
\[
  D_q:=\diag(q),\qquad
  (R_\pi)_{g,h}:=\frac1m\Tr\pi(h^{-1}g),\qquad
  B_\pi:=D_q^{1/2}R_\pi D_q^{1/2},
\]
and let $P_S\ket s:=\ket{s^{-1}}$. Then $R_\pi$ is a real correlation matrix, $\diag(B_\pi)=q$, and
\begin{equation}\label{eq:exact-bell-quotient}
  \widetilde Z_\pi:=(P_S\otimes P_S)Z_{\Phi_\pi}(P_S\otimes P_S)^*
       =U B_\pi U^*.
\end{equation}
Consequently,
\begin{equation}\label{eq:bell-quotient-rank-kernel}
  \begin{aligned}
    \rank Z_{\Phi_\pi}&=\rank R_\pi\le|\mathcal T|,\\
    \ker U^*&\subseteq\ker\widetilde Z_\pi,
      \qquad \dim\ker U^*=K^2-|\mathcal T|.
  \end{aligned}
\end{equation}
After zero padding, $q$ is majorized by the eigenvalue vector of
$Z_{\Phi_\pi}$. In particular,
\begin{equation}\label{eq:bell-quotient-majorization}
  H_p(Z_{\Phi_\pi})\le H_p(q)\quad(p\in[0,\infty]),
  \qquad \|Z_{\Phi_\pi}\|\ge\frac1K.
\end{equation}
\end{proposition}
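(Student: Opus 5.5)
The plan is to compute $Z_{\Phi_\pi}$ entrywise in the product basis $\ket{s,s'}$, recognise the resulting matrix as the compression $UB_\pi U^*$, and read off every remaining claim from linear algebra. Everything rests on one algebraic identity, obtained by combining \eqref{eq:vec-identity} with the realness of $\pi$.

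\textbf{Step 1: the entries of $Z_{\Phi_\pi}$.} By \eqref{eq:complementary-channel}, $\Phi_\pi^{\otimes2}(Y)_{(s,s'),(t,t')}=K^{-2}\Tr\bigl((W_s\otimes W_{s'})Y(W_t\otimes W_{t'})^*\bigr)$. For $Y=\proj{\Omega_m}$ this equals
\[
  K^{-2}\bra{\Omega_m}(W_t^*W_s\otimes W_{t'}^*W_{s'})\ket{\Omega_m}.
\]
Using \eqref{eq:vec-identity}, this becomes $(mK^2)^{-1}\Tr\bigl(W_t^*W_s(W_{t'}^*W_{s'})^T\bigr)$. Because $\pi$ is real orthogonal, $W^T=W^{-1}$, so the entry is $(mK^2)^{-1}\Tr\pi(t^{-1}s\,s'^{-1}t')$.

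\textbf{Step 2: the quotient formula \eqref{eq:exact-bell-quotient}.} Conjugating by $P_S\otimes P_S$ replaces every index by its inverse, since $S=S^{-1}$. Hence
\[
  (\widetilde Z_\pi)_{(s,s'),(t,t')}=(mK^2)^{-1}\Tr\pi(t\,s^{-1}s'\,t'^{-1}).
\]
By cyclicity of the trace this is $(mK^2)^{-1}\Tr\pi\bigl((t^{-1}t')^{-1}(s^{-1}s')\bigr)=K^{-2}(R_\pi)_{\mu(s,s'),\mu(t,t')}$. On the other side, $\bra{s,s'}UB_\pi U^*\ket{t,t'}$ equals $n_g^{-1/2}n_h^{-1/2}\,q_g^{1/2}q_h^{1/2}(R_\pi)_{g,h}=K^{-2}(R_\pi)_{g,h}$, where $g=\mu(s,s')$ and $h=\mu(t,t')$. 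The two expressions agree, which proves \eqref{eq:exact-bell-quotient}.

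\textbf{Step 3: $R_\pi$ and $B_\pi$.} $R_\pi$ is the Gram matrix of the vectors $m^{-1/2}\pi(g)$ in the real Hilbert--Schmidt inner product, because $\Tr(\pi(h)^T\pi(g))=\Tr\pi(h^{-1}g)$. It is therefore real, positive semidefinite, and has unit diagonal, so it is a real correlation matrix. It follows that $\diag(B_\pi)=q$.

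\textbf{Step 4: rank and kernel \eqref{eq:bell-quotient-rank-kernel}.} Since $U$ is an isometry and $\widetilde Z_\pi$ is unitarily equivalent to $Z_{\Phi_\pi}$, we get $\rank Z_{\Phi_\pi}=\rank B_\pi$. Every $q_g>0$, so $D_q^{1/2}$ is invertible and $\rank B_\pi=\rank R_\pi\le|\mathcal T|$. The inclusion $\ker U^*\subseteq\ker\widetilde Z_\pi$ is immediate from $\widetilde Z_\pi=UB_\pi U^*$. Finally, $U$ has rank $|\mathcal T|$, so $U^*$ has kernel of dimension $K^2-|\mathcal T|$.

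\textbf{Step 5: majorization and \eqref{eq:bell-quotient-majorization}.} The nonzero spectrum of $UB_\pi U^*$ coincides with that of $B_\pi$, because $U^*U=I$. By the Schur--Horn theorem, the diagonal $q$ of $B_\pi$ is majorized by its eigenvalue vector, and zero padding does not affect this. Schur-concavity of $H_p$ then gives $H_p(Z_{\Phi_\pi})\le H_p(q)$ for all $p\in[0,\infty]$, with the endpoints $p=0$ and $p=\infty$ handled by rank and maximum-entry comparison. For the norm bound, $\|Z_{\Phi_\pi}\|=\lambda_{\max}(B_\pi)\ge\max_g q_g\ge q_e$. The fibre over $e$ is the diagonal $\{(s,s)\}$, so $n_e=K$ and $q_e=1/K$.

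\textbf{Main obstacle.} The only delicate step is the index bookkeeping in Steps 1--2. One must apply the transpose identity with the correct convention and check that the $P_S$ relabelling together with cyclicity produces exactly $h^{-1}g$ with $g=s^{-1}s'$, rather than some twisted variant. The realness of $\pi$ is essential here, since it is what turns the transpose into an inverse.
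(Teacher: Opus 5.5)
Your proposal is correct and follows the paper's argument. The only difference is in Step 2: you verify $\widetilde Z_\pi=UB_\pi U^*$ entrywise, using cyclicity of the trace. The paper instead obtains it from the fibre decomposition $\widetilde\Psi_\pi=\sum_{g}\sqrt{q_g}\,\eta_g\otimes u_g$ of the relabelled Stinespring vector and then takes a partial trace. Your Steps 3--5 coincide with the paper's: the Gram structure of $R_\pi$, the rank argument via invertibility of $D_q$, Schur--Horn majorization, and the norm bound from $q_e=1/K$.
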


\begin{proof}
For $g\in\mathcal T$, set
$\eta_g:=m^{-1/2}\vecop(\pi(g))$. These are real unit vectors, and
\begin{equation}\label{eq:Rpi-gram}
  \langle\eta_h,\eta_g\rangle
  =\frac1m\Tr\!\bigl(\pi(h)^*\pi(g)\bigr)
  =\frac1m\Tr\pi(h^{-1}g)
  =(R_\pi)_{g,h}.
\end{equation}
Thus $R_\pi$ is a real correlation matrix, so $B_\pi\ge0$ and
\begin{equation}\label{eq:Bpi-diagonal}
  \diag(B_\pi)=q,\qquad \Tr B_\pi=1.
\end{equation}

After the canonical reordering that places the two system registers
before the two branch registers, the Stinespring output is
\begin{equation}\label{eq:bell-stinespring-state}
  \Psi_\pi
  =\frac1K\sum_{s,t\in S}
    (W_s\otimes W_t)\ket{\Omega_m}\otimes\ket{s,t}.
\end{equation}
Define
$\widetilde\Psi_\pi:=(I_{m^2}\otimes P_S\otimes P_S)\Psi_\pi$.
Since $S=S^{-1}$ and $W_{s^{-1}}=W_s^*$, relabelling both branches
gives
\begin{equation}\label{eq:bell-relabelled-state}
  \widetilde\Psi_\pi
  =\frac1K\sum_{s,t\in S}
    (W_s^*\otimes W_t^*)\ket{\Omega_m}\otimes\ket{s,t}.
\end{equation}
By \eqref{eq:vec-identity} and real orthogonality,
\begin{equation}\label{eq:bell-relative-vector}
  (W_s^*\otimes W_t^*)\ket{\Omega_m}
  =\frac1{\sqrt m}\vecop(W_s^*W_t)
  =\eta_{\mu(s,t)}.
\end{equation}
Grouping the terms by the fibres of $\mu$ therefore yields
\begin{equation}\label{eq:bell-fibre-decomposition}
  \widetilde\Psi_\pi
  =\frac1K\sum_{s,t\in S}\eta_{\mu(s,t)}\otimes\ket{s,t}
  =\sum_{g\in\mathcal T}\sqrt{q_g}\,\eta_g\otimes u_g.
\end{equation}
Taking the system partial trace and using \eqref{eq:Rpi-gram}, we obtain
\begin{equation}\label{eq:bell-reduction-expanded}
  \begin{aligned}
    \widetilde Z_\pi
    &=\Tr_{\C^m\otimes\C^m}\proj{\widetilde\Psi_\pi}\\
    &=\sum_{g,h\in\mathcal T}
      \sqrt{q_gq_h}\,(R_\pi)_{g,h}\ket{u_g}\!\bra{u_h}
      =UB_\pi U^*.
  \end{aligned}
\end{equation}
This proves \eqref{eq:exact-bell-quotient}.

Since $U$ is an isometry, $UB_\pi U^*$ is unitarily equivalent to
$B_\pi\oplus0$ on $\Ran U\oplus\ker U^*$. Moreover, every $q_g$ is
positive, so $D_q$ is invertible and
\begin{equation}\label{eq:Bpi-rank}
  \rank Z_{\Phi_\pi}
  =\rank B_\pi
  =\rank R_\pi
  \le|\mathcal T|.
\end{equation}
The factorization also gives
$\ker U^*\subseteq\ker\widetilde Z_\pi$, while
$\dim\ker U^*=K^2-|\mathcal T|$. This proves
\eqref{eq:bell-quotient-rank-kernel}.

By \eqref{eq:Bpi-diagonal} and the Schur--Horn theorem
\cite{Bhatia1997},
\begin{equation}\label{eq:q-majorized-by-spectrum}
  q\prec\lambda(B_\pi).
\end{equation}
The eigenvalues of $Z_{\Phi_\pi}$ are those of $B_\pi$ together with
$K^2-|\mathcal T|$ zeros, which proves the claimed majorization after
zero padding. Schur concavity gives
$H_p(Z_{\Phi_\pi})\le H_p(q)$ for $0<p<\infty$. The endpoint $p=0$
follows from the rank bound and the positivity of every $q_g$; the
endpoint $p=\infty$ follows from
$\|Z_{\Phi_\pi}\|=\|B_\pi\|\ge\max_g q_g$.
Finally, $\mu(s,t)=e$ exactly when $s=t$, so $n_e=K$ and
\[
  \|Z_{\Phi_\pi}\|
  =\|B_\pi\|
  \ge q_e
  =\frac1K.
\]
This proves \eqref{eq:bell-quotient-majorization}.
\end{proof}

\subsection{The fibre profile for commuting free factors}

We now classify completely the fibres of the relative-operation map
\[
  \mu:\cS\times\cS\longrightarrow\sG_{r,d},
  \qquad
  \mu(s,t):=s^{-1}t.
\]
Equivalently, we determine how the $K^2$ ordered branch pairs
$(s,t)\in\cS\times\cS$ are grouped according to their common relative
operation: how many distinct group labels occur, and how many
preimages each label has.  Since every branch lies in a single free
factor, a nonidentity relative operation has nontrivial entries in
either one or two coordinates.

\begin{lemma}[Relative-operation fibres]\label{lem:fibres}
Let
\[
  k:=|\cS_j|=2d=\frac{K}{r}.
\]
The fibres of
\[
  \mu:\cS\times\cS\to\sG_{r,d},
  \qquad
  \mu(s,t)=s^{-1}t,
\]
are exactly of the following three types:
\begin{itemize}
  \item \emph{Identity fibre.}
  The identity element has the single fibre
  \[
    \mu^{-1}(e)=\{(s,s):s\in\cS\},
  \]
  of size $K$.

  \item \emph{Same-factor fibres.}
  If $s,t\in\cS_j$ with $s\ne t$, then the fibre of
  $s^{-1}t$ is the singleton $\{(s,t)\}$.  There are
  \[
    r\,k(k-1)
    =
    \frac{K^2}{r}-K
  \]
  such fibres.

  \item \emph{Mixed fibres.}
  If $s\in\cS_j$ and $t\in\cS_\ell$ with $j\ne\ell$, then
  \begin{equation}\label{eq:mixed-fibre}
    \mu^{-1}(s^{-1}t)
    =
    \{(s,t),(t^{-1},s^{-1})\}.
  \end{equation}
  Hence every mixed fibre has size two, and the number of mixed
  fibres is
  \[
    \frac{r(r-1)k^2}{2}
    =
    \frac{r-1}{2r}K^2.
  \]
\end{itemize}
These three classes are disjoint and exhaust all fibres.  Consequently,
\begin{equation}\label{eq:bell-rank-count}
  |\cS^{-1}\cS|
  =
  1+\left(\frac{K^2}{r}-K\right)
  +\frac{r-1}{2r}K^2
  =
  \frac{r+1}{2r}K^2-K+1
  =
  M_{r,K}.
\end{equation}
\end{lemma}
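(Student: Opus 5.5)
The plan is to compute $s^{-1}t$ explicitly in coordinates and then read off when two pairs give the same element, using uniqueness of the coordinate representation $g=(g_1,\ldots,g_r)$ in $\sG_{r,d}$ and uniqueness of reduced words in each $\F_d^{(j)}$. For $s\in\cS_j$ and $t\in\cS_\ell$, the element $\mu(s,t)$ falls into one of three cases:
\begin{itemize}
  \item if $s=t$, it is $e$;
  \item if $j=\ell$ and $s\ne t$, it is supported in coordinate $j$ alone, where it equals the word $s^{-1}t$ in $\F_d^{(j)}$;
  \item if $j\ne\ell$, it has coordinate $j$ equal to $s^{-1}$, coordinate $\ell$ equal to $t$, and all others trivial.
\end{itemize}
I first check that these three kinds of values are pairwise distinct. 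Mixed values have exactly two nontrivial coordinates. A same-factor value $s^{-1}t$ with $s\ne t$ in $\cS_j$ is nontrivial: either $t\neq s^{\pm1}$, giving a reduced word of length two, or $t=s^{-1}$, giving $s^{-2}$, also of length two. So its multidegree is $2e_j$. Hence the support size of the coordinate vector separates the three classes.

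Next come the fibres. The identity fibre is immediate, since $s^{-1}t=e$ forces $t=s$; it has size $K$.

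For the same-factor class, suppose $s^{-1}t=s'^{-1}t'$ with all four letters in $\cS_j$, $s\ne t$ and $s'\ne t'$. Both sides are freely reduced words of length exactly two (the case $t=s^{-1}$ gives $s^{-1}s^{-1}$, still reduced). Uniqueness of reduced words then gives $s^{-1}=s'^{-1}$ and $t=t'$, so each fibre is a singleton. The count is $r\cdot k(k-1)$.

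For the mixed class, take $s\in\cS_j$, $t\in\cS_\ell$ with $j\ne\ell$, and suppose $s'^{-1}t'=s^{-1}t$. By the support argument, $(s',t')$ must also be mixed, with its two letters occupying coordinates $\{j,\ell\}$. There are two possible orientations:
\begin{itemize}
  \item $s'\in\cS_j$ and $t'\in\cS_\ell$: matching coordinates gives $s'=s$, $t'=t$;
  \item $s'\in\cS_\ell$ and $t'\in\cS_j$: matching coordinates gives $t'=s^{-1}$, $s'^{-1}=t$, i.e.\ $(s',t')=(t^{-1},s^{-1})$.
\end{itemize}
This second pair is a genuine element of $\cS\times\cS$ because $\cS$ is inverse-closed, and it is distinct from $(s,t)$ since the two pairs lie in different factor orderings. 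So every mixed fibre has exactly two elements. The number of mixed ordered pairs is $r(r-1)k^2$, so the number of mixed fibres is half of that.

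Finally, summing $1+rk(k-1)+r(r-1)k^2/2$ with $k=K/r$ gives $M_{r,K}$. As a consistency check, $K+rk(k-1)+r(r-1)k^2=K^2$.

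The only delicate point is the same-factor injectivity when $t=s^{-1}$, i.e.\ making sure that words like $s^{-2}$ cannot collide with another product $s'^{-1}t'$. This is handled by noting that every product of two letters with $s\ne t$ is a reduced word of length two, so reduced-word uniqueness applies uniformly.
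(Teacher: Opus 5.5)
Your proof is correct and follows essentially the same route as the paper: the identity fibre comes from $s^{-1}t=e\iff s=t$, same-factor fibres are singletons by uniqueness of reduced length-two words, and mixed fibres have two elements by matching the two nontrivial coordinates. Your two-orientation analysis obtains $(t^{-1},s^{-1})$ directly in coordinates rather than through the identity $s^{-1}t=ts^{-1}$. Your support-based separation of the three classes, and the check that $t=s^{-1}$ still gives a reduced word, spell out steps the paper leaves implicit.
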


\begin{proof}
We treat the three cases separately.

\begin{itemize}
  \item \emph{Identity fibre.}
  One has
  \[
    \mu(s,t)=e
    \quad\Longleftrightarrow\quad
    s^{-1}t=e
    \quad\Longleftrightarrow\quad
    s=t.
  \]
  Hence the identity fibre consists exactly of the $K$ diagonal
  pairs $(s,s)$.

  \item \emph{Same-factor fibres.}
  Suppose $s,t\in\cS_j$ and $s\ne t$.  Then $s^{-1}t$ is a reduced
  word of length two in the free group $\mathbb F_d^{(j)}$.
  By uniqueness of reduced words, the two letters determine $s$ and
  $t$ uniquely.  The nontrivial coordinate also determines the factor
  $j$.  Thus the fibre is the singleton $\{(s,t)\}$. For each factor there are $k(k-1)$ ordered off-diagonal pairs, so
  the total number of same-factor singleton fibres is
  \[
    r\,k(k-1)
    =
    \frac{K^2}{r}-K.
  \]

  \item \emph{Mixed fibres.}
  Suppose $s\in\cS_j$ and $t\in\cS_\ell$ with $j\ne\ell$.
  Since different free factors commute,
  \[
    s^{-1}t
    =
    ts^{-1}
    =
    (t^{-1})^{-1}s^{-1},
  \]
  and therefore both $(s,t)$ and $(t^{-1},s^{-1})$ belong to the same fibre. Conversely, the two nontrivial coordinates of $s^{-1}t$ determine the two factors and the corresponding letters uniquely.  Hence these are the only two preimages, proving
  \eqref{eq:mixed-fibre}. There are
  \[
    r(r-1)k^2
  \]
  ordered cross-factor pairs.  Since each mixed fibre contains two
  such pairs, the number of mixed fibres is
  \[
    \frac{r(r-1)k^2}{2}
    =
    \frac{r-1}{2r}K^2.
  \]
\end{itemize}

Every ordered pair $(s,t)\in\cS\times\cS$ belongs to exactly one of
the three cases above, so the classification is exhaustive.  Summing
the numbers of fibres gives \eqref{eq:bell-rank-count}.
\end{proof}

The preceding fibre count can now be translated into the spectral and
entropy information needed for the Bell output.

\begin{corollary}[Exact quotient and Bell entropy bounds]
\label{cor:bell-profile}
Define
\begin{equation}\label{eq:bell-weights}
  q_{r,K}:=\left(
    \frac1K,
    \left(\frac2{K^2}\right)^{\!\times (r-1)K^2/(2r)},
    \left(\frac1{K^2}\right)^{\!\times(K^2/r-K)}
  \right),
\end{equation}
where $v^{\times n}$ denotes $n$ repetitions.  For every real
orthogonal representation of $\sG_{r,d}$,
\begin{align}
  \rank Z_{\Phi_\pi}
  &=
  \rank R_\pi
  \le
  M_{r,K},
  \notag\\
  \dim\ker U^*
  &=
  \frac{r-1}{2r}K^2+K-1,
  \label{eq:bell-rank}\\
  H_p(Z_{\Phi_\pi})
  &\le
  H_p(q_{r,K})
  \le
  \log M_{r,K},
  \qquad p\in[0,\infty].
  \label{eq:bell-entropy}
\end{align}
The first entropy bound retains the full fibre profile and is therefore
sharper than the rank-only bound $\log M_{r,K}$.  Its endpoint and
order-one values are
\begin{equation}\label{eq:bell-endpoints}
  \begin{aligned}
    H_0(q_{r,K})
    &=
    \log M_{r,K},\\
    H_1(q_{r,K})
    &=
    2\log K-\frac{\log K}{K}
    -\frac{r-1}{r}\log 2,\\
    H_\infty(q_{r,K})
    &=
    \log K.
  \end{aligned}
\end{equation}
\end{corollary}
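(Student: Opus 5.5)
This is essentially bookkeeping that combines \Cref{prop:bell-rank-defect} with the fibre classification of \Cref{lem:fibres}.

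First, apply \Cref{prop:bell-rank-defect} with $G=\sG_{r,d}$ and $S=\cS$. This is legitimate because $\cS$ is inverse-closed and consists of $K$ distinct group elements. The proposition gives $\rank Z_{\Phi_\pi}=\rank R_\pi\le|\mathcal T|$ and $\dim\ker U^*=K^2-|\mathcal T|$. \Cref{lem:fibres} gives $|\mathcal T|=|\cS^{-1}\cS|=M_{r,K}$. Hence the rank bound holds, and $K^2-M_{r,K}=\frac{r-1}{2r}K^2+K-1$, which is \eqref{eq:bell-rank}.

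Next, identify the probability vector $q=(n_g/K^2)_{g\in\mathcal T}$ from \eqref{eq:quotient-isometry}, using the three fibre types.
\begin{itemize}
  \item The identity fibre has $n_e=K$, so it contributes $q_e=1/K$.
  \item Each of the $\frac{r-1}{2r}K^2$ mixed fibres has size two, so each contributes $2/K^2$.
  \item Each of the $K^2/r-K$ same-factor fibres is a singleton, so each contributes $1/K^2$.
\end{itemize}
This is exactly $q_{r,K}$ up to reordering. As a consistency check, the weights sum to $1/K+(r-1)/r+1/r-1/K=1$. Then \eqref{eq:bell-quotient-majorization} yields $H_p(Z_{\Phi_\pi})\le H_p(q_{r,K})$ for every $p\in[0,\infty]$. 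The second inequality $H_p(q_{r,K})\le\log M_{r,K}$ is \Cref{lem:renyi-bounds}(ii), since $q_{r,K}$ has exactly $M_{r,K}$ positive entries. The same count gives $H_0(q_{r,K})=\log M_{r,K}$.

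For the remaining endpoint values:
\begin{itemize}
  \item For $H_\infty$, the largest entry is $1/K$, because $K\ge 4$ forces $2/K^2<1/K$. Hence $H_\infty=\log K$.
  \item For $H_1$, compute $-\sum q\log q$ term by term. The identity term is $\frac1K\log K$. The mixed block contributes $\frac{r-1}{r}\log(K^2/2)=\frac{r-1}{r}(2\log K-\log2)$. The singleton block contributes $(\frac1r-\frac1K)\,2\log K$.
\end{itemize}
Adding the three $H_1$ terms, the coefficient of $\log K$ is $\frac1K+\frac{2(r-1)}{r}+\frac2r-\frac2K=2-\frac1K$. This gives the stated formula $2\log K-\frac{\log K}{K}-\frac{r-1}{r}\log2$.

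The only point needing care is the claim that the fibre-profile bound is sharper than the rank-only bound. This is just the fact that $q_{r,K}$ is non-uniform (for instance $1/K\ne 1/K^2$), and Rényi entropies of a non-uniform vector are strictly below $\log$ of its support size for $p>0$. I do not expect a genuine obstacle; the main risk is arithmetic slips in the $H_1$ evaluation.
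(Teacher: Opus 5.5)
Your proposal is correct and follows the paper's proof: apply \Cref{prop:bell-rank-defect} to the fibre classification of \Cref{lem:fibres}, read off $q_{r,K}$ by dividing the fibre sizes $K,2,1$ by $K^2$, and evaluate the $H_0$, $H_1$, $H_\infty$ values directly. Your explicit check $K^2-M_{r,K}=\frac{r-1}{2r}K^2+K-1$ and your term-by-term $H_1$ calculation fill in arithmetic that the paper leaves implicit.
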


\begin{proof}
Apply \Cref{prop:bell-rank-defect} to the fibre classification in
\Cref{lem:fibres}.  Dividing the three fibre sizes
\(
  K, 2, 1
\)
by the total number $K^2$ of branch pairs gives the probability vector
\eqref{eq:bell-weights}.  The rank and kernel statements then follow
from \eqref{eq:bell-quotient-rank-kernel}, while
\eqref{eq:bell-entropy} follows from
\eqref{eq:bell-quotient-majorization}.

The endpoint and order-one formulas are obtained directly from
\eqref{eq:bell-weights}.  In particular, the total weight carried by
the mixed fibres is
\[
  \frac{r-1}{2r}K^2\cdot\frac{2}{K^2}
  =
  \frac{r-1}{r},
\]
which gives the mixed-fibre contribution
$-\frac{r-1}{r}\log 2$ in $H_1(q_{r,K})$.
\end{proof}

\subsection{One certificate for the entire R\'enyi scale}

The one-copy purity bound and the Bell rank estimate give the
low-order certificate
\begin{equation}\label{eq:low-certificate}
  D_{r,K,C}:=
  \log\frac{K^2}{M_{r,K}(1+C^2/K)^2}.
\end{equation}
It applies throughout $0\le p\le2$. Above order two, the same one-copy lower bound is no longer available, so we retain both the full Bell moment and the largest-eigenvalue estimate.

\begin{theorem}[All-order finite-dimensional criterion]
\label{thm:finite-criterion}
Let $\pi:\sG_{r,d}\to O(m)$ be a real orthogonal representation.
Suppose its branch tuple
\[
  W=(\pi(s))_{s\in\cS}
\]
satisfies $\Gamma_K(W)\le C$.  With $a,b$ as in
\eqref{eq:ab-constants}, set
\begin{equation}\label{eq:beta-alpha}
  \beta_r:=\frac{2(r-1)}{r},
  \qquad
  \alpha_r:=\log4-\frac34\log3+\frac14\log\beta_r,
\end{equation}
and define
\begin{equation}\label{eq:all-order-certificate}
  \delta_{r,K,C}
  :=
  \min\left\{
    D_{r,K,C},\
    \alpha_r-2\log a,\
    \frac34\log K+\frac14\log2-2\log b
  \right\}.
\end{equation}
Then
\begin{equation}\label{eq:finite-common-gap}
  \inf_{p\in[0,\infty]}
  \Delta_p^\Omega(\Phi_\pi)
  \ge
  \delta_{r,K,C}.
\end{equation}
In particular, if $\delta_{r,K,C}>0$, then the same channel
$\Phi_\pi$ and the same Bell input certify strict self-tensor
nonadditivity simultaneously for every $p\in[0,\infty]$.
\end{theorem}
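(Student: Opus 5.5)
The proof splits the Rényi scale into the three ranges $0\le p\le2$, $2\le p<\infty$ and $p=\infty$. In each range I would combine the one-copy lower bounds of \Cref{cor:one-copy}, which apply because $\Gamma_K(W)\le C$, with the two-copy upper bound $H_p(Z_{\Phi_\pi})\le H_p(q_{r,K})$ of \Cref{cor:bell-profile}. The three entries of the minimum in \eqref{eq:all-order-certificate} should come out of this comparison as the low-order value, the order-two endpoint, and the high-order limit.

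\emph{Low orders.} For $0\le p\le2$, \eqref{eq:one-copy-low} gives $2H_{\min,p}(\Phi_\pi)\ge 2\log K-2\log a$, and \eqref{eq:bell-entropy} gives $H_p(Z_{\Phi_\pi})\le\log M_{r,K}$. Subtracting yields $\Delta_p^\Omega\ge\log(K^2/M_{r,K})-2\log a=D_{r,K,C}$.

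\emph{Order infinity.} Here \eqref{eq:one-copy-infinity} gives $2H_{\min,\infty}(\Phi_\pi)\ge 2\log K-2\log b$, and \eqref{eq:bell-endpoints} gives $H_\infty(Z_{\Phi_\pi})\le H_\infty(q_{r,K})=\log K$. Hence $\Delta_\infty^\Omega\ge\log K-2\log b$. This is at least the third entry of the minimum, since $\log K\ge\frac34\log K+\frac14\log2$ for $K\ge2$.

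\emph{Orders $2\le p<\infty$.} Since $1/(1-p)<0$, I need a lower bound on $\sum_g q_g^p$. Keeping only the identity atom and the mixed atoms of \eqref{eq:bell-weights} gives
\[
\sum_g q_g^p\ge x+y,\qquad x:=K^{-p},\qquad y:=\tfrac{(r-1)K^2}{2r}\Bigl(\tfrac{2}{K^2}\Bigr)^{p}=\beta_r2^{p-2}K^{2-2p}.
\]
The key step is the weighted AM--GM inequality $x+y\ge(x/\theta)^\theta\bigl(y/(1-\theta)\bigr)^{1-\theta}$ with $\theta=3/4$. Its entropy constant is $-\frac34\log\frac34-\frac14\log\frac14=\log4-\frac34\log3$, which together with $\frac14\log\beta_r$ reproduces $\alpha_r$. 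This gives
\[
\log\sum_g q_g^p\ \ge\ \alpha_r+\tfrac{p-2}{4}\log2+\tfrac{2-5p}{4}\log K.
\]
Inserting this and \eqref{eq:one-copy-high} into $\Delta_p^\Omega\ge 2H_{\min,p}(\Phi_\pi)+\frac{1}{p-1}\log\sum_g q_g^p$, the coefficient of $\log K$ in the numerator collapses to $2(p-1)+\frac{2-5p}{4}=\frac34(p-2)$. The result is
\[
\Delta_p^\Omega\ \ge\ \frac{(\alpha_r-2\log a)+(p-2)\bigl(\tfrac34\log K+\tfrac14\log2-2\log b\bigr)}{p-1}.
\]
Writing $p-1=1+(p-2)$ exhibits the right side as a convex combination of the second and third entries of \eqref{eq:all-order-certificate}, with weights $\frac1{p-1}$ and $\frac{p-2}{p-1}$. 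It is therefore at least $\delta_{r,K,C}$.

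Taking the infimum over all three ranges proves \eqref{eq:finite-common-gap}. The strictness claim then follows from \eqref{eq:witness-to-additivity-gap}, and it uses that $\overline{\Phi_\pi}=\Phi_\pi$, so the Bell input is a genuine self-tensor witness.

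The main obstacle is the middle range $2\le p<\infty$. Neither the identity atom alone nor the mixed atoms alone give a bound that is uniform in $p$ with the right $\log K$ growth. The AM--GM mixing at the specific weight $\theta=3/4$ is what makes the $\log K$ coefficient vanish at $p=2$ and turns the bound into an exact convex combination. I would check this exponent bookkeeping carefully, together with the sign reversal from dividing by $1-p$.
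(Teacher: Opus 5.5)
Your proof is correct and follows the paper's argument essentially step for step. It uses the same split into $[0,2]$, $[2,\infty)$ and $\{\infty\}$, the same one-copy bounds from \Cref{cor:one-copy}, and the same weighted AM--GM with weights $3/4$ and $1/4$, which turns the bound into a convex combination of the second and third entries. There are two cosmetic differences. You discard the same-factor atoms at the outset, whereas the paper writes the exact moment $K^{-2-2u}\bigl(K^u+\beta_r2^u+\frac1r-\frac1K\bigr)$ and then drops $\frac1r-\frac1K\ge0$. Also, your appeal to $\overline{\Phi_\pi}=\Phi_\pi$ in the last step is unnecessary: $Z_{\Phi_\pi}$ is by definition an output of $\Phi_\pi^{\otimes2}$, so \eqref{eq:witness-to-additivity-gap} holds directly.
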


\begin{proof}
We treat the low-order range, the higher-order range, and the
endpoint $p=\infty$ separately, and then combine the resulting
estimates into a single uniform bound.

For $0\le p\le2$, \eqref{eq:one-copy-low} gives
\[
  H_{\min,p}(\Phi_\pi)
  \ge
  \log K-\log a.
\]
On the other hand, \eqref{eq:bell-entropy} gives
\[
  H_p(Z_{\Phi_\pi})
  \le
  \log M_{r,K}.
\]
Therefore
\begin{align}
  \Delta_p^\Omega(\Phi_\pi)
  &=
  2H_{\min,p}(\Phi_\pi)-H_p(Z_{\Phi_\pi})
  \notag\\
  &\ge
  2\log K-2\log a-\log M_{r,K}
  \notag\\
  &=
  \log\frac{K^2}{M_{r,K}(1+C^2/K)^2}
  =
  D_{r,K,C},
  \qquad 0\le p\le2.
  \label{eq:finite-low-gap}
\end{align}

We next consider $p=2+u$ with $u\ge0$.  By
\eqref{eq:bell-weights}, the exact $(2+u)$-moment of the fibre-weight
vector is
\begin{align}
  \sum_\gamma q_{r,K,\gamma}^{\,2+u}
  &=
  \left(\frac1K\right)^{2+u}
  +
  \frac{r-1}{2r}K^2
  \left(\frac2{K^2}\right)^{2+u}
  +
  \left(\frac{K^2}{r}-K\right)
  \left(\frac1{K^2}\right)^{2+u}
  \notag\\
  &=
  K^{-2-2u}
  \left(
    K^u+\beta_r2^u+\frac1r-\frac1K
  \right).
  \label{eq:bell-high-moment}
\end{align}
Hence
\begin{equation}\label{eq:bell-high-entropy}
  H_{2+u}(q_{r,K})
  =
  2\log K
  -
  \frac{
    \log\!\left(
      K^u+\beta_r2^u+\frac1r-\frac1K
    \right)
  }{1+u}.
\end{equation}
Combining
\[
  H_{2+u}(Z_{\Phi_\pi})
  \le
  H_{2+u}(q_{r,K})
\]
from \eqref{eq:bell-entropy} with the one-copy estimate
\eqref{eq:one-copy-high},
\[
  H_{\min,2+u}(\Phi_\pi)
  \ge
  \log K
  -
  \frac{\log a+u\log b}{1+u},
\]
gives
\begin{equation}\label{eq:finite-high-gap}
  \Delta_{2+u}^\Omega(\Phi_\pi)
  \ge
  \frac{
    \log\!\left(
      K^u+\beta_r2^u+\frac1r-\frac1K
    \right)
    -2\log a-2u\log b
  }{1+u},
  \qquad u\ge0.
\end{equation}

To make this estimate uniform over all $u\ge0$, apply the weighted
arithmetic--geometric mean inequality with weights $3/4$ and $1/4$:
\begin{align}
  K^u+\beta_r2^u
  &=
  \frac34\frac{K^u}{3/4}
  +
  \frac14\frac{\beta_r2^u}{1/4}
  \notag\\
  &\ge
  \left(\frac{K^u}{3/4}\right)^{3/4}
  \left(\frac{\beta_r2^u}{1/4}\right)^{1/4}
  \notag\\
  &=
  e^{\alpha_r}
  \left(K^{3/4}2^{1/4}\right)^u.
  \label{eq:weighted-amgm}
\end{align}
Since $K=2rd$, one has
\[
  \frac1r-\frac1K\ge0.
\]
Thus \eqref{eq:finite-high-gap} implies
\begin{align}
  \Delta_{2+u}^\Omega(\Phi_\pi)
  &\ge
  \frac{
    \alpha_r-2\log a
    +
    u\left(
      \frac34\log K+\frac14\log2-2\log b
    \right)
  }{1+u}
  \notag\\
  &\ge
  \min\left\{
    \alpha_r-2\log a,\
    \frac34\log K+\frac14\log2-2\log b
  \right\},
  \qquad u\ge0,
  \label{eq:finite-high-uniform}
\end{align}
because the preceding fraction is a convex combination of the two
displayed constants.

Finally, at $p=\infty$, the Bell estimate
\[
  \|Z_{\Phi_\pi}\|\ge\frac1K
\]
from \eqref{eq:bell-quotient-majorization} gives
\[
  H_\infty(Z_{\Phi_\pi})
  \le
  \log K.
\]
Together with \eqref{eq:one-copy-infinity},
\[
  H_{\min,\infty}(\Phi_\pi)
  \ge
  \log K-\log b,
\]
this yields
\begin{equation}\label{eq:finite-infinity-gap}
  \Delta_\infty^\Omega(\Phi_\pi)
  \ge
  \log\frac{K}{b^2}.
\end{equation}
Since $K\ge2$,
\[
  \log\frac{K}{b^2}
  \ge
  \frac34\log K+\frac14\log2-2\log b.
\]

Combining the low-order estimate \eqref{eq:finite-low-gap}, the
uniform higher-order estimate \eqref{eq:finite-high-uniform}, and the
endpoint estimate above gives
\[
  \inf_{p\in[0,\infty]}
  \Delta_p^\Omega(\Phi_\pi)
  \ge
  \min\left\{
    D_{r,K,C},\
    \alpha_r-2\log a,\
    \frac34\log K+\frac14\log2-2\log b
  \right\}
  =
  \delta_{r,K,C}.
\]
This proves \eqref{eq:finite-common-gap}.  If
$\delta_{r,K,C}>0$, then \eqref{eq:witness-to-additivity-gap} gives
\[
  2H_{\min,p}(\Phi_\pi)
  -
  H_{\min,p}(\Phi_\pi^{\otimes2})
  \ge
  \Delta_p^\Omega(\Phi_\pi)
  >0
\]
simultaneously for every $p\in[0,\infty]$, completing the proof.
\end{proof}

\section{The free direct-product estimate}\label{sec:free-estimate}

The product Haagerup inequality is the analytic input. The additional
point specific to this model is that the same relative-operation fibres
which compress the Bell output also control coefficient collection in the
one-copy quadratic polynomial. We first prove the upper radius bound and
then show that all sectors can be asymptotically saturated together.

For the left regular representation $\lambda_{r,d}$ of
$\sG_{r,d}$, define
\begin{equation}\label{eq:regular-radius}
  \Gamma_{r,d}:=
  \sup_{\substack{A=A^*\in M_K(\C),\ \Tr A=0\\\|A\|_{\HS}=1}}
  \left\|\sum_{s,t\in\cS}a_{s,t}\lambda_{r,d}(s^{-1}t)\right\|.
\end{equation}
Let
\begin{equation}\label{eq:radius-constant}
  C_r:=\sqrt{16r^2-7r}.
\end{equation}

\subsection{Word sectors and coefficient multiplicities}

\begin{proposition}[Regular radius bound]\label{prop:free-radius}
For every $r,d\ge2$,
\begin{equation}\label{eq:free-radius-bound}
  \Gamma_{r,d}\le C_r=\sqrt{16r^2-7r}.
\end{equation}
\end{proposition}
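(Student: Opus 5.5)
The bound will come from splitting the quadratic polynomial $\sum_{s,t}a_{s,t}\lambda_{r,d}(s^{-1}t)$ according to the fibre classification of \Cref{lem:fibres}. Each resulting piece is supported on a single multidegree sector, so \Cref{lem:tensor-haagerup} bounds its norm. The pieces are then recombined with the triangle inequality and Cauchy--Schwarz.

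\emph{Step 1: the identity sector vanishes.} Fix $A=A^*$ with $\Tr A=0$ and $\|A\|_{\HS}=1$. The identity fibre consists of the diagonal pairs $(s,s)$. Its contribution is therefore $(\sum_s a_{s,s})\lambda_{r,d}(e)=(\Tr A)\,I=0$.

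\emph{Step 2: split the rest into sectors.} For each $j$, let
\[
  f_j(g):=\sum_{\substack{s,t\in\cS_j,\ s\neq t\\ s^{-1}t=g}}a_{s,t}.
\]
This function is supported in $S_{2e_j}$. Since same-factor fibres are singletons, $\|f_j\|_2^2=A_j:=\sum_{s\ne t\in\cS_j}|a_{s,t}|^2$.

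For each unordered pair $j<l$, collect all pairs $(s,t)$ with one of $s,t$ in $\cS_j$ and the other in $\cS_l$, in either order. These relative operations lie in $S_{e_j+e_l}$. Call the collected function $f_{jl}$. By \eqref{eq:mixed-fibre}, each of its coefficients has the form $a_{s,t}+a_{t^{-1},s^{-1}}$ over a two-element fibre. Using $|x+y|^2\le 2(|x|^2+|y|^2)$ and disjointness of the fibres gives
\[
  \|f_{jl}\|_2^2\le 2B_{jl},
\]
where $B_{jl}$ is the sum of $|a_{s,t}|^2$ over all cross pairs between $\cS_j$ and $\cS_l$, taken in both orders.

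By \Cref{lem:fibres}, these sets of pairs, together with the diagonal, partition $\cS\times\cS$. Hence
\[
  \sum_j A_j+\sum_{j<l}B_{jl}\le\|A\|_{\HS}^2=1.
\]

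\emph{Step 3: Haagerup bounds and Cauchy--Schwarz.} \Cref{lem:tensor-haagerup} gives the constants $3$ for multidegree $2e_j$ and $4$ for $e_j+e_l$. Combined with the triangle inequality,
\[
  \Bigl\|\sum_{s,t}a_{s,t}\lambda_{r,d}(s^{-1}t)\Bigr\|
  \le \sum_{j=1}^r 3\sqrt{A_j}+\sum_{j<l}4\sqrt2\,\sqrt{B_{jl}}.
\]
Apply Cauchy--Schwarz to this sum. There are $r$ weights equal to $3$ and $r(r-1)/2$ weights equal to $4\sqrt2$. The bound becomes
\[
  \Bigl(9r+32\cdot\tfrac{r(r-1)}2\Bigr)^{1/2}\Bigl(\sum_j A_j+\sum_{j<l}B_{jl}\Bigr)^{1/2}
  \le\sqrt{16r^2-7r}=C_r.
\]
Taking the supremum over admissible $A$ gives \eqref{eq:free-radius-bound}.

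\emph{Main obstacle.} The delicate point is the mixed sectors. Cancellation or reinforcement between $a_{s,t}$ and $a_{t^{-1},s^{-1}}$ within the same fibre is exactly what produces the factor $\sqrt2$. It must be justified that the two preimages are the only ones, which is \eqref{eq:mixed-fibre}, and that nothing from distinct sectors merges, which follows since multidegrees differ. I would double-check that grouping ordered pairs $(j,l)$ and $(l,j)$ into one sector, rather than treating them separately, is what yields the constant $16r^2-7r$ rather than something larger. Treating them separately would give $r(r-1)$ terms with weight $4$, hence $9r+16r(r-1)$, the same value. So either bookkeeping works, provided the factor $2$ is charged correctly.
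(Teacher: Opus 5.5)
Your proposal is correct and matches the paper's proof step for step. The shared steps are: the identity contribution vanishes because $\Tr A=0$; the same-factor sectors collect injectively and carry Haagerup constant $3$; the mixed sectors are grouped over both orders, with collected coefficients $a_{s,t}+a_{t^{-1},s^{-1}}$ giving the factor $\sqrt2$ and Haagerup constant $4$; and the triangle inequality plus Cauchy--Schwarz give $(9r+32\binom r2)^{1/2}=\sqrt{16r^2-7r}$. Your closing remark is also right: split by ordered factor pairs, the collection is injective within each piece, since the partner $(t^{-1},s^{-1})$ lies in the opposite ordered sector, so the $r(r-1)$ weights $4$ give the same constant without any $\sqrt2$.
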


\begin{proof}
Let $A=A^*$ satisfy
\[
  \Tr A=0,
  \qquad
  \|A\|_{\HS}=1.
\]
For each $1\le j\le r$, let
\[
  x_j^2
  :=
  \sum_{\substack{s,t\in\cS_j\\ s\ne t}}
  |a_{s,t}|^2,
\]
and, for $1\le j<\ell\le r$, let
\[
  y_{j\ell}^2
  :=
  \sum_{(s,t)\in
  (\cS_j\times\cS_\ell)\cup(\cS_\ell\times\cS_j)}
  |a_{s,t}|^2.
\]
These regions are pairwise disjoint.  Since the diagonal entries of
$A$ are not included in the quantities above,
\begin{equation}\label{eq:radius-sector-mass}
  \sum_{j=1}^r x_j^2
  +
  \sum_{j<\ell}y_{j\ell}^2
  \le
  \|A\|_{\HS}^2
  =
  1.
\end{equation}
The diagonal branch pairs contribute only to the identity element:
\[
  \sum_{s\in\cS}a_{s,s}\lambda_{r,d}(e)
  =
  (\Tr A)I
  =
  0.
\]
It therefore remains to estimate the nonidentity terms.  We separate
them according to whether the two branches lie in the same free
factor or in distinct factors.

\begin{itemize}
  \item \emph{Same-factor sectors.}
  For each $j$, set
  \[
    Q_j
    :=
    \sum_{\substack{s,t\in\cS_j\\ s\ne t}}
    a_{s,t}\lambda_{r,d}(s^{-1}t).
  \]
  By \Cref{lem:fibres}, the map
  \[
    (s,t)\longmapsto s^{-1}t
  \]
  is injective on
  $\{(s,t)\in\cS_j^2:s\ne t\}$.  Hence collecting equal group
  elements does not change the $\ell^2$ norm of the coefficients, which remains $x_j$. Every word occurring in $Q_j$ has multidegree $2e_j$.
  Therefore \Cref{lem:tensor-haagerup} gives
  \[
    \|Q_j\|
    \le
    (2+1)x_j
    =
    3x_j.
  \]

  \item \emph{Mixed-factor sectors.}
  For $j<\ell$, set
  \[
    Q_{j\ell}
    :=
    \sum_{(s,t)\in
    (\cS_j\times\cS_\ell)\cup(\cS_\ell\times\cS_j)}
    a_{s,t}\lambda_{r,d}(s^{-1}t).
  \]
  By \Cref{lem:fibres}, every relative operation in this sector has
  exactly two preimages,
  \[
    (s,t)
    \quad\text{and}\quad
    (t^{-1},s^{-1}).
  \]
  Thus, after collecting equal group elements, each resulting
  coefficient has the form
  \[
    c_g
    =
    a_{s,t}
    +
    a_{t^{-1},s^{-1}}.
  \]
  Using
  \[
    |z+w|^2
    \le
    2\bigl(|z|^2+|w|^2\bigr)
  \]
  on each mixed fibre and summing over the fibres yields
  \[
    \sum_g|c_g|^2
    \le
    2y_{j\ell}^2.
  \]
  Hence the $\ell^2$ norm of the collected coefficients is at most
  $\sqrt2\,y_{j\ell}$. Every word occurring in $Q_{j\ell}$ has multidegree
  $e_j+e_\ell$.  The corresponding tensorized Haagerup constant is
  \[
    (1+1)(1+1)=4,
  \]
  and therefore
  \[
    \|Q_{j\ell}\|
    \le
    4\sqrt2\,y_{j\ell}.
  \]
\end{itemize}

Since the identity contribution vanishes,
\[
  \sum_{s,t\in\cS}
  a_{s,t}\lambda_{r,d}(s^{-1}t)
  =
  \sum_{j=1}^r Q_j
  +
  \sum_{j<\ell}Q_{j\ell}.
\]
The triangle inequality and the preceding sector estimates give
\begin{align}
  \left\|
    \sum_{s,t\in\cS}
    a_{s,t}\lambda_{r,d}(s^{-1}t)
  \right\|
  &\le
  3\sum_{j=1}^r x_j
  +
  4\sqrt2\sum_{j<\ell}y_{j\ell}
  \notag\\
  &\le
  \left(
    9r+32\binom r2
  \right)^{1/2}
  \left(
    \sum_{j=1}^r x_j^2
    +
    \sum_{j<\ell}y_{j\ell}^2
  \right)^{1/2}
  \notag\\
  &\le
  \left(
    9r+32\binom r2
  \right)^{1/2}
  \notag\\
  &=
  \sqrt{16r^2-7r},
\end{align}
where the second inequality is Cauchy--Schwarz and the third uses
\eqref{eq:radius-sector-mass}.

Taking the supremum over all admissible $A$ gives
\[
  \Gamma_{r,d}
  \le
  \sqrt{16r^2-7r},
\]
which completes the proof.
\end{proof}

\subsection{Simultaneous saturation of the regular sectors}

The radius upper bound used Cauchy--Schwarz after summing the Haagerup sector estimates. Its sharpness therefore requires more than separate sharp examples for each sector: the sectors must approach their bounds on the same vectors. 

\begin{theorem}[Asymptotically sharp regular radius]\label{prop:radius-sharpness}
For $r,d\ge2$ and $k:=2d$,
\begin{equation}\label{eq:radius-sharpness-bounds}
  \begin{aligned}
    \Gamma_{r,d}^2&\le16r^2-7r,\\
    \Gamma_{r,d}^2&\ge
       r\frac{(3k-4)^2}{k(k-1)}
       +16r(r-1)\left(1-\frac1k\right)^2.
  \end{aligned}
\end{equation}
In particular, for every fixed $r\ge2$,
\begin{equation}\label{eq:radius-sharpness-limit}
  \lim_{d\to\infty}\Gamma_{r,d}^2=16r^2-7r.
\end{equation}
The same estimates give the uniform finite-parameter bound
\begin{equation}\label{eq:uniform-radius-error}
  0\le16r^2-7r-\Gamma_{r,d}^2\le\frac{16r^2}{d}
  \qquad(r,d\ge2).
\end{equation}
\end{theorem}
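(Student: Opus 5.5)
The upper bound in \eqref{eq:radius-sharpness-bounds} is \Cref{prop:free-radius}. The limit \eqref{eq:radius-sharpness-limit} and the uniform error \eqref{eq:uniform-radius-error} follow from the two bounds by elementary algebra. So the real work is the lower bound, which I will prove by exhibiting one explicit admissible $A$ together with an explicit spectral point. The guiding idea comes from the proof of \Cref{prop:free-radius}, where equality in Cauchy--Schwarz requires the sector masses $x_j,y_{j\ell}$ to be proportional to the sector constants. I will therefore take $A$ constant on each sector, with a weight $\alpha$ on all same-factor off-diagonal entries and a weight $\beta$ on all mixed entries. The diagonal is set to zero, so $\Tr A=0$, and with $\alpha,\beta$ real the matrix $A$ is real symmetric.

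The key step is to rewrite the resulting operator through the commuting self-adjoint radial elements
\[
  \chi_j:=\sum_{s\in\cS_j}\lambda_{r,d}(s),\qquad 1\le j\le r,
\]
which are self-adjoint because $\cS_j$ is inverse-closed.

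\emph{Same-factor sector.} Since $\cS_j=\cS_j^{-1}$,
\[
  \sum_{s\ne t\in\cS_j}\lambda_{r,d}(s^{-1}t)=\chi_j^2-k.
\]
\emph{Mixed sector $j<\ell$.} Both orderings contribute $\chi_j\chi_\ell$, because $\lambda(t^{-1}s)$ summed over $(t,s)\in\cS_\ell\times\cS_j$ again gives $\chi_\ell\chi_j=\chi_j\chi_\ell$.

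Hence the operator is
\[
  P=\alpha\sum_j(\chi_j^2-k)+2\beta\sum_{j<\ell}\chi_j\chi_\ell .
\]
Its Hilbert--Schmidt normalization is
\[
  \|A\|_{\HS}^2=r\,k(k-1)\alpha^2+\binom r2 2k^2\beta^2 .
\]

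Next I bound $\|P\|$ from below via joint spectrum. Under $\ell^2(\sG_{r,d})\cong\bigotimes_j\ell^2(\F_d^{(j)})$, the operator $\chi_j$ acts only on the $j$th tensor factor. The joint spectrum of $(\chi_1,\dots,\chi_r)$ is therefore the product of the individual spectra. By Kesten's theorem, each individual spectrum is $[-2\sqrt{k-1},2\sqrt{k-1}]$. By the spectral mapping theorem for commuting normal operators, $\|P\|\ge|P(x)|$ for every point $x$ of the joint spectrum. Evaluating at $x_j=2\sqrt{k-1}$ for all $j$ gives
\[
  \|P\|\ge \alpha\, r(3k-4)+\beta\binom r2 8(k-1).
\]
(Near-eigenvectors are not needed; the spectral mapping statement suffices.)

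It remains to optimize over $(\alpha,\beta)$, which is equality in Cauchy--Schwarz. Taking $\alpha,\beta$ proportional to the coefficient-to-mass ratios gives
\[
  \Gamma_{r,d}^2\ge \frac{r^2(3k-4)^2}{rk(k-1)}+\frac{\binom r2^2 64(k-1)^2}{\binom r2 2k^2}
  = r\frac{(3k-4)^2}{k(k-1)}+16r(r-1)\Bigl(1-\frac1k\Bigr)^2,
\]
which is exactly \eqref{eq:radius-sharpness-bounds}.

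Finally, for \eqref{eq:uniform-radius-error} I expand the lower bound with $k=2d$. The main task is to show that each sector's deficit, $9-(3k-4)^2/(k(k-1))$ and $32-32(1-1/k)^2$, is $O(1/k)$ with explicit constants, and then sum these deficits to obtain at most $16r^2/d$. I expect this bookkeeping, rather than anything conceptual, to be the only place needing care. The one conceptual point to verify is the joint-spectrum claim for the tensor product, which is standard for commuting operators acting on distinct tensor factors.
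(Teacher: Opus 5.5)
Your proposal is correct and follows essentially the same route as the paper: the same sector-constant, zero-diagonal test matrix (the paper's $A_0$ built from $E_j$ and $F_{j\ell}$), the same rewriting of the sectors as $L_j^2-kI$ and $L_jL_\ell$, evaluation at the common spectral edge $2\sqrt{k-1}$ in every tensor factor, and the same Cauchy--Schwarz choice of weights. The only difference is how the spectral edge is justified. You cite Kesten's theorem and the joint spectral mapping theorem for commuting operators on distinct tensor factors, whereas the paper is self-contained: it constructs explicit truncated radial approximate eigenvectors $\xi_n$ of the tree adjacency operator and tests against $\xi_n^{\otimes r}$. The remaining bookkeeping for \eqref{eq:uniform-radius-error} is exactly the deficit estimate $r(15k-16)/(k(k-1))+16r(r-1)(2/k-1/k^2)\le 32r^2/k=16r^2/d$.
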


\begin{proof}
The upper bound follows from \Cref{prop:free-radius}. Fix $r,d\ge2$, put
$k:=2d$, and write
\begin{equation}\label{eq:regular-quadratic-map}
  \cQ_\infty(A)
  :=\sum_{s,t\in\cS}a_{s,t}\lambda_{r,d}(s^{-1}t),
  \qquad A=(a_{s,t})_{s,t\in\cS}.
\end{equation}
Let $L_j:=\sum_{s\in\cS_j}\lambda_{r,d}(s)$. Under
$\ell^2(\sG_{r,d})\cong\ell^2(\F_d)^{\otimes r}$, these self-adjoint
operators act on separate tensor factors.

For each $j$, let $E_j$ have entry $1/\sqrt{k(k-1)}$ on
$\{(s,t)\in\cS_j\times\cS_j:s\ne t\}$. For $j<\ell$, let $F_{j\ell}$ have
entry $1/(\sqrt2\,k)$ on
$(\cS_j\times\cS_\ell)\cup(\cS_\ell\times\cS_j)$.
All other entries are zero. Their supports are pairwise disjoint
and contain respectively $k(k-1)$ and $2k^2$ entries, so these
matrices are real symmetric, traceless, and Hilbert--Schmidt
orthonormal. Inverse closure and cross-factor commutation give
\begin{equation}\label{eq:radial-sector-polynomials}
  X_j:=\cQ_\infty(E_j)
      =\frac{L_j^2-kI}{\sqrt{k(k-1)}},
  \qquad
  Y_{j\ell}:=\cQ_\infty(F_{j\ell})
      =\frac{\sqrt2}{k}L_jL_\ell.
\end{equation}

Let $L$ be the adjacency operator of the $k$-regular Cayley tree of
$\F_d$, so that $L_j$ acts as $L$ on the $j$th tensor factor.
Set $t_k:=2\sqrt{k-1}$ and, for $n\ge2$, define
\[
  f_n(g):=
  \begin{cases}
    (k-1)^{-|g|/2},&1\le|g|\le n,\\
    0,&\text{otherwise}.
  \end{cases}
\]
The shell of radius $q\ge1$ contains $k(k-1)^{q-1}$ vertices, hence
\begin{equation}\label{eq:radial-norm}
  \|f_n\|^2=\frac{nk}{k-1}.
\end{equation}
Each non-root vertex has one parent and $k-1$ children. Consequently,
$(L-t_kI)f_n$ vanishes outside the shells of radii $0,1,n,n+1$;
their contributions to its squared norm are respectively
$k^2/(k-1),k,k,k$. Thus
\begin{equation}\label{eq:radial-residual-unnormalized}
  \|(L-t_kI)f_n\|^2=\frac{k^2}{k-1}+3k.
\end{equation}
For $\xi_n:=f_n/\|f_n\|$, this gives
\begin{equation}\label{eq:radial-residual}
  \|(L-t_kI)\xi_n\|^2
  =\frac{4k-3}{n}\longrightarrow0,
\end{equation}
and therefore
\begin{equation}\label{eq:radial-moments}
  \langle\xi_n,L\xi_n\rangle\longrightarrow t_k,
  \qquad
  \langle\xi_n,L^2\xi_n\rangle
  =\|L\xi_n\|^2\longrightarrow t_k^2.
\end{equation}

Now set $\zeta_n:=\xi_n^{\otimes r}$. By the product structure and
\eqref{eq:radial-sector-polynomials} and \eqref{eq:radial-moments},
\begin{align}
  \langle\zeta_n,X_j\zeta_n\rangle
  &\longrightarrow
    \frac{t_k^2-k}{\sqrt{k(k-1)}}
    =\frac{3k-4}{\sqrt{k(k-1)}}=:a_k,
    \label{eq:radial-same-limit}\\
  \langle\zeta_n,Y_{j\ell}\zeta_n\rangle
  &=\frac{\sqrt2}{k}\langle\xi_n,L\xi_n\rangle^2
    \longrightarrow
    4\sqrt2\left(1-\frac1k\right)=:b_k.
    \label{eq:radial-mixed-limit}
\end{align}
Thus all sector values are approached on the same unit vectors.
Set $\ell_0:=(r a_k^2+\binom r2 b_k^2)^{1/2}$ and define
\begin{equation}\label{eq:radius-test-matrix}
  A_0:=\frac1{\ell_0}
  \left(
    a_k\sum_{j=1}^rE_j
    +b_k\sum_{j<\ell}F_{j\ell}
  \right).
\end{equation}
Hilbert--Schmidt orthonormality gives $\|A_0\|_{\HS}=1$, and $A_0$
is Hermitian and traceless. Hence it is admissible in
\eqref{eq:regular-radius}. Moreover,
\[
  \langle\zeta_n,\cQ_\infty(A_0)\zeta_n\rangle
  \longrightarrow
  \frac{r a_k^2+\binom r2 b_k^2}{\ell_0}
  =\ell_0.
\]
Since $\|\zeta_n\|=1$, we obtain
$\Gamma_{r,d}\ge\|\cQ_\infty(A_0)\|\ge\ell_0$, and consequently
\begin{equation}\label{eq:radius-sharpness-lower}
  \Gamma_{r,d}^2
  \ge\ell_0^2
  =r\frac{(3k-4)^2}{k(k-1)}
   +16r(r-1)\left(1-\frac1k\right)^2.
\end{equation}
Together with \eqref{eq:free-radius-bound}, this proves
\eqref{eq:radius-sharpness-bounds}. Finally, since $k=2d\ge4$,
\[
  \begin{aligned}
    0
    &\le16r^2-7r-\Gamma_{r,d}^2\\
    &\le r\frac{15k-16}{k(k-1)}
      +16r(r-1)\left(\frac2k-\frac1{k^2}\right)\\
    &\le\frac{16r+32r(r-1)}{k}
      \le\frac{32r^2}{k}
      =\frac{16r^2}{d}.
  \end{aligned}
\]
This proves \eqref{eq:uniform-radius-error} and, for fixed $r$,
\eqref{eq:radius-sharpness-limit}.
\end{proof}

\section{Finite-dimensional transfer}
\label{sec:finite-dimensional-transfer}

The preceding section establishes the quadratic-radius estimate in the left regular representation of $\sG_{r,d}$.  To pass from this infinite-dimensional model to finite-dimensional channels, we need two consequences of strong convergence.  First, the pointwise convergence of group-algebra norms must hold uniformly over the compact coefficient class defining the quadratic radius.  Second, for any fixed finite collection of distinct group elements, their finite-dimensional images must eventually remain linearly independent. The latter ensures that the Bell quotient suffers no rank loss beyond the collisions already forced by the group relations.

\subsection{Uniform transfer and absence of additional rank loss}
\label{subsec:uniform-transfer}

Fix an integer $K\ge2$, let $G$ be a discrete group, and fix
$s_1,\ldots,s_K\in G$.  Let
\[
  \pi_n:G\to\U(m_n),
  \qquad n\in\N,
\]
be finite-dimensional unitary representations.  Define
\begin{equation}\label{eq:transfer-gamma-n}
  \Gamma_n
  :=
  \sup_{\substack{
    A=A^*\in M_K(\C),\ \Tr A=0\\
    \|A\|_{\HS}=1
  }}
  \left\|
    \sum_{i,j=1}^K
    a_{ij}\pi_n(s_i^{-1}s_j)
  \right\|,
\end{equation}
and
\begin{equation}\label{eq:transfer-gamma-infinity}
  \Gamma_\infty
  :=
  \sup_{\substack{
    A=A^*\in M_K(\C),\ \Tr A=0\\
    \|A\|_{\HS}=1
  }}
  \left\|
    \sum_{i,j=1}^K
    a_{ij}\lambda_G(s_i^{-1}s_j)
  \right\|.
\end{equation}

The following lemma records both finite-dimensional consequences of
strong convergence that will be used below.

\begin{lemma}[Uniform transfer and finite-set linear independence]
\label{lem:uniform-transfer}
Assume that $\pi_n$ strongly converges to the left regular
representation $\lambda_G$.

\begin{itemize}
  \item[(i)] The quadratic radii converge:
  \[
    \Gamma_n\longrightarrow\Gamma_\infty
    \qquad(n\to\infty).
  \]
  If instead $(\pi_n)_{n\ge1}$ is a sequence of random unitary
  representations strongly converging in probability to $\lambda_G$,
  then
  \[
    \Gamma_n\xrightarrow{\mathbb P}\Gamma_\infty.
  \]

  \item[(ii)] Let $T\subset G$ be any fixed finite set of distinct
  group elements.  Then, in the deterministic case, the matrices
  \[
    \bigl(\pi_n(g)\bigr)_{g\in T}
  \]
  are linearly independent for all sufficiently large $n$.  In the
  random case,
  \[
    \mathbb P\!\left[
      \bigl(\pi_n(g)\bigr)_{g\in T}
      \text{ is linearly independent}
    \right]
    \longrightarrow1.
  \]
\end{itemize}
\end{lemma}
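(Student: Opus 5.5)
The plan is to reduce both parts to finitely many group-algebra elements, linearly parametrized by a compact set. Strong convergence can then be upgraded from pointwise to uniform by a net argument.

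Let $T:=\{s_i^{-1}s_j\}$ and let $\mathcal A$ be the set of admissible $A$. The set $\mathcal A$ is compact: it is the unit sphere of the real finite-dimensional space of traceless Hermitian matrices. For $A\in\mathcal A$ put $z_A:=\sum_{i,j}a_{ij}\,s_i^{-1}s_j\in\C[G]$. The map $A\mapsto z_A$ is real-linear, and every representation $\rho$ (unitary, so $\|\rho(g)\|=1$) satisfies
\[
\|\rho(z_A)-\rho(z_B)\|\le\sum_{i,j}|a_{ij}-b_{ij}|\le K\|A-B\|_{\HS}.
\]
Hence $A\mapsto\|\pi_n(z_A)\|$ and $A\mapsto\|\lambda_G(z_A)\|$ are $K$-Lipschitz, uniformly in $n$.

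For (i), fix $\eta>0$ and choose a finite $\eta/(3K)$-net $\{A_1,\dots,A_N\}\subset\mathcal A$. Strong convergence applied to the finitely many elements $z_{A_k}$ gives $n_0$ such that $|\|\pi_n(z_{A_k})\|-\|\lambda_G(z_{A_k})\||<\eta/3$ for all $k$ and all $n\ge n_0$. The Lipschitz bound then yields
\[
\sup_{A\in\mathcal A}\bigl|\|\pi_n(z_A)\|-\|\lambda_G(z_A)\|\bigr|<\eta .
\]
Since $\Gamma_n$ and $\Gamma_\infty$ are suprema of these two functions over the same set, $|\Gamma_n-\Gamma_\infty|<\eta$. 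In the random case the net is deterministic, so a union bound over the $N$ events $\{|\|\pi_n(z_{A_k})\|-\|\lambda_G(z_{A_k})\||\ge\eta/3\}$ shows that each has vanishing probability. The same Lipschitz argument, carried out on the complement of this union, gives $\Pr[|\Gamma_n-\Gamma_\infty|\ge\eta]\to0$.

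For (ii), I would use the same compactness device on the unit sphere $\Sigma$ of $\C^T$. If $c\in\Sigma$, then $\|\lambda_G(\sum_g c_g g)\|\ge\|\lambda_G(\sum_g c_g g)\delta_e\|=\|c\|_2=1$, because the $\delta_g$ are orthonormal. Since $\sum_g c_g\pi_n(g)=\pi_n(\sum_g c_g g)$, the uniform convergence over the compact set $\Sigma$ (proved exactly as in (i) with Lipschitz constant $|T|^{1/2}$) gives $\inf_{c\in\Sigma}\|\sum_g c_g\pi_n(g)\|\ge1/2$ for large $n$, or with probability tending to one in the random case. A nontrivial linear relation among the $\pi_n(g)$ would produce some $c\in\Sigma$ with $\sum_g c_g\pi_n(g)=0$, which is a contradiction. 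Here the parameter set is a complex sphere, but it is still compact, so nothing changes.

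The only delicate point is in the random case. Strong convergence in probability is assumed only pointwise in $z$, so the net must be chosen before any randomness and be finite; it is the deterministic Lipschitz constant, independent of $n$ and of the sample, that transports control from the net to the whole sphere. Everything else is routine.
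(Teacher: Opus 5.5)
Your proposal is correct and follows the paper's proof essentially step for step. Both use the uniform $K$-Lipschitz bound on the compact set of admissible $A$, a finite net with a union bound in the random case, and the same $\delta_e$ lower bound $\|\lambda_G(\sum_g c_g g)\|\ge 1$ with a $\sqrt{|T|}$-Lipschitz net argument on the unit sphere for part (ii).
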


\begin{proof}
We prove the two assertions separately.

\begin{itemize}
  \item[(i)] \emph{Uniform convergence of the quadratic radius.}
  Let
  \[
    \mathcal A_K
    :=
    \left\{
      A=A^*\in M_K(\C):
      \Tr A=0,\ \|A\|_{\HS}=1
    \right\}.
  \]
  Since $K$ is fixed, $\mathcal A_K$ is compact in the
  Hilbert--Schmidt norm.  For $A=(a_{ij})\in\mathcal A_K$, define
  \[
    F_n(A)
    :=
    \left\|
      \sum_{i,j=1}^K
      a_{ij}\pi_n(s_i^{-1}s_j)
    \right\|
  \]
  and
  \[
    F_\infty(A)
    :=
    \left\|
      \sum_{i,j=1}^K
      a_{ij}\lambda_G(s_i^{-1}s_j)
    \right\|.
  \]
  Since all $\pi_n(g)$ and $\lambda_G(g)$ are unitary, for
  $A,B\in\mathcal A_K$,
  \begin{align}
    |F_n(A)-F_n(B)|
    &\le
    \left\|
      \sum_{i,j=1}^K
      (a_{ij}-b_{ij})\pi_n(s_i^{-1}s_j)
    \right\|
    \notag\\
    &\le
    \sum_{i,j=1}^K|a_{ij}-b_{ij}|
    \notag\\
    &\le
    K\|A-B\|_{\HS}.
    \label{eq:transfer-lipschitz}
  \end{align}
  The same estimate holds for $F_\infty$.  Thus the entire family $(F_n)_n$, together with $F_\infty$, is uniformly $K$-Lipschitz on $\mathcal A_K$. Fix $\delta>0$, and choose a finite $\delta$-net
  \[
    \mathcal N_\delta\subset\mathcal A_K.
  \]
  For $B=(b_{ij})\in\mathcal N_\delta$, set
  \[
    z_B
    :=
    \sum_{i,j=1}^K
    b_{ij}s_i^{-1}s_j
    \in\C[G].
  \]
  For every $A\in\mathcal A_K$, choose
  $B\in\mathcal N_\delta$ with
  $\|A-B\|_{\HS}\le\delta$.  By
  \eqref{eq:transfer-lipschitz},
  \begin{align}
    |F_n(A)-F_\infty(A)|
    &\le
    2K\delta
    +
    \max_{B\in\mathcal N_\delta}
    \left|
      \|\pi_n(z_B)\|
      -
      \|\lambda_G(z_B)\|
    \right|.
    \label{eq:transfer-net-bound}
  \end{align}
  Taking the supremum over $A\in\mathcal A_K$ yields
  \begin{equation}\label{eq:transfer-gamma-bound}
    |\Gamma_n-\Gamma_\infty|
    \le
    2K\delta
    +
    \max_{B\in\mathcal N_\delta}
    \left|
      \|\pi_n(z_B)\|
      -
      \|\lambda_G(z_B)\|
    \right|.
  \end{equation}

  In the deterministic case, the maximum on the right tends to zero
  because $\mathcal N_\delta$ is finite and strong convergence holds
  for every fixed $z_B\in\C[G]$.  Hence
  \[
    \limsup_{n\to\infty}
    |\Gamma_n-\Gamma_\infty|
    \le
    2K\delta.
  \]
  Since $\delta>0$ is arbitrary,
  \[
    \Gamma_n\longrightarrow\Gamma_\infty.
  \]

  In the random case, fix $\varepsilon>0$ and choose $\delta>0$ so
  that $2K\delta<\varepsilon/2$.  By
  \eqref{eq:transfer-gamma-bound} and the union bound,
  \begin{align*}
    \mathbb P\!\left(
      |\Gamma_n-\Gamma_\infty|>\varepsilon
    \right)
    &\le
    \sum_{B\in\mathcal N_\delta}
    \mathbb P\!\left(
      \left|
        \|\pi_n(z_B)\|
        -
        \|\lambda_G(z_B)\|
      \right|
      >
      \frac{\varepsilon}{2}
    \right).
  \end{align*}
  Every summand tends to zero by strong convergence in probability,
  and the net is finite.  Therefore
  \[
    \Gamma_n\xrightarrow{\mathbb P}\Gamma_\infty.
  \]

  \item[(ii)] \emph{Absence of additional finite-set collapse.}
  Write
  \[
    T=\{g_1,\ldots,g_M\},
  \]
  where the $g_a$ are distinct, and let
  \[
    \mathbb S_T
    :=
    \left\{
      c=(c_g)_{g\in T}\in\C^T:
      \sum_{g\in T}|c_g|^2=1
    \right\}.
  \]
  For $c\in\mathbb S_T$, define
  \[
    G_n(c)
    :=
    \left\|
      \sum_{g\in T}c_g\pi_n(g)
    \right\|,
    \qquad
    G_\infty(c)
    :=
    \left\|
      \sum_{g\in T}c_g\lambda_G(g)
    \right\|.
  \]
  The regular representation separates the distinct group elements
  uniformly.  Indeed,
  \begin{align}
    G_\infty(c)
    &\ge
    \left\|
      \sum_{g\in T}
      c_g\lambda_G(g)\delta_e
    \right\|_2
    \notag\\
    &=
    \left\|
      \sum_{g\in T}c_g\delta_g
    \right\|_2
    =
    1.
    \label{eq:regular-linear-independence}
  \end{align}
  Moreover, for $c,d\in\mathbb S_T$,
  \begin{align*}
    |G_n(c)-G_n(d)|
    &\le
    \sum_{g\in T}|c_g-d_g|
    \le
    \sqrt{M}\,\|c-d\|_2,
  \end{align*}
  and the same estimate holds for $G_\infty$. Since $\mathbb S_T$ is compact, the same finite-net argument as in part~(i) gives
  \[
    \sup_{c\in\mathbb S_T}
    |G_n(c)-G_\infty(c)|
    \longrightarrow0
  \]
  in the deterministic case, and convergence to zero in probability in the random case.  Combining this with \eqref{eq:regular-linear-independence}, we obtain, in the deterministic case,
  \[
    \inf_{c\in\mathbb S_T}G_n(c)
    \ge
    \frac12
  \]
  for all sufficiently large $n$.  Hence no nonzero coefficient vector can satisfy
  \[
    \sum_{g\in T}c_g\pi_n(g)=0,
  \]
  so $(\pi_n(g))_{g\in T}$ is linearly independent. In the random case, the same argument gives
  \[
    \mathbb P\!\left[
      \inf_{c\in\mathbb S_T}G_n(c)
      \ge\frac12
    \right]
    \longrightarrow1,
  \]
  and therefore
  \[
    \mathbb P\!\left[
      (\pi_n(g))_{g\in T}
      \text{ is linearly independent}
    \right]
    \longrightarrow1.
  \]
\end{itemize}
\end{proof}

\subsection{Transfer to the Haar-orthogonal model}
\label{subsec:random-transfer}

We now apply the preceding lemma to the Haar-orthogonal tensor
representations introduced in \eqref{eq:random-representation}.  The
radius convergence gives the one-copy estimate required by
\Cref{thm:finite-criterion}, while finite-set linear independence shows
that the group-theoretic Bell quotient has no additional
representation-dependent rank collapse.

\begin{proposition}[Random transfer of the all-order gap and exact Bell rank]
\label{prop:random-transfer}
Fix $r,d\ge2$, let
\[
  \pi_N:\sG_{r,d}\to O(m_N)
\]
be the Haar-orthogonal tensor representation in
\eqref{eq:random-representation}, and let
\[
  \Phi_N:=\Phi_{\pi_N}.
\]
Let $\Gamma_{r,d}$ denote the quadratic radius of the left regular representation associated with the branch set $\cS$ in \eqref{eq:regular-radius}, and let
$C>\Gamma_{r,d}$. Then with the notation in \eqref{eq:finite-quadratic-radius}
\begin{equation}\label{eq:random-radius-event}
  \mathbb P\!\left[
    \Gamma_K\bigl((\pi_N(s))_{s\in\cS}\bigr)\le C
  \right]
  \longrightarrow1.
\end{equation}
Moreover,
\begin{equation}\label{eq:random-exact-rank}
  \mathbb P\!\left[
    \rank Z_{\Phi_N}=M_{r,K}
  \right]
  \longrightarrow1.
\end{equation}
Consequently,
\begin{equation}\label{eq:random-common-gap}
  \mathbb P\!\left[
    \inf_{p\in[0,\infty]}
    \Delta_p^\Omega(\Phi_N)
    \ge
    \delta_{r,K,C}
    \ \text{and}\
    \rank Z_{\Phi_N}=M_{r,K}
  \right]
  \longrightarrow1.
\end{equation}
\end{proposition}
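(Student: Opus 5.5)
The plan is to combine three ingredients already in place: the Bordenave--Collins strong convergence in probability (\Cref{lem:bordenave-collins}), the two transfer statements of \Cref{lem:uniform-transfer}, and the deterministic criterion \Cref{thm:finite-criterion}. Throughout, $G=\sG_{r,d}$ and the branch elements $s_1,\ldots,s_K$ are an enumeration of $\cS$. With this choice, the finite radius $\Gamma_n$ of \eqref{eq:transfer-gamma-n} at $\pi_N$ is exactly $\Gamma_K((\pi_N(s))_{s\in\cS})$, because $W_s^*W_t=\pi_N(s)^{*}\pi_N(t)=\pi_N(s^{-1}t)$ for an orthogonal representation. Likewise, $\Gamma_\infty$ of \eqref{eq:transfer-gamma-infinity} is exactly $\Gamma_{r,d}$ of \eqref{eq:regular-radius}. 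The orthogonal matrices $\pi_N(g)$ are in particular unitary, so \Cref{lem:uniform-transfer} applies to them directly.

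For \eqref{eq:random-radius-event}, apply part~(i) of \Cref{lem:uniform-transfer} in its random form, using \Cref{lem:bordenave-collins} as the hypothesis. This gives $\Gamma_K((\pi_N(s))_s)\to\Gamma_{r,d}$ in probability. Setting $\eta:=C-\Gamma_{r,d}>0$, the event $\{\Gamma_K\le C\}$ contains $\{|\Gamma_K-\Gamma_{r,d}|\le\eta\}$, whose probability tends to one.

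For \eqref{eq:random-exact-rank}, take $T:=\cS^{-1}\cS$. By \Cref{lem:fibres}, $T$ is a fixed finite set of $M_{r,K}$ distinct group elements. Part~(ii) of \Cref{lem:uniform-transfer} shows that, with probability tending to one, the matrices $(\pi_N(g))_{g\in T}$ are linearly independent. The vectors $\eta_g=m^{-1/2}\vecop(\pi_N(g))$ are then linearly independent too, since $\vecop$ is a linear isomorphism. By \eqref{eq:Rpi-gram}, $R_{\pi_N}$ is the Gram matrix of these vectors, so it is nonsingular and $\rank R_{\pi_N}=|T|=M_{r,K}$. \Cref{prop:bell-rank-defect} (equivalently \Cref{cor:bell-profile}) gives $\rank Z_{\Phi_N}=\rank R_{\pi_N}$, which yields the exact rank. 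The one point needing care is that the Gram matrix of $\eta_g$ in \eqref{eq:Rpi-gram} is indexed exactly by $\mathcal T=T$ with no further identifications; this holds because the labels in $\mathcal T$ are distinct group elements.

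Finally, \eqref{eq:random-common-gap} follows from the two events above. On the event $\{\Gamma_K\le C\}$, \Cref{thm:finite-criterion} applies deterministically to the realized representation $\pi_N$ and gives $\inf_p\Delta_p^\Omega(\Phi_N)\ge\delta_{r,K,C}$. Intersecting with the exact-rank event, the union bound on the complements shows the joint probability tends to one. No step here is a real obstacle, since the analytic content sits in the earlier lemmas. The only subtlety is bookkeeping: \Cref{thm:finite-criterion} uses the finite radius defined with $W_i^*W_j$, while \Cref{lem:uniform-transfer} is phrased with $\pi_n(s_i^{-1}s_j)$, and these must be identified as above.
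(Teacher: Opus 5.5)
Your proposal is correct and follows essentially the same route as the paper's proof. You apply Bordenave--Collins together with \Cref{lem:uniform-transfer}(i) to get the radius event and invoke \Cref{thm:finite-criterion} on it, use part~(ii) on $\cS^{-1}\cS$ for a nonsingular Gram matrix $R_{\pi_N}$ and hence exact rank via \Cref{prop:bell-rank-defect}, and finish with a union bound. Your explicit identification of $W_s^*W_t$ with $\pi_N(s^{-1}t)$ is a small bookkeeping step that the paper leaves implicit.
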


\begin{proof}
By the Bordenave--Collins strong-convergence result recorded in \Cref{lem:bordenave-collins}, the random representations $\pi_N$ strongly converge in probability to the left regular representation $\lambda_{r,d}$. Apply \Cref{lem:uniform-transfer}(i) to the fixed branch set $\cS$.  This gives
\[
  \Gamma_K\bigl((\pi_N(s))_{s\in\cS}\bigr)
  \xrightarrow{\mathbb P}
  \Gamma_{r,d}.
\]
Since $C>\Gamma_{r,d}$,
\[
  \mathbb P\!\left[
    \Gamma_K\bigl((\pi_N(s))_{s\in\cS}\bigr)\le C
  \right]
  \longrightarrow1,
\]
which proves \eqref{eq:random-radius-event}.  On this event,
\Cref{thm:finite-criterion} yields
\begin{equation}\label{eq:random-gap-on-radius-event}
  \inf_{p\in[0,\infty]}
  \Delta_p^\Omega(\Phi_N)
  \ge
  \delta_{r,K,C}.
\end{equation}
We next identify the Bell rank.  Let
\[
  \mathcal T:=\cS^{-1}\cS.
\]
By \Cref{lem:fibres},
\[
  |\mathcal T|=M_{r,K}.
\]
Since $\mathcal T$ is a fixed finite set of distinct group elements,
\Cref{lem:uniform-transfer}(ii) gives
\[
  \mathbb P\!\left[
    (\pi_N(g))_{g\in\mathcal T}
    \text{ is linearly independent}
  \right]
  \longrightarrow1.
\]
On this event, the vectors
\[
  \eta_g
  =
  \frac1{\sqrt{m_N}}\vecop(\pi_N(g)),
  \qquad g\in\mathcal T,
\]
are linearly independent as well, because vectorization is a linear
isomorphism.  Hence their Gram matrix $R_{\pi_N}$ is positive
definite, and therefore
\[
  \rank R_{\pi_N}
  =
  |\mathcal T|
  =
  M_{r,K}.
\]
The exact Bell quotient in
\Cref{prop:bell-rank-defect} then gives
\[
  \rank Z_{\Phi_N}
  =
  \rank R_{\pi_N}
  =
  M_{r,K},
\]
which proves \eqref{eq:random-exact-rank}.

Finally, let
\[
  \mathcal E_N^{\mathrm{rad}}
  :=
  \left\{
    \Gamma_K\bigl((\pi_N(s))_{s\in\cS}\bigr)\le C
  \right\}
\]
and
\[
  \mathcal E_N^{\mathrm{rank}}
  :=
  \left\{
    \rank Z_{\Phi_N}=M_{r,K}
  \right\}.
\]
The preceding arguments show
\[
  \mathbb P(\mathcal E_N^{\mathrm{rad}})\to1,
  \qquad
  \mathbb P(\mathcal E_N^{\mathrm{rank}})\to1.
\]
Therefore
\[
  \mathbb P\!\left(
    \mathcal E_N^{\mathrm{rad}}
    \cap
    \mathcal E_N^{\mathrm{rank}}
  \right)
  \ge
  1-
  \mathbb P\!\left(
    (\mathcal E_N^{\mathrm{rad}})^c
  \right)
  -
  \mathbb P\!\left(
    (\mathcal E_N^{\mathrm{rank}})^c
  \right)
  \longrightarrow1.
\]
On this intersection,
\eqref{eq:random-gap-on-radius-event} and
\eqref{eq:random-exact-rank} hold simultaneously, proving
\eqref{eq:random-common-gap}.
\end{proof}

\section{Proof of the main results}\label{sec:finite-realizations}

\subsection{Proof of the simultaneous almost-one-bit theorem}

The proof of \Cref{thm:main} proceeds in two stages.  We first choose $r$ sufficiently large, and then, with this $r$ fixed, choose $d$ sufficiently large, so that $K=2rd$.  We then verify that these choices make all three terms in the certificate $\delta_{r,K,C}$ simultaneously at least $\log 2-\varepsilon$.

\begin{proof}[Proof of \Cref{thm:main}]
Let $\tau:=\min\{\varepsilon,1/2\}$ and choose
\begin{equation}\label{eq:one-bit-parameters}
  \begin{aligned}
    r&:=\max\{7,\lceil2/\tau\rceil\},
       &C&:=4r>C_r = \sqrt{16r^2 - 7r},\\
    T_\tau&:=\max\{4C^2/\tau,\ 2(1+C)^3\},\\
    d&:=\left\lceil\frac{T_\tau}{2r}\right\rceil,
       &K&:=2rd.
  \end{aligned}
\end{equation}
Then $r\ge7$, $d\ge2$, $K\ge T_\tau$,
$1/r\le\tau/2$, and $C^2/K\le\tau/4$.
Since $M_{r,K}/K^2\le(r+1)/(2r)$,
\begin{align*}
  D_{r,K,C}
  &\ge\log2-\log(1+1/r)-2\log(1+C^2/K)\\
  &\ge\log2-\tau.
\end{align*}
For the constants in \eqref{eq:beta-alpha}, $r\ge7$ gives
$\beta_r\ge12/7$ and
\[
  e^{4\alpha_r}=\frac{256\beta_r}{27}
       \ge\frac{1024}{63}>16.
\]
Consequently,
\[
  \alpha_r-2\log a\ge\log2-\tau/2.
\]
The second term defining $T_\tau$ implies
\[
  \frac34\log K+\frac14\log2-2\log b
  \ge\log2+\frac14\log b\ge\log2.
\]
Thus all three entries in \eqref{eq:all-order-certificate} are at
least $\log2-\tau\ge\log2-\varepsilon$.
With these parameters fixed, \Cref{prop:random-transfer} proves the
required common lower bound and the exact-rank event.

On that event,
\begin{equation}\label{eq:rank-ceiling}
  \begin{aligned}
    \inf_{p\in[0,\infty]}\Delta_p^\Omega(\Phi_N)
    &\le\Delta_0^\Omega(\Phi_N)\\
    &=2H_{\min,0}(\Phi_N)-\log M_{r,K}\\
    &\le\log\frac{K^2}{M_{r,K}}.
  \end{aligned}
\end{equation}
Moreover, $K\ge4r$ implies
\[
  \frac{M_{r,K}}{K^2}
  =\frac12+\frac1{2r}-\frac1K+\frac1{K^2}>\frac12,
\]
so the last logarithm is strictly less than $\log2$. This gives both sides of \eqref{eq:main-probability} on one event of probability tending to one. As $\varepsilon\downarrow0$, the choices satisfy
$r=O(\varepsilon^{-1})$, $C=O(\varepsilon^{-1})$, and
\[
  K<T_\tau+2r=O(\varepsilon^{-3}),
\]
which completes the proof.
\end{proof}

\subsection{The finite-output example}

\begin{proof}[Proof of \Cref{cor:numerical}]
Take
\[
  r=2,\qquad d=80,\qquad K=320,
\]
and set
\[
  C_{\mathrm{num}}
  :=
  \sqrt{50+\frac1{100}}
  >
  C_2
  =
  \sqrt{50}.
\]
For this choice,
\[
  M_{2,320}=76481,
\]
and, with the notation of \Cref{thm:finite-criterion},
\[
  a_{\mathrm{num}}
  :=
  1+\frac{C_{\mathrm{num}}^2}{K}
  =
  \frac{37001}{32000},
  \qquad
  b_{\mathrm{num}}
  :=
  1+C_{\mathrm{num}}.
\]
Hence
\[
  D_{2,320,C_{\mathrm{num}}}
  =
  \gamma_{320}.
\]
We first verify the strict positivity of $\gamma_{320}$ without relying on rounded logarithms.  Write
\[
  R
  :=
  \frac{320^4}
       {76481\,(370+1/100)^2},
  \qquad
  x:=R-1.
\]
Then
\[
  \gamma_{320}=\log R=\log(1+x).
\]
Exact integer comparison gives
\[
  1<R<\frac98,
  \qquad
  \frac{x}{1+x}>\frac7{5000}.
\]
Since
\[
  \log(1+x)\ge\frac{x}{1+x}
  \qquad(x\ge0),
\]
we obtain
\[
  \gamma_{320}
  >
  \frac7{5000}
  =
  1.4\times10^{-3}.
\]
It remains to verify that the other two entries of
\eqref{eq:all-order-certificate} are larger than
$D_{2,320,C_{\mathrm{num}}}$.  Here
\[
  \beta_2=1,
  \qquad
  a_{\mathrm{num}}<\frac76,
  \qquad
  b_{\mathrm{num}}<\frac{81}{10}.
\]
Since
\[
  e^{4\alpha_2}
  =
  \frac{256}{27}
  >
  \left(\frac74\right)^4,
\]
we have
\[
  \frac{e^{\alpha_2}}{a_{\mathrm{num}}^2}
  >
  \frac{7/4}{(7/6)^2}
  =
  \frac97
  >
  \frac98.
\]
Equivalently,
\[
  \alpha_2-2\log a_{\mathrm{num}}
  >
  \log\frac98.
\]
For the third entry, the inequality
\[
  2\cdot320^3>80^4
\]
implies
\[
  320^{3/4}2^{1/4}>80.
\]
Therefore
\[
  \frac{320^{3/4}2^{1/4}}
       {b_{\mathrm{num}}^2}
  >
  \frac{80}{(81/10)^2}
  =
  \frac{8000}{6561}
  >
  \frac98,
\]
and hence
\[
  \frac34\log320+\frac14\log2
  -2\log b_{\mathrm{num}}
  >
  \log\frac98.
\]

Since $R<9/8$,
\[
  \gamma_{320}
  =
  \log R
  <
  \log\frac98.
\]
Thus both higher-order entries of
$\delta_{2,320,C_{\mathrm{num}}}$ strictly exceed
$\gamma_{320}$, and therefore
\[
  \delta_{2,320,C_{\mathrm{num}}}
  =
  \gamma_{320}.
\]

Finally, since
\[
  C_{\mathrm{num}}
  >
  C_2
  \ge
  \Gamma_{2,80},
\]
\Cref{prop:random-transfer} applies with $C=C_{\mathrm{num}}$, completing the proof.
\end{proof}

\subsection{The non-random existence consequence}

\begin{proof}[Proof of \Cref{cor:nonrandom}]
Fix $r,d$ as in \eqref{eq:one-bit-parameters}.  The complete
$r$-partite graph $\Lambda_{r,d}$ is finite, and
\[
  A_{\Lambda_{r,d}}
  \cong
  \sG_{r,d}
\]
by \eqref{eq:raag-direct-product}.  Hence
\Cref{lem:magee-thomas} provides finite-dimensional unitary
representations
\[
  \rho_n:\sG_{r,d}\to\U(\ell_n)
\]
that strongly converge to the left regular representation.
By \Cref{lem:realification}, their realifications
\[
  \pi_n:=\rho_n^{\R}:
  \sG_{r,d}\to O(2\ell_n)
\]
retain this strong convergence.

We first transfer the quadratic-radius estimate.  By the deterministic
part of \Cref{lem:uniform-transfer},
\[
  \Gamma_K\bigl((\pi_n(s))_{s\in\cS}\bigr)
  \longrightarrow
  \Gamma_{r,d}.
\]
By \Cref{prop:free-radius},
\[
  \Gamma_{r,d}
  \le
  C_r
  <
  4r.
\]
Therefore, for all sufficiently large $n$,
\[
  \Gamma_K\bigl((\pi_n(s))_{s\in\cS}\bigr)
  \le
  C,
  \qquad
  C:=4r,
\]
where $C$ is exactly the parameter used in the proof of
\Cref{thm:main}.  The parameter estimates established there, together
with \Cref{thm:finite-criterion}, consequently give
\[
  \inf_{p\in[0,\infty]}
  \Delta_p^\Omega(\Phi_{\pi_n})
  \ge
  \log2-\varepsilon
\]
for all sufficiently large $n$. By \Cref{lem:uniform-transfer}(ii), applied to
\[
  \mathcal T:=\cS^{-1}\cS,
\]
the matrices $(\pi_n(g))_{g\in\mathcal T}$ are linearly independent for all sufficiently large $n$.  Hence $R_{\pi_n}$ is positive definite, and \Cref{prop:bell-rank-defect}, together with
$|\mathcal T|=M_{r,K}$ from \Cref{lem:fibres}, gives
\[
  \rank Z_{\Phi_{\pi_n}}
  =
  M_{r,K}.
\]
The $p=0$ argument in \eqref{eq:rank-ceiling} then yields the upper bound in \eqref{eq:nonrandom-gap}.

For the finite-output specialization $r=2$, $d=80$, and $K=320$, use $C_{\mathrm{num}} := \sqrt{50+\frac1{100}}$. The deterministic radius convergence and $C_{\mathrm{num}}>C_2\ge\Gamma_{2,80}$ show that the same finite criterion applies for all sufficiently large $n$.  Hence the calculation in the proof of \Cref{cor:numerical} gives
\[
  \inf_{p\in[0,\infty]}
  \Delta_p^\Omega(\Phi_{\pi_n})
  \ge
  \gamma_{320}
  >
  1.4\times10^{-3}.
\]
This completes the proof.
\end{proof}

\section{Discussion}\label{sec:discussion}

The mechanism developed above separates two effects of the
relative-operation map.  At one copy, bounded multiplicities of
nonidentity fibres keep the quadratic branch radius under control.
At two copies, the same group relations identify a positive proportion
of Bell branches and create a macroscopic exact kernel.  We close by
discussing three consequences of this separation: the origin and scope
of the one-bit scale, the output-dimension cost of the particular
purity--rank route used in the proof, and the relation with earlier
random-subspace and projection-induced constructions.

\subsection{The one-bit ceiling and the rank endpoint}

The Bell quotient exists for every finite real representation, although
its rank can in principle be smaller than the number of relative-operation
fibres because of additional linear dependencies.  Strong convergence
rules out such extra dependencies on the fixed set $\cS^{-1}\cS$.
By \Cref{prop:bell-rank-defect,prop:random-transfer}, for fixed $r,d$,
\[
  \Pr\!\left[\rank Z_{\Phi_N}=M_{r,K}\right]\longrightarrow1.
\]
Thus, with probability tending to one, the Bell kernel is exactly the
kernel forced by the group relations.  This is an exact
finite-dimensional rank statement.

This exact rank identifies the natural ceiling of the present common
Bell-witness mechanism.  On the above event,
\eqref{eq:rank-ceiling} gives
\[
  \inf_{p\in[0,\infty]}
  \Delta_p^\Omega(\Phi_N)
  \le
  \Delta_0^\Omega(\Phi_N)
  \le
  \log\frac{K^2}{M_{r,K}}.
\]
Since
\begin{equation}\label{eq:half-rank-limit}
  \frac{M_{r,K}}{K^2}
  =
  \frac{r+1}{2r}-\frac1K+\frac1{K^2}
  \longrightarrow
  \frac{r+1}{2r}
  \qquad
  (d\to\infty,\ r\ \text{fixed}),
\end{equation}
and the latter quantity tends to $1/2$ as $r\to\infty$, the corresponding
rank-entropy ceiling tends to $\log2$.  Combined with \Cref{thm:main},
which approaches the same value from below, this shows that $\log2$ is
the asymptotically sharp common gap scale for the present branch
geometry and the specified Bell input. 

The rank endpoint has a qualitatively different stability property from
the positive R\'enyi orders.  For orders bounded away from zero, the
R\'enyi entropy depends continuously on the output spectrum, so a
strict common gap with positive margin survives sufficiently small
perturbations of the channel.  At $p=0$, however,
$H_0(\rho)=\log\rank\rho$ depends on the presence of an exact kernel. For example, for any $0<\eta<1$, consider the channel obtained by adding a depolarizing component,
\[
  \Phi_\eta=(1-\eta)\Phi+\eta\mathcal R_K,
  \qquad
  \mathcal R_K(X)=(\Tr X)\frac{I_K}{K}.
\]
For every input state $\rho$ and every bipartite input state $\rho_{12}$,
\[
  \Phi_\eta(\rho)\ge \frac{\eta}{K}I_K,
  \qquad
  \Phi_\eta^{\otimes2}(\rho_{12})
  \ge \frac{\eta^2}{K^2}I_{K^2}.
\]
Hence every one-copy output has full rank $K$ and every two-copy output has full rank $K^2$. Consequently,
\[
  H_{\min,0}(\Phi_\eta)=\log K,
  \qquad
  H_{\min,0}(\Phi_\eta^{\otimes2})=2\log K,
\]
so the order-zero additivity defect vanishes exactly for every $\eta>0$ in this range. Thus even an arbitrarily small depolarizing perturbation completely removes the exact Bell kernel and destroys the rank-based $p=0$ violation.  In this sense, the order-zero endpoint is not perturbatively stable.  Consequently, extending the common violation all the way to $p=0$ requires more than an approximate spectral anomaly: it relies essentially on the exact finite-dimensional kernel enforced by the group relations.

\subsection{Dimension cost of the present purity--rank route}
\label{sec:certificate-scaling}

The choice in \Cref{thm:main} gives
\[
  K=O(\varepsilon^{-3})
\]
for a common gap at least $\log2-\varepsilon$.  It is useful to separate
this quantitative exponent from the underlying one-bit ceiling.  The
cubic scale is forced not by the Bell rank alone, but by the particular
way in which the present proof combines one-copy and two-copy
information.

More explicitly, the low-order part of the argument follows the chain
\[
  \text{quadratic radius}
  \ \Longrightarrow\
  \text{one-copy purity}
  \ \Longrightarrow\
  H_2
  \ \Longrightarrow\
  H_p\quad(0\le p\le2)
\]
at one copy, together with
\[
  \text{relative-operation fibres}
  \ \Longrightarrow\
  \text{Bell rank}
  \ \Longrightarrow\
  H_p\le\log\rank
\]
at two copies.  If the transferred quadratic radius is bounded by
$C\ge\Gamma_{r,d}$, then the exact Gram-radius identity gives
\[
  \Tr\Phi(X)^2
  \le
  \frac1K\left(1+\frac{C^2}{K}\right),
\]
and therefore
\[
  H_{\min,p}(\Phi)
  \ge
  \log K-\log\left(1+\frac{C^2}{K}\right),
  \qquad 0\le p\le2.
\]
On the other hand, the Bell fibre count gives
\[
  H_p(Z_\Phi)\le\log M_{r,K},
  \qquad p\ge0.
\]
Combining these two relaxations yields exactly the certificate
\begin{equation}\label{eq:discussion-low-certificate}
  D_{r,K,C}
  =
  \log\frac{K^2}
  {M_{r,K}(1+C^2/K)^2},
\end{equation}
already introduced in \eqref{eq:low-certificate}.  The estimates for
$p>2$ used in the full all-order criterion are additional; the cubic
dimension cost is already forced by the interval $0\le p\le2$.

\begin{proposition}[Cubic scaling within the present route]
\label{prop:certificate-scaling}
Let $r,d\ge2$, $K=2rd$, and $0<\varepsilon<\log2$.  Suppose that
\[
  C\ge\Gamma_{r,d},
  \qquad
  D_{r,K,C}\ge\log2-\varepsilon.
\]
Then
\begin{equation}\label{eq:certificate-scaling-lower}
  r\ge\frac{1}{2(e^\varepsilon-1)},
  \qquad
  K\ge\frac{9r(r-1)}{e^{\varepsilon/2}-1}.
\end{equation}
Consequently,
\[
  K=\Omega(\varepsilon^{-3}).
\]
\end{proposition}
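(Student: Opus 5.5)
The plan is to unpack the inequality $D_{r,K,C}\ge\log2-\varepsilon$ from \eqref{eq:low-certificate} into two separate constraints. The first is a lower bound on the rank ratio $M_{r,K}/K^2$, which will force $r$ to be large. The second is a lower bound on the purity factor $1+C^2/K$, which will force $K$ to be large compared with $C^2$, and hence with $r^2$. Exponentiating, the hypothesis reads
\[
  \frac{2M_{r,K}}{K^2}\Bigl(1+\frac{C^2}{K}\Bigr)^2\le e^{\varepsilon}.
\]
Since both factors on the left turn out to be at least one, each of them is separately at most $e^{\varepsilon}$.

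For the rank factor I will use the exact formula from \eqref{eq:half-rank-limit}:
\[
  \frac{2M_{r,K}}{K^2}=1+\frac1r-\frac2K+\frac2{K^2}.
\]
Because $K=2rd\ge4r$, we have $2/K\le1/(2r)$, and therefore $2M_{r,K}/K^2\ge1+1/(2r)$. Combining this with $(1+C^2/K)^2\ge1$ gives $1+1/(2r)\le e^{\varepsilon}$. This is exactly $r\ge1/(2(e^{\varepsilon}-1))$.

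For the purity factor, the same computation shows that $2M_{r,K}/K^2>1$, so $(1+C^2/K)^2<e^{\varepsilon}$, that is, $C^2/K<e^{\varepsilon/2}-1$. The key input is then a lower bound on $C$ in terms of $r$. Here I invoke the lower half of \Cref{prop:radius-sharpness} and drop the same-factor term. Since $k=2d\ge4$ gives $(1-1/k)^2\ge9/16$, this yields
\[
  C^2\ge\Gamma_{r,d}^2\ge16r(r-1)\Bigl(1-\frac1k\Bigr)^2\ge9r(r-1).
\]
Substituting into $C^2/K<e^{\varepsilon/2}-1$ gives $K>9r(r-1)/(e^{\varepsilon/2}-1)$, which is the second bound in \eqref{eq:certificate-scaling-lower}.

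For the asymptotic statement, use $e^{\varepsilon}-1\le2\varepsilon$ and $e^{\varepsilon/2}-1\le\varepsilon$ for small $\varepsilon$. This gives $r\ge1/(4\varepsilon)$. For $\varepsilon$ small this also forces $r-1\ge r/2$, so $K\ge9r^2/(2\varepsilon)=\Omega(\varepsilon^{-3})$.

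The only step that needs genuine input is the lower bound on $C$. The hypothesis $C\ge\Gamma_{r,d}$ is useful only because the regular radius itself grows quadratically in $r$, which is what the saturating test vectors $\xi_n^{\otimes r}$ in \Cref{prop:radius-sharpness} provide. The constant $9$ comes from using $k\ge4$ crudely in $(1-1/k)^2$; I would not try to optimise it. Everything else is elementary algebra with the closed form of $M_{r,K}$.
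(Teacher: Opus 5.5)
Your proof is correct and follows essentially the paper's argument: the closed form $2M_{r,K}/K^2=1+\frac1r-\frac2K+\frac2{K^2}\ge1+\frac1{2r}$ (using $K\ge4r$), the bound $C^2\ge\Gamma_{r,d}^2\ge9r(r-1)$ from the lower half of the sharpness theorem with $k\ge4$, and splitting the certificate into two separately controlled pieces. The only difference is presentation: you work with the product of two factors each at least one, while the paper writes $\log2-D_{r,K,C}$ as a sum of two nonnegative logarithms.
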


\begin{proof}
Since $k=2d\ge4$, \eqref{eq:radius-sharpness-bounds} implies
\[
  C^2\ge\Gamma_{r,d}^2\ge9r(r-1),
  \qquad
  K\ge4r.
\]
Using the explicit form of $M_{r,K}$,
\begin{align*}
  \log2-D_{r,K,C}
  &=
  \log\!\left(
     1+\frac1r-\frac2K+\frac2{K^2}
  \right)
  +2\log\!\left(1+\frac{C^2}{K}\right)
  \\
  &\ge
  \log\!\left(1+\frac1{2r}\right)
  +2\log\!\left(1+\frac{C^2}{K}\right).
\end{align*}
Both terms are nonnegative.  Requiring their sum to be at most
$\varepsilon$ gives \eqref{eq:certificate-scaling-lower}.  Hence
$r=\Omega(\varepsilon^{-1})$, while the purity term then forces
\[
  K=\Omega\!\left(\frac{r^2}{\varepsilon}\right)
   =\Omega(\varepsilon^{-3}),
\]
which completes the proof.
\end{proof}

Accordingly, improving the output-dimension dependence cannot be achieved merely by refining constants inside the same two relaxations. One would need either finer one-copy information---for example a direct description or optimization of the output spectrum beyond its purity---or a different two-copy branch geometry.  The value $K=\Theta(\varepsilon^{-3})$ is therefore a limitation of the present proof architecture, not a lower bound for arbitrary channels or even for every possible analysis of the present channel family.

\subsection{Relation to random-subspace and projection-induced constructions}

The almost-one-bit scale at the von Neumann point was already established
by Belinschi, Collins, and Nechita for Haar-random Stinespring channels.
Their analysis identifies the limiting one-copy output body and optimizes
the minimum output entropy over it, while a maximally entangled input to
the product of a channel and its conjugate provides the two-copy witness.
For fixed output dimension $k$ and asymptotic input ratio $t\in(0,1)$,
their Bell-witness defect satisfies
\begin{equation}\label{eq:bcn-comparison}
  \mathcal V_{\mathrm{BCN}}(k,t)
  =
  -t\log t-(1-t)\log(1-t)+o(1),
  \qquad k\to\infty,
\end{equation}
so that at $t=1/2$ the violation approaches $\log2$
\cite[Theorem~6.3]{BelinschiCollinsNechita2016}.
Thus the almost-one-bit phenomenon at $p=1$ is not new to the present
work.  What is new here is that the same asymptotic scale is realized
by one finite-dimensional real channel and the same Bell input
simultaneously over the entire R\'enyi range.

The distinction becomes particularly clear at the rank endpoint.
For fixed $k$ and $t$, the limiting Bell spectrum in the random-subspace
model is \cite[Theorem~6.2]{BelinschiCollinsNechita2016}
\begin{equation}\label{eq:bcn-bell-spectrum}
  \left(
    t+\frac{1-t}{k^2},
    \left(\frac{1-t}{k^2}\right)^{\!\times(k^2-1)}
  \right).
\end{equation}
All entries are strictly positive, so this Bell mechanism does not
produce the exact rank defect needed at $p=0$.  In the present model,
by contrast, the group relations force an exact finite-dimensional
Bell kernel of positive asymptotic density.  This is the structural
ingredient that allows the almost-one-bit comparison to persist all
the way to the rank entropy.

At the level of channel classes, the present construction fits naturally
inside the projection-induced framework of Leung, Lovitz, and Wu
\cite{LeungLovitzWu2026}.  We use the unnormalized Choi convention:
for a linear map
\[
  \Theta:M_n(\C)\longrightarrow M_k(\C),
\]
its Choi matrix is
\[
  J_\Theta
  :=
  \sum_{a,b=1}^n
  \ket a\!\bra b
  \otimes
  \Theta(\ket a\!\bra b)
  =
  n\,(\id_{M_n}\otimes\Theta)
  \bigl(\proj{\Omega_n}\bigr),
\]
where
\[
  \ket{\Omega_n}
  :=
  \frac1{\sqrt n}\sum_{a=1}^n\ket a\otimes\ket a.
\]
With this convention,
\[
  \Theta(X)
  =
  \Tr_{\C^n}
  \!\left[
    J_\Theta\,(X^T\otimes I_k)
  \right],
  \qquad
  \Theta\ \text{is trace preserving}
  \iff
  \Tr_{\C^k}J_\Theta=I_n.
\]

Recall that a projection
\[
  P\in\mathcal B(\C^n\otimes\C^k),
  \qquad
  P_A:=\Tr_{\C^k}P>0,
\]
induces a trace-preserving Choi matrix through the inverse-marginal
normalization
\[
  J_P
  =
  (P_A^{-1/2}\otimes I_k)\,
  P\,
  (P_A^{-1/2}\otimes I_k).
\]
For the complementary channel \eqref{eq:complementary-channel}, define
\[
  T_W:\C^m\longrightarrow\C^m\otimes\C^{\cS},
  \qquad
  T_Wx
  :=
  \frac1{\sqrt K}
  \sum_{s\in\cS}W_s^Tx\otimes\ket s.
\]
A direct calculation gives
\begin{equation}\label{eq:choi-projection}
  J_\Phi
  =
  \frac1K
  \sum_{s,t\in\cS}
  (W_t^*W_s)^T\otimes\ket s\!\bra t
  =
  T_WT_W^*.
\end{equation}
Since $T_W^*T_W=I_m$, the operator $J_\Phi$ is itself an orthogonal
projection, and
\[
  \Tr_{\C^K}J_\Phi=I_m.
\]
Hence the inverse-marginal normalization is trivial in our case:
the present channels are projection-induced channels whose Choi
projections are already normalized.

The random ensembles, however, are different.  In
\cite{LeungLovitzWu2026}, the starting projection is Haar distributed
on the corresponding Grassmannian.  Here randomness is placed instead
on the group generators themselves: the $rd$ matrices
$O_{j,i}^{(N)}$ in \eqref{eq:random-representation} are independent
Haar orthogonal matrices, and their product Haar measure is pushed
forward to the Choi projection $J_{\Phi_N}$.  The resulting projection
is therefore highly structured rather than Grassmannian Haar.  In
particular, for real group branches,
\begin{equation}\label{eq:choi-blocks}
  [J_{\Phi_\pi}]_{s,t}
  =
  \frac1K\pi(t^{-1}s)^T
  =
  \frac1K\pi(s^{-1}t).
\end{equation}
Thus the Choi blocks retain the exact relative-operation information
coming from the representation of $\sG_{r,d}$.  The same words
$s^{-1}t$ govern the one-copy quadratic expressions, while the
corresponding group relations generate the exact Bell-branch
collisions at two copies.

This structural difference is also reflected in the quantifiers. For each fixed R\'enyi order $p$ in the ranges they treat, Leung, Lovitz, and Wu construct finite-dimensional projection-induced counterexamples, using a transpose-complement rank-defect mechanism at low orders and a product--conjugate Bell-state witness at higher orders \cite[Theorem~1.1 and Sections~1.1--1.2]{LeungLovitzWu2026}. Their statement does not require the same finite channel and the same
input to work uniformly over all orders.  Our present result instead has the simultaneous form
\[
  \begin{gathered}
    \forall\,\varepsilon\in(0,\log2)\quad
    \exists\,m,K\in\N\quad
    \exists\,\Phi:M_m(\C)\longrightarrow M_K(\C)
    \\[1mm]
    \text{such that}\qquad
    \forall\,p\in[0,\infty],\quad
    \Delta_p^\Omega(\Phi)\ge\log2-\varepsilon.
  \end{gathered}
\]
In this sense, the present model may be viewed as a structured projection-induced construction in which exact group relations are retained at the Choi level and converted into a common, almost-one-bit
Bell-witness gap across the full R\'enyi scale.

\section{Statements and Declarations}
\label{sec:statements-declarations}

\subsection{Use of Artificial Intelligence}

Inspired by the recent work \cite{ZhenZhuChenWang2026}, the authors investigated whether Haagerup-type inequalities could provide a common mechanism for R\'enyi-entropy nonadditivity.  In this process, a large language model (LLM) directly produced the initial technical argument based on the direct product of two free groups and a tensorized Haagerup inequality, leading to simultaneous nonadditivity for $p\in[0,2]$.  After analyzing and digesting this argument, the authors generalized the construction to direct products of an arbitrary number of free groups and developed a more general Bell-quotient analysis.  These extensions led to the stronger result established in the present manuscript: a single finite-dimensional real channel and a single Bell input exhibiting an almost-one-bit nonadditivity gap simultaneously for all $p\in[0,\infty]$.  The LLM was also used to assist with the exposition and editorial refinement of the manuscript.  All mathematical arguments, interpretations, references, and final editorial decisions were independently checked and approved by the authors, who take full responsibility for the scientific content of the paper.

\subsection{Acknowledgements and Concurrent Developments}

We thank Prof.~Peixue Wu for valuable discussions and comments on this work. Through private communication with him, we learned that, shortly after the appearance of \cite{LeungLovitzWu2026}, an additional argument was communicated to its authors by another researcher that extends their analysis to the remaining interval $p\in[1/4,3/4]$. Since this development is not yet publicly available, we do not rely on it anywhere in the present paper. In view of this information, however, we do not claim priority for resolving the interval $[1/4,3/4]$. This work was supported by the National Key R\&D Program of China (Grant No.~2024YFE0102500), the National Natural Science Foundation of China (Grant. No.~92576114, 12447107), the Guangdong Provincial Quantum Science Strategic Initiative (Grant No.~GDZX2403008, GDZX2503001), and the Guangdong Provincial Key Lab of Integrated Communication, Sensing and Computation for Ubiquitous Internet of Things (Grant No.~2023B1212010007).

\begingroup
\small
\newcommand{\etalchar}[1]{$^{#1}$}

\endgroup

\end{document}